\newif\ifanon
\let\oldcitet\citet
\renewcommand{\citet}{\@ifstar{\oldcitet*}{\oldcitet*}}  
\Crefname{question}{Question}{Questions}
\newtheorem{framedthm}[theorem]{Theorem}
\Crefname{framedthm}{Theorem}{Theorems}
\newtheorem{framedcor}[theorem]{Corollary}
\Crefname{framedcor}{Corollary}{Corollaries}
\Crefname{framedlem}{Lemma}{Lemmas}
\DeclarePairedDelimiter\ceil{\lceil}{\rceil} %
\DeclarePairedDelimiter\floor{\lfloor}{\rfloor} %
\DeclareMathOperator{\regularlog}{log}
\renewcommand{\log}{\protect\@ifstar{\regularlog^*}{\regularlog}}
\renewcommand{\AA}{\mathcal{A}}
\newcommand{\BB}{\mathcal{B}}
\newcommand{\CC}{\mathcal{C}}
\newcommand{\FF}{\mathcal{F}}
\newcommand{\GG}{\mathcal{G}}
\newcommand{\HH}{\mathcal{H}}
\newcommand{\LL}{\mathcal{L}}
\newcommand{\NN}{\mathcal{N}}
\let\PP\relax  %
\newcommand{\PP}{\mathcal{P}}
\newcommand{\RR}{\mathcal{R}}
\renewcommand{\SS}{\mathcal{S}}
\newcommand{\TT}{\mathcal{T}}
\newcommand{\Vrand}{\mathsf{V}}
\newcommand{\Gbold}{\mathbf{G}}
\newcommand{\Hbold}{\mathbf{H}}
\newcommand{\Pbb}{\mathbb{P}}
\newcommand{\vvv}{\mathbf{v}}
\newcommand{\intg}{\mathbb{Z}}
\newcommand{\nat}{\mathbb{N}}
\DeclareMathOperator{\expect}{\mathbb{E}}
\newcommand{\natPos}{\nat_+}
\newcommand{\st}{\middle|} %
\DeclareMathOperator{\dist}{dist}
\mathchardef\mhyph="2D
\newcommand{\Mod}[1]{\ (\mathrm{mod}\ #1)}
  \newcommand{\nequiv}{\not\equiv}
\newcommand{\pr}[1]{\text{Pr}\left[#1\right]}
\newcommand{\myO}[1]{O\!\left(#1\right)}
\newcommand{\myTheta}[1]{\Theta\!\left(#1\right)}
\newcommand{\problem}{\Pi} %
\newcommand{\outcome}{\mathsf{O}} %
\newcommand{\inptData}{x} %
\newcommand{\id}{\text{id}} %
\newcommand{\lbl}{\lambda} %
\newcommand{\labels}{\Sigma} %
\newcommand{\inLabels}{\Sigma_{\text{in}}} %
\newcommand{\inLabel}{\lbl_{\text{in}}} %
\newcommand{\inLblRes}{\bar{\lbl}_{\text{in}}} %
\newcommand{\outLabels}{\Sigma_{\text{out}}} %
\newcommand{\outLabel}{\lbl_{\text{out}}} %
\newcommand{\indOutLabel}[1]{\lbl_{(\text{out},{#1})}} %
\newcommand{\outLblRes}{\bar{\lbl}_{\text{out}}} %
\newcommand{\outPr}{p} %
\newcommand{\lvlCR}{r} %
\newcommand{\algoA}{\AA} %
\newcommand{\algoB}{\BB} %
\newcommand{\algo}{\AA} %
\newcommand{\maxDeg}{\Delta} %
\newcommand{\maxInDeg}{\Delta_{\text{in}}} %
\newcommand{\maxOutDeg}{\Delta_{\text{out}}} %
\newcommand{\inDeg}{\text{indeg}} %
\newcommand{\outDeg}{\text{outdeg}} %
\newcommand{\fA}{\mathcal{A}}
\newcommand{\fF}{\mathcal{F}}
\newcommand{\poly}{\mathrm{poly}}
\definecolor{myblue}{HTML}{0088cc}
\definecolor{myorange}{HTML}{f26924}
\newcommand{\myaff}[1]{\,$\cdot$\, {\small #1}\par\medskip} 
\begin{document}
\newenvironment{myabstract}%
{\list{}{\listparindent 1.5em
        \itemindent    \listparindent
        \leftmargin    0cm
        \rightmargin   0cm
        \parsep        0pt}%
    \item\relax}%
{\endlist}

\newenvironment{mycover}%
{\list{}{\listparindent 0pt
        \itemindent    \listparindent
        \leftmargin    0cm
        \rightmargin   1.5cm
        \parsep        0pt}%
    \raggedright
    \item\relax}%
{\endlist}

\begin{mycover}
{\huge\bfseries\boldmath Online Locality Meets Distributed \\ Quantum Computing\par}
\bigskip
\bigskip

\ifanon
\textbf{Anonymous authors}

\else

\textbf{Amirreza Akbari}
\myaff{Aalto University, Finland}

\textbf{Xavier Coiteux-Roy}
\myaff{University of Calgary, Canada \,$\cdot$\, School of Computation, Information and Technology, Technical University of Munich, Germany \,$\cdot$\, MCQST, Germany}

\textbf{Francesco d'Amore}
\myaff{Gran Sasso Science Institute, Italy}

\textbf{Fran{\c c}ois Le Gall}
\myaff{Nagoya University, Japan}

\textbf{Henrik Lievonen}
\myaff{Aalto University, Finland}

\textbf{Darya Melnyk}
\myaff{Technische Universität Berlin, Germany}

\textbf{Augusto Modanese}
\myaff{Aalto University, Finland}

\textbf{Shreyas Pai}
\myaff{Indian Institute of Technology Madras, India}

\textbf{Marc-Olivier Renou}
\myaff{Inria, Université Paris-Saclay, Palaiseau, France \,$\cdot$\, CPHT, Ecole Polytechnique, Institut Polytechnique de Paris, Palaiseau, France \,$\cdot$\, LIX, Ecole Polytechnique, Institut Polytechnique de Paris, Palaiseau, France}

\textbf{Václav Rozhoň}
\myaff{Charles University, Czechia}

\textbf{Jukka Suomela}
\myaff{Aalto University, Finland}

\fi %

\bigskip
\end{mycover}

\begin{myabstract}
\noindent\textbf{Abstract.}
We connect three distinct lines of research that have recently explored extensions of the classical LOCAL model of distributed computing: 
\begin{enumerate}[noitemsep]
    \item[A.] distributed quantum computing and non-signaling distributions (e.g., [STOC 2024]),
    \item[B.] finitely-dependent processes, (e.g., [Forum Math.~Pi 2016]), and
    \item[C.] locality in online graph algorithms and dynamic graph algorithms
    (e.g., [ICALP 2023]).
\end{enumerate}
We prove new results on the capabilities and limitations of all of these models of computing, for \emph{locally checkable labeling problems} (LCLs). We show that all these settings can be sandwiched between the classical LOCAL model and what we call the \emph{randomized online-LOCAL model}. Our work implies limitations on the \emph{quantum advantage} in the distributed setting, and we also exhibit a new \emph{barrier} for proving tighter bounds. Our main technical results are these:
\begin{enumerate}
\item All LCL problems solvable with locality $O(\log* n)$ in the classical
deterministic LOCAL model admit a finitely-dependent distribution with locality
$O(1)$. This answers an open question by Holroyd
[Electron.~Commun.~Probab.~2024], and also presents a new barrier for proving
bounds on distributed quantum advantage using causality-based arguments.
\item In rooted trees, if we can solve an LCL problem with locality $o(\log \log \log n)$ in the randomized online-LOCAL model (or any of the weaker models, such as quantum-LOCAL), we can solve it with locality $O(\log* n)$ in the classical deterministic LOCAL model. One of many implications is that in rooted trees, $O(\log* n)$ locality in quantum-LOCAL is not stronger than $O(\log* n)$ locality in classical LOCAL.
\end{enumerate}
\end{myabstract}

\thispagestyle{empty}
\setcounter{page}{0}
\clearpage   %
\section{Introduction}
\label{sec:intro}

In this work, we connect three distinct lines of research that have recently explored extensions of the classical \local model of distributed computing:
\begin{enumerate}[noitemsep]
    \item[A.] Distributed quantum computing and non-signaling distributions \cite{gavoille2009,arfaoui2014,coiteuxroy2023}.
    \item[B.] Finitely-dependent processes \cite{holroyd2016,holroyd2018,holroyd2024}.
    \item[C.] Locality in online graph algorithms and dynamic graph algorithms \cite{akbari_et_al:LIPIcs.ICALP.2023.10,chang23_tight_arxiv}.
\end{enumerate}
We prove new results on the capabilities and limitations of all of these models of computing, for \emph{locally checkable labeling problems} (LCLs), with the help of a unifying model that we call \emph{randomized online-\local}. Our work implies limitations on the \emph{quantum advantage} in the distributed setting, and we also exhibit a new \emph{barrier} for proving tighter bounds.

\subsection{Highlights}

Our main technical results are the following two implications:
\begin{enumerate}
\item \textbf{All LCL problems} solvable with locality $O(\log* n)$ in the classical deterministic \local model admit a finitely-dependent distribution, i.e., a bounded-dependence distribution with constant locality:
\[
\begin{tikzcd}[row sep=1.0em]
    \text{classical \local algorithm, locality $O(\log^* n)$} \arrow[d] \\
    \text{bounded-dependence distribution, locality $O(1)$} \arrow[d] \\
    \text{non-signaling distribution, locality $O(1)$}
\end{tikzcd}
\]
This answers an open question by \citet{holroyd2024}. This also presents a \emph{new barrier} for proving bounds on distributed quantum advantage: all current \qlocal lower bounds are, in essence, lower bounds for non-signaling distributions, and our result shows that fundamentally different techniques will be needed to solve some of the biggest open questions in this area (e.g., showing that there is no constant-round \qlocal algorithm for coloring cycles).
\item \textbf{In rooted trees}, if we can solve an LCL problem with locality $o(\log \log \log n)$ in the randomized online-\local model (or any of the weaker models, such as quantum-\local), we can solve it with locality $O(\log* n)$ in the classical deterministic \local model:
\[
\begin{tikzcd}[row sep=1.0em]
    \text{quantum-LOCAL algorithm, locality $o(\log \log \log n)$} \arrow[d] \\
    \text{bounded-dependence distribution, locality $o(\log \log \log n)$} \arrow[d] \\
    \text{randomized online-LOCAL algorithm, locality $o(\log \log \log n)$} \arrow[d] \\
    \text{classical \local algorithm, locality $O(\log^* n)$}
\end{tikzcd}
\]
One of many implications is that in rooted trees, $O(\log* n)$ locality in quantum-\local is not stronger than $O(\log* n)$ locality in classical \local, and also finitely-dependent distributions are not stronger than $O(\log* n)$ locality in classical \local.
\end{enumerate}
We now proceed to explain what all of these terms and models mean and how they are connected with each other.

\subsection{Roadmap}

As our main goal is to unify and relate several distinct models studied in prior
work, we will need to introduce a fair number of models of computing. We
recommend the reader to keep the roadmap in \cref{fig:landscape} (final page) at
hand while reading the introduction in order to maintain a clear view as well as to consult this overview again when needed.

We start our adventure in \cref{ssec:classical} by introducing the classical
models that we have at the very top of \cref{fig:landscape} and then relate
these to the current landscape of LCLs in \cref{ssec:intro-lcls}. Next, we
gradually work our way through the quantum as well as bounded-dependence and
non-signaling models in \cref{ssec:intro-quantum}, after which we take our first
break. At this point, we are familiar with the top half of \cref{fig:landscape},
and we are ready to state our first main contributions related to symmetry
breaking with finitely-dependent processes in \cref{ssec:intro-contrib-1}.

In \cref{ssec:intro-online}, we then turn to models that at first may seem
completely unrelated. They deal with locality in sequential, dynamic, and online
settings. As we will see in \cref{ssec:intro-contrib-2}, however, we can
connect all of these models into a single hierarchy, with seemingly orthogonal
models sandwiched between deterministic \local and randomized \onlinelocal, and
we can prove various strong results that connect the complexity landscape
between these two extremes.

\subsection{Classical models}\label{ssec:classical}

Let us first recall the definitions of the classical models of distributed graph algorithms \cite{linial92,peleg00distributed} that form the foundation for our work; we keep these brief here and postpone formal details to \cref{sec:preliminaries}.
\begin{itemize}[noitemsep]
    \item \textbf{\boldmath Deterministic \local:} Our input graph $G = (V,E)$
    represents a computer network; each node $v \in V$ is a computer and each
    edge $\{u,v\} \in E$ is a communication link between two computers. Each
    node is labeled with a unique identifier from $\{1,2,\dotsc,\poly(|V|)\}$.
    All nodes follow the same distributed algorithm. Initially a node is only
    aware of its own identifier and its own degree. Computation proceeds in
    synchronous rounds, and in each round a node can send and receive a message
    to and from each neighbor and update its state.
    Eventually each node must stop and announce its local output (its part
    of the solution, e.g.\ in graph coloring its own color). The \emph{running
    time}, \emph{round complexity}, or \emph{locality} of the algorithm is the
    (worst-case) number of rounds $T(n)$ until the algorithm stops in any
    $n$-node graph.
    \item \textbf{\boldmath Randomized \local:} As above, but each node also has a private source of random bits.
\end{itemize}
We also define the following variants (see e.g.\ \cite{korman-2011-global-knowledge} for more on the impact of shared global information):
\begin{itemize}[noitemsep]
    \item \textbf{\boldmath Deterministic \local (shared):} Deterministic \local with shared global information. The set of nodes and their unique identifiers is globally known, and hence we also know $n = |V|$.
    \item \textbf{\boldmath Randomized \local (shared):} Randomized \local with shared global information and shared randomness. The set of nodes and their unique identifiers is globally known, and in addition to the private sources of random bits, there is also a shared source of random bits that all nodes can access.
\end{itemize}
We can interpret the shared versions of the models as follows: before any
computation, we get to \emph{see} the set of nodes $V$ and their unique
identifiers in advance, and we can also \emph{initialize} the nodes as we want
based on this information (and, hence, in the randomized model we can also
initialize all nodes with the same shared random string); the set of edges
$E$ is only revealed later. This interpretation will be useful especially with
our quantum models.

\subsection{Landscape of LCL problems}\label{ssec:intro-lcls}

There has been more than three decades of work on understanding the capabilities and limitations of the classical deterministic and randomized \local models, but for our purposes, the most interesting is the recent line of work that has studied distributed algorithms for \emph{locally checkable labeling problems}, or LCLs. This is a family of problems first introduced by \citet{naor1995}. LCL problems are graph problems that can be defined by specifying a \emph{finite set of valid neighborhoods}. Many natural problems belong to this family: coloring graphs of maximum degree $\maxDeg$ with $\maxDeg + 1$ colors, computing a maximal independent set, finding a maximal matching, etc.

Since 2016, we have seen a large body of work dedicated to understanding the
computational complexity of LCL problems in the deterministic and randomized
\local models
\cite{dahal_et_al:LIPIcs.DISC.2023.40,balliu18lcl-complexity,balliu20almost-global,doi:10.1137/17M1157957,dist_deran,balliu20lcl-randomness,fischer_ghaffari2017sublogarithmic,Rozhon2019,brandt16lll,chang16exponential,ghaffari17distributed,ghaffari19degree-splitting,balliu19lcl-decidability,balliu21mm,brandt17grid-lcl,balliu22rooted-trees,balliu22regular-trees},
and nowadays there are even algorithms and computer tools available for
exploring such questions
\cite{Olivetti2019,balliu22regular-trees,chang23automata-theoretic}. As a result
of this large international research effort, a landscape of the localities of
LCL problems emerges \cite{suomela-2020-landscape}. We can classify LCL problems
in discrete classes based on their locality, and we also understand how much
randomness helps in comparison with deterministic algorithms. Our main goal in this work is to extend this understanding of LCL problems far beyond the classical models, and especially explore what can be computed very fast in models that are much stronger than deterministic or randomized \local.

\subsection{\boldmath \Qlocal and finitely-dependent processes}\label{ssec:intro-quantum}

We start our exploration of stronger models with distributed quantum computation. The key question is understanding the \emph{distributed quantum advantage}: what can we solve faster if our nodes are quantum computers and our edges are quantum communication channels? There is a long line of prior work exploring this theme in different models of distributed computing \cite{legall2019,arfaoui2014,gavoille2009,elkin2014,legall2018,magniez2022,wu2022,wang2022,izumi2019,censorhillel2022,izumi2020,apeldoorn2022,coiteuxroy2023,fraigniaud2024,hasegawa2024}, but for our purposes these are the models of interest:
\begin{itemize}[noitemsep]
	\item \textbf{\boldmath \Qlocal:} This model of computing is similar to the deterministic \local model above, but now with quantum computers and quantum communication links. More precisely, the quantum computers manipulate local states consisting of an unbounded number of qubits with arbitrary unitary transformations, the communication links are quantum communication channels (adjacent nodes can exchange any number of qubits), and the local output can be the result of any quantum measurement.
    \item \textbf{\boldmath \Qlocal (shared):} \Qlocal with shared
    global information and a shared quantum state.
    As above, but now the algorithm may inspect and manipulate the set of nodes
    (before any edges are revealed). 
    In particular, it may initialize the nodes with a globally shared entangled
    state.
\end{itemize}
As quantum theory intrinsically involves randomness, \qlocal is at least as strong as randomized \local. There are some (artificial) problems that are known to be solvable much faster in \qlocal than deterministic or randomized \local \cite{legall2019}; however, whether any LCL admits such a quantum advantage is a major open question in the field.

Directly analyzing \qlocal is beyond the scope of current techniques. In essence, the only known technique for proving limitations of \qlocal 
is \emph{sandwiching} it between the classical randomized-\local model and more powerful models than \qlocal that do not explicitly refer to quantum information. These more powerful models are based on the \emph{physical causality principle} (a.k.a.\ \emph{non-signaling principle}). The idea is perhaps easiest to understand with the help of the following thought experiment:
\begin{example}
Fix a distributed algorithm $\algoA$ in the \qlocal model (with shared
global information and quantum state) that runs in $T$ rounds on graphs with $n$
nodes. Let $G = (V,E)$ be some $n$-node input graph. Apply $\algoA$ repeatedly
to $G$ to obtain some probability distribution $Y(G)$ of outputs. Now fix some
subset of nodes $U \subseteq V$, and consider the restriction of $Y(G)$ to $U$,
in notation $Y(G) {\restriction_U}$. Let $G[U,T]$ be the radius-$T$ neighborhood
of set $U$ in $G$. Now modify $G$ outside $G[U,T]$ to obtain a different
$n$-node graph $G'$ with $G[U,T] = G'[U,T]$. Apply $\algoA$ to $G'$ repeatedly,
and we obtain another probability distribution $Y(G')$ of outputs. If
$Y(G){\restriction_U} \ne Y(G'){\restriction_U}$, it would be possible to use
$\algoA$ to transmit information in $T$ time steps between two parties, Alice
and Bob, that are within distance $T+1$ from each other: Bob holds all nodes of
$U$, and he can, therefore, observe $Y(G){\restriction_U}$, while Alice controls
the graph outside $G[U,T] = G'[U,T]$, and she can, therefore, instantiate either
$G$ or $G'$. This would enable Alice to send a signal to Bob even if no physical
communication occurred from Alice to Bob (as they are at distance $T+1$ from
each other and only $T$ communication steps occurred), and thus violate the
non-signaling principle.
\end{example}

This thought experiment suggests the following definition, also known as the $\varphi$-\local model or the causal model \cite{gavoille2009,arfaoui2014}:
\begin{itemize}
    \item \textbf{\boldmath Non-signaling model:} We can produce an arbitrary output distribution as long as it does not violate the \emph{non-signaling principle}: for any set of nodes $U$, modifying the structure of the input graph at more than a distance $T(n)$ from $U$ does not affect the output distribution of $U$.
\end{itemize}
We also need to introduce the following definition to better connect our work with the study of finitely-dependent processes and in particular finitely-dependent colorings~\cite{holroyd2016,holroyd2018,holroyd2024}:
\begin{itemize}
    \item \textbf{\boldmath Bounded-dependence model:} We can produce an arbitrary output distribution as long as it does not violate the non-signaling principle and, furthermore, \emph{distant parts are independent}: if we fix any sets of nodes $U_1$ and $U_2$ such that their radius-$T(n)$ neighborhoods are disjoint, then the output labels of $U_1$ are independent of the output labels of $U_2$.
\end{itemize}
Setting $T(n) = O(1)$ in the bounded-dependence model, we recover in essence what is usually called \emph{finitely-dependent processes}.

Now we can connect all the models above as follows, sandwiching the two versions of \qlocal between other models (see \cref{app:dc-models} for details; the connection with the non-signaling model is known \cite{gavoille2009,arfaoui2014} but the connection with the bounded-dependence model is to our knowledge new):
\begin{equation}\label{eq:diagram-quantum}
\begin{tikzcd}[row sep=0.8em]
    \text{deterministic \local} \arrow[r]\arrow[d] & \text{deterministic \local (shared)} \arrow[d] \\
    \text{randomized \local} \arrow[r]\arrow[d] & \text{randomized \local (shared)} \arrow[d] \\
    \text{\qlocal} \arrow[r]\arrow[d] & \text{\qlocal (shared)} \arrow[d] \\
    \text{bounded-dependence} \arrow[r] & \text{non-signaling}
\end{tikzcd}
\end{equation}
Here an arrow $M_1 \rightarrow M_2$ indicates that an algorithm with locality (or round complexity) $T(n)$ in model $M_1$ implies an algorithm with the same locality in $M_2$. For some problems it is possible to prove near-tight bounds for \qlocal by using diagram \eqref{eq:diagram-quantum}. For example, a very recent work \cite{coiteuxroy2023} used these connections to prove limits for the distributed quantum advantage in approximate graph coloring: they prove an upper bound for the deterministic \local model and a near-matching lower bound for the non-signaling model.

\subsection{Contribution 1: symmetry breaking with finitely-dependent processes}\label{ssec:intro-contrib-1}

Now we are ready to state our first contribution. Recall the following gap result by \citet{chang_kopelowitz_pettie2019exp_separation}: all LCL problems that can be solved with locality $o(\log n)$ in deterministic \local or with locality $o(\log \log n)$ in randomized \local can also be solved with locality $O(\log* n)$ in deterministic \local. The class of problems with locality $\Theta(\log* n)$ contains in essence all \emph{symmetry-breaking problems}: these are problems that could be solved with constant locality if only we had some means of breaking symmetry (e.g.\ distance-$k$ coloring for some constant $k$ would suffice). In \cref{sec:fin-dep} we show the following result:
\begin{framedthm}{}\label{thm:intro-fin-dept}
  Let $\Pi$ be any LCL problem with locality $O(\log* n)$ in the
  deterministic \local model. Then $\Pi$ can also be solved with locality
  $O(1)$ in the bounded-dependence model.
  Furthermore, the resulting finitely-dependent processes are invariant under subgraph isomorphism.
\end{framedthm}
\noindent Put otherwise, there is a finitely-dependent distribution over valid solutions of
$\Pi$.
Here the invariance under subgraph isomorphisms implies that, for any two graphs \(G,H\) that share some isomorphic subgraphs \(G'\) and \(H'\) such that their radius-\(\myO{1}\) neighborhoods are still isomorphic, the finitely-dependent processes solving \(\problem\) restricted to \(G'\) and \(H'\) are equal.

For any constant $d$, the task of coloring $d$-regular trees with $d+1$ colors is a problem with locality $O(\log* n)$ in the deterministic \local model. Hence, we can \textbf{answer the open question} by 
\citet{holroyd2024}:
\begin{framedcor}\label{cor:intro:fin-dep-trees}
    For each $d \ge 2$, there is a finitely-dependent coloring with $d+1$ colors in $d$-regular trees. 
    Furthermore, the resulting process is invariant under subgraph isomorphisms.
\end{framedcor}
\noindent 
More specifically, there exists a finitely-dependent 4-coloring distribution of the infinite 3-regular tree that is invariant under automorphisms. 

\cref{thm:intro-fin-dept} also introduces a formal \textbf{barrier} for proving limitations on distributed quantum advantage. Recall that all current \qlocal lower bounds are, in essence, lower bounds in the non-signaling model. Before our work, there was a hope that we could discover a symmetry-breaking problem $\Pi$ with the following properties: (1)~its locality is $O(\log* n)$ in deterministic \local, and (2)~we can show that its locality is $\Omega(\log* n)$ in the non-signaling model, and therefore (3)~$\Pi$ cannot admit any distributed quantum advantage. However, our work shows that no such problem $\Pi$ can exist. In particular, \emph{arguments related to non-signaling distributions are not sufficient} to exclude distributed quantum advantage in this region.

\subsection{Locality in online and dynamic settings}\label{ssec:intro-online}

Let us now switch gears and consider a very different line of work. 
\citet{ghaffari2017} introduced a \emph{sequential} counterpart of the classical \local model:
\begin{itemize}[noitemsep]
    \item \textbf{\boldmath Deterministic \slocal model:} The nodes are processed in an adversarial order. When a node $v$ is processed, the algorithm gets to see all information in its radius-$T(n)$ neighborhood (including states and outputs of previously processed nodes). The algorithm has to label $v$ with its local output, and the algorithm can also record other information in $v$, which it can exploit when other nodes near $v$ are later processed.
    \item \textbf{\boldmath Randomized \slocal model:} As above, with access to a source of random bits.
\end{itemize}
Clearly \slocal is stronger than \local. One key feature is that the processing order naturally breaks symmetry, and all symmetry-breaking LCLs can be solved with $O(1)$ locality in \slocal. One interpretation of our first contribution is that we also establish a new, unexpected similarity between \slocal and the bounded-dependence model: \emph{both are able to solve any symmetry-breaking LCL with constant locality}.

A recent work \cite{akbari_et_al:LIPIcs.ICALP.2023.10} introduced the following models that capture the notion of locality also in the context of centralized dynamic graph algorithms and centralized online graph algorithms:
\begin{itemize}[noitemsep]
    \item \textbf{\boldmath Deterministic \dlocal model:} The adversary constructs the graph one edge at a time. The algorithm has a global view of the graph (including states and outputs of previously processed nodes), but it has to maintain a feasible solution after each update. The algorithm is restricted so that, after a modification at node $v$, it can only update the solution within distance $T(n)$ from $v$.
    \item \textbf{\boldmath Deterministic \onlinelocal model:} The adversary
    presents the input graph one node at a time. Upon arrival of a node $v$, the
    adversary also reveals the radius-$T(n)$ neighborhood of $v$. The algorithm
    then has to then choose the output label of $v$.
    Crucially, the algorithm has access to a global view of the graph revealed
    thus far (including states and outputs of previously processed nodes).
\end{itemize}
\Onlinelocal can also be seen as a stronger version of the deterministic \slocal model where all nodes have access to some \emph{global shared memory}.
Both \slocal and \dlocal can be sandwiched between \local and \onlinelocal \cite{akbari_et_al:LIPIcs.ICALP.2023.10}:
\begin{equation}\label{eq:diagram-online}
\begin{tikzcd}[row sep=1.0em]
    \text{deterministic \local} \arrow[r]\arrow[d] & \text{deterministic \slocal} \arrow[d] \\
    \text{deterministic \dlocal} \arrow[r] & \text{deterministic \onlinelocal}
\end{tikzcd}
\end{equation}
There are also some problems in which deterministic \onlinelocal is much stronger than deterministic \local: $3$-coloring in bipartite graphs has locality $\tilde{\Theta}(\sqrt{n})$ in deterministic \local \cite{brandt17grid-lcl,coiteuxroy2023} but $O(\log n)$ in \onlinelocal \cite{akbari_et_al:LIPIcs.ICALP.2023.10}; very recently \citet{chang23_tight_arxiv} also showed that this is tight for \onlinelocal.

\subsection{Contribution 2: connecting all models for LCLs}\label{ssec:intro-contrib-2}

At first sight, the models discussed in \cref{ssec:intro-quantum,ssec:intro-online} seem to have very little in common; they seem to be orthogonal extensions of the classical deterministic \local model. Furthermore, we have already seen evidence that \onlinelocal can be much stronger than deterministic \local. Nevertheless, we can connect all these models in a unified manner, and prove strong limits on their expressive power. To this end, we introduce yet another model:
\begin{itemize}
    \item \textbf{\boldmath Randomized \onlinelocal model:} Like deterministic
    \onlinelocal but the algorithm has access to a source of random bits, and we
    play against an oblivious adversary (i.e., the adversary fixes the graph and
    the order of presenting nodes before the algorithm starts to flip coins).
\end{itemize}
Trivially, this is at least as strong as all models in diagram \eqref{eq:diagram-online}. However, the big surprise is that it is also at least as strong as all models in diagram \eqref{eq:diagram-quantum}. In \cref{sec:non-signaling-to-rolocal} we prove:
\begin{framedthm}{}\label{thm:intro:ns-model-sim}
    Any labeling problem that can be solved in the non-signaling model with locality $T(n)$ can also be solved in the randomized \onlinelocal model with the same locality.
\end{framedthm}
\noindent
Then we zoom into the case of rooted trees in \cref{sec:olocal-slocal-simulation} and prove:
\begin{framedthm}{}\label{thm:intro-rolocal-simulation}
    Any LCL on rooted trees that can be solved in the randomized \onlinelocal model with locality $o(\log \log \log n)$ can also be solved in the deterministic \local model with locality $O(\log* n)$.
\end{framedthm}
\noindent Together with \cref{thm:intro-fin-dept} and results that were
previously known, we also obtain the following:
\begin{framedcor}{}
    In rooted trees, the following families of LCLs are the same:
    \begin{itemize}[noitemsep]
        \item locality $O(\log* n)$ in deterministic or randomized \local or
        \qlocal; and
        \item locality $O(1)$ in bounded-dependence model, non-signaling model, deterministic or randomized \slocal, \dlocal, or deterministic or randomized \onlinelocal.
    \end{itemize}
    Still in rooted trees, there is no LCL problem with locality between $\omega(\log* n)$ and $o(\log \log \log n)$ in any of these models: deterministic and randomized \local, \qlocal, bounded-dependence model, non-signaling model, deterministic and randomized \slocal, \dlocal, and deterministic and randomized \onlinelocal.
\end{framedcor}
\noindent In particular, when we look at LCLs in rooted trees, $O(\log* n)$-round quantum algorithms are not any stronger than $O(\log* n)$-round classical algorithms. (However, it is still possible that there are some LCLs in trees that can be solved in $O(1)$ rounds in \qlocal and that require $\Theta(\log* n)$ rounds in deterministic \local; recall the discussion in \cref{ssec:intro-contrib-1}.)

\cref{thm:intro-rolocal-simulation} can be seen as an extension of the result of \citet{akbari_et_al:LIPIcs.ICALP.2023.10} that connects \emph{deterministic} online-\local with \local for LCLs on rooted \emph{regular} trees \emph{without inputs}. Our result is applicable to the randomized online-\local (and hence we can connect it with the non-signaling and quantum models), and it holds for any LCLs on rooted trees (possibly with irregularities and inputs). Recall that the presence of inputs makes a huge difference already in the case of directed paths \cite{chang23automata-theoretic,balliu19lcl-decidability}, and we need fundamentally new ideas as we can no longer build on the classification from \cite{balliu22regular-trees,balliu22rooted-trees}.

\subsection{Big picture}

By putting together all our contributions (including some auxiliary results that we discuss later in \cref{sec:extra}), the landscape shown in \cref{fig:landscape} emerges. We can sandwich all models between deterministic \local and randomized \onlinelocal. Going downwards, we get symmetry breaking for free (as indicated by the blue arrows). In the case of rooted trees in the low-locality region $o(\log \log \log n)$, we can also navigate upwards (as indicated by the orange arrows). \Cref{tab:problem-summary} gives some concrete examples of localities for LCL problems across the landscape of models.

\begin{table}
    \centering
    \caption{Examples of localities across the models. Here \emph{symmetry-breaking problems} refer to LCL problems with locality $\Theta(\log^* n)$ in the deterministic \local model; this includes many classical problems such as maximal independent set, maximal matching, and $(\Delta+1)$-vertex coloring.}
    \begin{tabular}{@{}l@{\quad}ll@{\qquad}ll@{}}
        \toprule
        Model
            & \multicolumn{2}{@{}l@{\qquad}}{Symmetry-breaking problems}
            & \multicolumn{2}{@{}l@{}}{3-coloring in bipartite graphs}
        \\
        \midrule
        Deterministic \local
            & $\Theta(\log* n)$ & by definition
            & $\tilde{\Theta}(\sqrt{n})$ & \cite{coiteuxroy2023}
        \\
        Randomized \local
            & $\Theta(\log* n)$ & \cite{doi:10.1137/17M1157957}
            & $\tilde{\Theta}(\sqrt{n})$ & \cite{coiteuxroy2023}
        \\
        Quantum \local
            & $O(\log* n)$ & trivial
            & $\tilde{\Theta}(\sqrt{n})$ & \cite{coiteuxroy2023}
        \\        
        Bounded-dependence
            & $O(1)$ & \cref{thm:fin-dep}
            & $\tilde{\Theta}(\sqrt{n})$ & \cite{coiteuxroy2023}
        \\        
        Deterministic \slocal
            & $O(1)$ & \cite{ghaffari2017}
            & $\tilde{O}(\sqrt{n})$, $n^{\Omega(1)}$ & \cite{coiteuxroy2023,akbari_et_al:LIPIcs.ICALP.2023.10}
        \\
        Randomized \slocal
            & $O(1)$ & \cite{ghaffari2017}
            & $\tilde{O}(\sqrt{n})$, $n^{\Omega(1)}$ & \cite{coiteuxroy2023,akbari_et_al:LIPIcs.ICALP.2023.10}
        \\
        Deterministic \dlocal
            & \(O(1)\) & \cite{akbari_et_al:LIPIcs.ICALP.2023.10}
            & $\tilde{O}(\sqrt{n})$, $\Omega(\log n)$ & \cite{coiteuxroy2023,chang23_tight_arxiv}
        \\
        Deterministic \onlinelocal
            & $O(1)$ & \cite{akbari_et_al:LIPIcs.ICALP.2023.10}
            & $\Theta(\log n)$ & \cite{akbari_et_al:LIPIcs.ICALP.2023.10,chang23_tight_arxiv}
        \\
        Randomized \onlinelocal
            & $O(1)$ & \cite{akbari_et_al:LIPIcs.ICALP.2023.10}
            & $\Theta(\log n)$ & \cref{thm:three-col-ro-lcl}
        \\
        \bottomrule
    \end{tabular}
    \label{tab:problem-summary}
\end{table}
\section{Overview of techniques and key ideas}\label{sec:overview}
In this section, we give an overview of the techniques and key ideas that we use to prove our main results, and we also provide a roadmap to the rest of this paper.
We note that our first contribution is presented in \cref{sec:fin-dep}, while the second contribution comes before it in \cref{sec:non-signaling-to-rolocal}---the proofs are ordered this way since \cref{sec:non-signaling-to-rolocal} also develops definitions that will be useful in \cref{sec:fin-dep}.

\subsection[Bounded-dependence model can break symmetry]{Bounded-dependence model can break symmetry (\cref{sec:fin-dep})}\label{sec:overview:fin-dep}

Let us first present an overview of the proof of \cref{thm:intro-fin-dept} from \cref{ssec:intro-contrib-1}. We show that the bounded-dependence model can break symmetry with constant locality; that is, there is a finitely-dependent process for any symmetry-breaking LCL.

It is well known that any LCL problem \(\problem\) that has complexity \(\myO{\log* n}\) in the \local model has the following property: 
there exists a constant \(k \in \natPos\) (that depends only on the hidden constant in \(\myO{\log* n}\)) such that, if the graph is given a distance-\(k\) coloring (with sufficiently small number of colors) as input, then \(\problem\) is solvable in time \(\myO{1}\) in the \local model  (using the distance-\(k\) coloring as a local assignment of identifiers) \cite{chang16exponential}.

We prove that for each bounded-degree graph, there is a finitely-dependent process providing a distance-\(k\) coloring for constant \(k\).
Then, we can combine such a process with the \local algorithm that solves the problem with locality \(\myO{1}\) if a distance-\(k\) coloring is given, and we prove that the resulting process is still a finitely-dependent distribution.
Furthermore, we also prove that all these processes are invariant under subgraph isomorphisms (even if they do not preserve node identifiers), meaning that, for any two graphs \(G\) and \(H\) sharing two isomorphic subgraphs with isomorphic radius-\(\myO{1}\) neighborhoods, the restrictions of the finitely-dependent processes solving \(\problem\) over \(G\) and \(H\) restricted to \(G'\) and \(H'\) are equal in law.

One of the key observations that we use is that \local algorithms that do not exploit the specific assignment of node identifiers and do not depend on the size of the graph provide finitely-dependent distributions that are invariant under subgraph isomorphisms whenever the input labeling for the graphs is invariant under subgraph isomorphisms. 

\paragraph{Overview.}

The cornerstone of our proof is a surprising result by \citet{holroyd2016} and its follow-up in \cite{holroyd2018}, that state that there exist \(k\)-dependent distributions giving a \(q\)-coloring of the infinite path and of cycles for \((k,q) \in \{(1,4),(2,3)\}\) that are invariant under subgraph isomorphisms.

Recently, Holroyd has combined the finitely-dependent distributions of infinite paths to provide a finitely-dependent \(4\)-coloring of the \(d\)-dimensional lattice \cite{holroyd2024}.
Getting a translation invariant distribution is quite easy: First, use the distributions for the paths on each horizontal and vertical path obtaining a distance-\(k\) coloring (with \(k\) being a large enough constant) of the lattice with constantly many colors as shown in \cite[Corollary 20]{holroyd2016}. 
Second,
apply some \local algorithm that starts from a distance-\(k\) coloring and reduces the number of colors to \(4\) while keeping the resulting distribution symmetric (e.g., the algorithms from \cite{brandt17grid-lcl,balliu22mending}).
The major contribution of \cite{holroyd2024} is transforming such a distribution into a process that is invariant under subgraph isomorphisms.
However, this \emph{symmetrization} phase is quite specific to the considered topology.

We come up with a new approach that obtains similar results in all bounded-degree graphs through the following steps:
\begin{enumerate}[noitemsep]
    \item We show that the finitely-dependent coloring of paths and cycles can be combined to obtain finitely-dependent \(3\)-coloring distributions of rooted pseudoforests of bounded-degree that are invariant under subgraph isomorphisms.
    \item We observe that all graphs of bounded-degree admit a random decomposition in rooted pseudoforests that satisfies the required symmetry properties.
    \item We prove that such a random decomposition can be combined with the finitely-dependent \(3\)-coloring of rooted pseudoforests to obtain finitely-dependent distributions that give a \((\maxDeg+1)\)-coloring of graphs of maximum degree \(\maxDeg\) that are invariant under subgraph isomorphism.
    \item We show that we can use this finitely-dependent \((\maxDeg+1)\)-coloring distribution to provide a distance-\(k\) coloring for bounded-degree graphs, which is enough to simulate in constant time any \(\myO{\log* n}\)-round \local algorithm that solves an LCL \(\problem\).
    Such combinations result in finitely-dependent processes that are invariant under subgraph isomorphisms and solve~\(\problem\).
\end{enumerate}

Notice that, in spirit, steps 1 to 3 are similar to the steps needed to produce a \((\maxDeg + 1)\)-coloring in time \(\myO{\log* n}\) in the \local model \cite{goldberg1988,panconesi2001}:
however, the detailed way these steps are obtained in the bounded-dependence model is quite different and requires a careful analysis.

\paragraph{\boldmath Step 1: Finitely-dependent \(3\)-coloring distributions of rooted pseudoforests.}
A rooted pseudoforest is a directed graph in which each node has outdegree at most $1$.
Let us now fix any rooted pseudoforest of maximum degree \(\maxDeg\).
Consider the following process: each node \(v\) colors its in-neighbors with a uniformly random permutation of \(\{1, \dots, \inDeg(v)\}\).
The graph \(G_i\) induced by nodes colored with color \(i\) is a union of directed paths and cycles (see \cref{fig:fin-dep:rooted-pseudoforest}) and, hence, admits a finitely-dependent \(4\)-coloring given by \cite{holroyd2016} that is invariant under subgraph isomorphisms; if a node is isolated, it can deterministically join any of the \(G_i\)s, say \(G_1\).
The sequence of graphs \((G_1, \dots, G_{\maxInDeg})\) is said to be a random \(\maxInDeg\)-decomposition of the rooted pseudoforest.
Furthermore, if two graphs \(G,H\) have isomorphic subgraphs \(G',H'\) (together with some constant-radius neighborhoods), the decompositions in directed paths and cycles induced in \(G'\) and \(H'\) have the same distribution (because node colors are locally chosen uniformly).
We prove that the combination of the random decomposition and the finitely-dependent coloring yields a finitely-dependent \(4\maxDeg\)-coloring which is invariant under subgraph isomorphisms: by further combining such distribution with the Cole--Vishkin color reduction algorithm \cite{cole1986,goldberg1988,panconesi2001} (that has complexity \(\myO{\log* k}\) with \(k\) being the size of the input coloring), we can obtain a finitely-dependent \(3\)-coloring distribution for rooted pseudoforests of maximum degree \(\maxDeg\) that is invariant under subgraph isomorphisms.

\paragraph{\boldmath Steps 2--3: Finitely-dependent \((\maxDeg+1)\)-coloring distribution of bounded-degree graphs.}

First, if the input graph is undirected, make it a directed graph by duplicating all edges and assigning both orientations to duplicated edges.
Since a coloring of the nodes can be given in both cases equivalently, we focus on the directed case for simplicity.
Second, consider the following process: each node \(v\) labels its out-edges with a uniformly sampled permutation of the elements of \(\{1, \dots, \outDeg(v)\}\); this way we obtain a random decomposition of the edges of the graph into rooted pseudoforests, as each node has at most one out-edge with label \(i\).
Furthermore, if two graphs \(G,H\) have isomorphic subgraphs \(G',H'\) (together with some constant-radius neighborhoods), the decompositions induced in \(G'\) and \(H'\) have the same distribution (because edge labelings are locally chosen uniformly).
We prove that if we apply the finitely-dependent \(3\)-coloring from step 1 to each pseudoforest, we obtain a finitely-dependent \(3^\maxDeg\)-coloring of the input graph which is invariant under subgraph isomorphisms. By further combining such a distribution with a variant of the Cole--Vishkin color reduction algorithm, we obtain a finitely-dependent \((\maxDeg + 1)\)-coloring distribution for bounded-degree graphs of maximum degree \(\maxDeg\) that is invariant under subgraph isomorphisms.

\paragraph{\boldmath Step 4: Finitely-dependent distribution solving \(\problem\).}
Consider any graph \(G\) of maximum degree \(\maxDeg\), and its \(k\)-th power graph defined as follows: simply add edges to \(G\) between each pair of nodes at distance at most \(k\), where \(k\) is some large enough constant.
Observe that \(G^k\) is a graph of maximum degree \(\maxDeg^k\).
Now, step 3 implies that there is a finitely-dependent \((\maxDeg^k + 1)\)-coloring of \(G^k\) that is invariant under subgraph isomorphisms: such distribution yields a distance-\(k\) coloring of~\(G\).
For any LCL \(\problem\) that has complexity \(\myO{\log*n}\) in \local, we know that there exists a constant \(k\) such that, if given a distance-\(k\) coloring as input (with a constant number of colors), then there is an \(\myO{1}\)-round \portnum algorithm solving \(\problem\) \cite{chang16exponential}: the combination of the input distance-\(k\) coloring of \(G\) given by step 2-3 with such an algorithm yields a finitely-dependent distribution solving \(\problem\) that is invariant under subgraph isomorphisms.

\paragraph{Random decomposition of a graph.}
In steps 1 and 3 we proceed in an analogous way: First, we construct a process that induces a random decomposition of a graph.
Second, we consider finitely-dependent distributions of output labelings over the outputs of the random decomposition.
The combination of the random decomposition and the finitely-dependent distributions gives rise to a process over the whole graph.
In \cref{sec:fin-dep}, we derive a general result
(\cref{lemma:fin-dep:induced-process}) which gives sufficient conditions on the
random decomposition and the finitely-dependent distributions in order to ensure
the final process is still finitely-dependent (possibly with symmetry
properties).
\cref{lemma:fin-dep:induced-process} is then the tool used in practice in steps
1 and 3.

\paragraph{Independent related work.} 
Very recently, an independent and parallel work provided a finitely-dependent coloring of bounded-degree graphs with exponentially many colors (in the degree of the graph) \cite{timar2024}. 
The technique employed in \cite{timar2024} is very similar to ours: 
it exploits the decomposition of graphs in rooted pseudoforests, and then colors rooted pseudoforests using the finitely-dependent coloring of paths and cycles \cite{holroyd2016,holroyd2018}.
However, \cite{timar2024} stops at the mere coloring problem and does not make use of color reduction algorithms, which are the key ingredient for extending results to all symmetry-breaking LCLs.

\subsection[Simulating non-signaling in randomized \onlinelocal]{\boldmath Simulating non-signaling in randomized \onlinelocal (\cref{sec:non-signaling-to-rolocal})}

Let us now give the intuition behind the proof of \cref{thm:intro:ns-model-sim} from \cref{ssec:intro-contrib-2}: we show that the non-signaling model can be simulated in \rolcl without any loss in the locality.

A \rolcl algorithm is given as input the size of the input graph, and a distribution that is non-signaling beyond distance \(T\) and that solves some problem \(\problem\) over some graph family \(\FF\).
When the adversary picks any node \(v_1\) and shows the \rolcl algorithm its radius-\(T\) neighborhood, the \rolcl algorithm simply goes over all graphs of \(n\) nodes in \(\FF\) until it finds one, say \(H_1\), that includes the radius-\(T\) neighborhood of \(v_1\): then, it samples an output for \(v_1\) according to the restriction of the non-signaling distribution to the radius-\(T\) neighborhood of \(v_1\).
Notice that such distribution does not change if the topology of the graph changes outside the radius-\(T\) neighborhood of \(v_1\).
Recursively, when the adversary picks the \(i\)-th node \(v_i\), the \rolcl algorithm goes over all graphs of \(n\) nodes in \(\FF\) until it finds one, say \(H_i\), that includes the union of radius-\(T\) neighborhoods of \(v_1, \dots, v_i\) (it must necessarily exist as the graph chosen by the adversary is a valid input): hence, it samples an output for \(v_i\) sampling from the restriction of the non-signaling distributions to the union of the radius-\(T\) neighborhoods of \(v_1, \dots, v_{i-1}\), conditional on the outputs of \(v_1, \dots, v_{i-1}\).
We prove that the non-signaling property ensures that the algorithm described above fails with at most the same probability of failure of the non-signaling distribution.

\subsection[\Onlinelocal can be simulated in \slocal for rooted trees]{\boldmath \Onlinelocal can be simulated in \slocal for rooted trees (\cref{sec:olocal-slocal-simulation})}
\label{ssec:overview-olocal-slocal-simulation}

Next we give an overview of the proof of \cref{thm:intro-rolocal-simulation} from \cref{ssec:intro-contrib-2}: we show that a randomized online-\local algorithm that solves an LCL problem in rooted trees with locality $o(\log\log\log n)$ can be simulated in the deterministic \slocal model with locality $O(1)$, and therefore also in the deterministic \local model with locality $O(\log^* n)$.

The new ingredient we use in this section is \emph{component-wise \onlinelocal
algorithms}. Roughly speaking, a \emph{component-wise} algorithm is a
deterministic \onlinelocal algorithm that, when processing a node \(v\), uses
information only coming from the connected component of the input graph that has
been revealed so far to which \(v\) belongs, and nothing else.
(If two or more components are merged, then the algorithm may use information it knows
from any component.)

We prove \cref{thm:intro-rolocal-simulation} in three steps:
\begin{enumerate}[noitemsep]
    \item We first show that any randomized online-\local algorithm solving an LCL with locality \(T(n)\) can be turned into a deterministic component-wise online-\local algorithm with locality \(T(2^{O(2^{n^2})})\).
    \item We then prove that, for LCLs on rooted trees, we can simulate the
    component-wise algorithms in \slocal.
    \item Finally, we show that \slocal algorithms solving any LCL \(\problem\) with locality \(o(\log n)\) over rooted trees can be turned into an \(O(\log* n)\)-round \local algorithm solving \(\problem\) over rooted trees.
\end{enumerate}

\paragraph{Step 1: Constructing component-wise algorithms from deterministic online-LOCAL algorithms.}
To give some intuition, consider an LCL problem $\Pi$ on a family \(\FF\)  of
graphs that is closed under disjoint graph union and node and edge removals.
Suppose there is a \emph{deterministic} \onlinelocal algorithm $\algoA$ solving
$\Pi$ with locality $T(n)$ on $n$-node graphs.
Note that the output label $\algoA$ chooses for a node may depend arbitrarily on
everything the algorithm has seen so far.

We show how to turn algorithm~$\algoA$ into an algorithm whose output for
\emph{isolated} nodes depends only on the local topology and inputs and is
oblivious to any previously-processed nodes.
We call such algorithms $1$-amnesiac.
Here, with ``isolated node'', we mean that the node $v$ is such that all nodes belonging to the
radius \(T\)-neighborhood around \(v\) are new to algorithm~$\algoA$, that is,
\(\algoA\) has no knowledge of how $v$ is connected (if at all) to the parts of
the graph it has seen thus far.

Let \(\inLabels, \outLabels\) be the sets of input and output labels of \(\problem\), respectively.
Let \(N_1 = \abs*{\outLabels}\abs*{\inLabels}^n \cdot 2^{n^2} \cdot n^2\), and
let $\GG_1$ be the set of all possible subgraphs of any $n$-node graphs (from \(\FF\)) with inputs (up to isomorphisms) that are the radius-\(T(N_1)\) neighborhood of some node, which we call the \emph{center} of the neighborhood.
Let $g_1 = |\GG_1|$ and notice that $g_1 \le 2^{n^2} \abs*{\inLabels}^n$.
Consider now the following experiment:
\begin{enumerate}[noitemsep]
  \item Construct a simulation graph \(H_1\) that consists of $\abs*{\outLabels}
  \cdot n$ copies of all graphs in $\GG_1$.
  The size of \(H_1\) is at most \(N_1 = n^2 \abs*{\outLabels} \cdot g_1   \le
  \abs*{\outLabels}\abs*{\inLabels}^n \cdot 2^{n^2} \cdot n^2\).
  \item Reveal the center node of each of those neighborhood graphs to $\algoA$
  in an arbitrary order with locality~$T(N_1)$.
  For each \emph{type of radius-\(T(N_1)\) neighborhood} ${\TT_1}$ (i.e., any
  element of \(\GG_1\)), there exists some output label $\sigma_{\TT_1}$ that
  occurs at least $n$ times.
  We call such neighborhoods \emph{good} and such a label a \emph{canonical
  labeling} of \({\TT_1}\).
  \item Continue labelling nodes of \(H_1\) using \(\algoA\) under an arbitrary
  ordering of the nodes.
\end{enumerate}

We describe a new \onlinelocal algorithm~$\algoB$ that, using this experiment,
produces a correct labeling.
Let \(G\) be an input graph with \(n\) nodes, and let node~$v$ be revealed to \(\algoB\) along with its radius-\(T'(n)\) neighborhood, where \(T'(n) = T(N_1)\). 
Whenever \(v\) is an isolated node, algorithm~$\algoB$ finds a \enquote{fresh}
(i.e., not previously chosen) good neighborhood in the experiment graph matching
the radius-\(T(N_1)\) neighborhood of \(v\) in \(G\).
It then takes that unused good neighborhood, identifies all nodes with the
revealed input neighborhood, and labels $v$ accordingly.
An unused good neighborhood always exists since there are at least \(n\) good
neighborhoods in the experiment graph matching the radius-\(T(N_1)\)
neighborhood of \(v\) in \(G\).
Algorithm~$\algoB$ effectively cuts and pastes the neighborhood from the experiment graph to the actual input graph \emph{without algorithm~$\algoA$ noticing}.
For non-isolated nodes, \(\algoB\) just simulates what \(\algoA\) would do,
following the adversarial order of the nodes presented to \(\algoB\). 
This is always possible because the labels of isolated nodes come from valid
\onlinelocal runs of \(\algoA\).

The correctness of algorithm~$\algoB$ follows from that of $\algoA$.
Moreover, when labeling an isolated node, $\algoB$ always labels it in the same
way that depends only on the local structure and inputs of the graph; hence
\(\algoB\) is \(1\)-amnesiac.
Clearly, there is an exponential overhead in the locality: the locality of \(\algoB\) is \(T'(n) = T(N_1) = T(2^{O(n^2)})\).

Using the above, we now describe how to obtain a \(2\)-amnesiac algorithm, that
is, an algorithm that always produces the same labels for the same types of
connected components formed by the union of intersecting neighborhoods of two
distinct nodes.
We modify the previous experiment as follows:
\begin{enumerate}[noitemsep]
  \item First we must increase the size of the experiment graph.
  Let \(N_2 = \abs*{\outLabels}^2 \abs*{\inLabels}^n \cdot 2^n \cdot n^2 \) and
  redefine \(\GG_1\) with the radius-\(T(N_2)\) neighborhoods.
  \item Instead of considering \(\abs*{\outLabels} \cdot n\) many disjoint
  copies of elements of \(\GG_1\), we now take \(\abs*{\outLabels}^2 \cdot n\)
  copies.
  By the same argument as before, there are at least \(n \cdot \abs*{\outLabels}\)
  good neighborhoods.
  \item Let \(\GG_2\) be the set of all possible unions of two non-disjoint
  radius-\(T(N_2)\) neighborhoods of two different nodes of any \(n\)-node graph
  (in \(\FF\)), with all possible input labelings and orderings of the center
  nodes.
  Notice that the size of \(\GG_2\) is \(g_2 \le 2^{n^2} \abs*{\inLabels}^n\).
  We take \(\abs*{\outLabels} \cdot n\) many disjoint copies of all graphs in
  \(\GG_2\), with the catch that the neighborhood of the center node that comes
  first in the processing order is chosen arbitrarily among the good
  neighborhoods.
  The resulting graph \(H_2\) is our new experiment graph, whose size is now
  \(N_2 \le \abs*{\outLabels}\abs*{\inLabels}^n \cdot 2^{n^2} \cdot n^3 \).
  \item Use \(\algoA\) to label all nodes that come first in the input order in
  each graph, then all nodes that come second in the same respective order.
  \item Use \(\algoA\) to label the second center node of each connected
  component.
  By the pigeonhole principle, for all types of such connected components
  \(\TT_2\) there are at least \(n\) identical labelings.
  These make out the canonical labelings \(\sigma_{\TT_2}\).
\end{enumerate}

The \(2\)-amnesiac algorithm \(\algoB\) starts by running the above experiment.
When given as input an \(n\)-node graph, it outputs the canonical labeling
\(\sigma_{\TT_1}\) for isolated nodes whose radius-\(T'(n) = T(N_2)\)
neighborhood matches type \(\TT_1\).
Similarly, $\algoB$ outputs the canonical labeling \(\sigma_{\TT_2}\) for nodes
seeing a connected component of type \(\TT_2\) when we look at their
radius-\(T'(n) = T(N_2)\) neighborhood.
The correctness argument is the same as that for $1$-amnesiac algorithms. 
For the remaining nodes, \(\algoB\) just simulates \(\algoA\) using global
memory as usual.

We can continue this process all the way up to \(n\)-amnesiac algorithms, which
are simply component-wise \onlinelocal algorithms.
See \cref{lemma:olocal-amnesiac} for the formal details. 
As a remark, notice the restriction to LCLs is necessary to prove correctness:
if the amnesiac algorithm fails, it must fail locally; hence also the original
\onlinelocal algorithm fails locally, contradicting its correctness.

\paragraph{Dealing with randomness.}

Let us now turn to the setting where \(\algoA\) is randomized.
For deterministic \onlinelocal algorithms, we adaptively picked the good
neighborhoods before processing further; however, since in randomized
\onlinelocal the adversary is \emph{oblivious}, we must adapt our strategy.

Our experiment graph \(H_n\) is now \emph{random}.
It is constructed exactly as before, though now the good connected components
have to be \enquote{guessed} uniformly at random each time. 
Clearly, the probability that our guesses are good is incredibly small.
Hence we would like to amplify the success probability so that the probability
that the guesses are good enough \emph{and} the randomized \onlinelocal
algorithm works correctly is bounded away from zero.
(Once we have this, we may apply a standard derandomization argument.)
The amplification is by replicating \(k\) times the experiment graph \(H_n\), guessing the good components for each copy of \(H_n\) independently.
Since the randomness of the algorithm and the randomness of these guesses are
independent for each copy of \(H_n\), a simple union bound together with the
inclusion-exclusion principle give a positive probability that, in at least one
copy of \(H_n\), the guesses are correct and the algorithm works properly.

We now apply a standard derandomization argument:
Since there exists an assignment of a random bit string to the randomized
\onlinelocal algorithm as well as for \enquote{guessing} the good components in
the \(k\) copies of the experiment graph, there is a fixed, deterministic
realization of \(H_n\) that yields a (correct) deterministic \onlinelocal
algorithm.
Hence our \(n\)-amnesiac algorithm \(\algoB\) goes over all possible
definitions of \(H_n\) and of deterministic \onlinelocal algorithms (according
to an arbitrary order) until it finds this good combination.
It must eventually succeed because, when the size of the input graph is fixed,
there are only finitely many combinations.
The proof then reduces to the previous case. 
See \cref{lemma:rand-olocal-amnesiac} for the formal details.

\paragraph{\boldmath Step 2: From component-wise algorithms to \slocal algorithms on rooted trees.}
We heavily exploit the fact that, in rooted trees, there is a consistent
orientation of the edges towards the root.
We adapt results from \cite[Section 7]{chang23automata-theoretic} to the \slocal
model to show how to \enquote{cluster} a rooted tree in connected components
(which are also rooted trees) in time \(O(\alpha)\) and so that the following
properties are met:
\begin{enumerate}[noitemsep]
    \item All connected components have depth \(\Theta(\alpha)\).
    \item All leaves of a connected component that are not at distance \(\Theta(\alpha)\) from the root of the component are real leaves of the original rooted tree.
    \item All other leaves are either real leaves or roots of other connected components. 
\end{enumerate}
The formal details can be found in \cref{lemma:online-to-slocal:clustering}.

Suppose now we are given an \(n\)-amnesiac (equivalently, component-wise) algorithm \(\algoA\) solving an LCL \(\problem\) on rooted trees with locality \(T(n)\).
We briefly describe how to construct an \slocal algorithm \(\algoB\) that solves \(\problem\) with locality \(O(T(n))\) on rooted trees.
Recall that in \slocal we may compose two algorithms with localities \(T_1\) and
\(T_2\) and obtain an \slocal algorithm with locality \(O(T_1 + T_2)\)
\cite{ghaffari2017}.
Algorithm $\algoB$ is the composition of the following four algorithms with
locality $O(T)$:
\begin{enumerate}[noitemsep]
  \item \(\algoB_1\) constructs a clustering with properties 1-3 above with
  \(\alpha = O(T)\), where the hidden constant is large enough. 
  \item \(\algoB_2\) ensures each root of a connected component of the cluster
  precommits a solution in the neighborhood of the leaves of the component using \(\algoA\).
  This can be done independently between different components if the localities
  are appropriately chosen.
  The root of the original tree also outputs a
  solution for itself.
  \item \(\algoB_3\) outputs the precommitments on the nodes designated by \(\algoB_2\).
  \item The fact that we used \(\algoA\) independently on disjoint components of
  the graph (and $\algoA$ being correct) ensures that a solution to the LCL
  exists and the \enquote{inner part} of all connected components can be
  completed with a correct solution.
  Hence \(\algoB_4\) simply brute-forces a solution inside the clusters.
\end{enumerate}
Since \(\problem\) is an LCL, different components will have compatible
solutions as \(\algoB_2\) precommitted a solution around leaf nodes, which are
roots of other connected components.

\paragraph{Step 3: \boldmath  From \slocal algorithms to \local algorithms on rooted trees.}

Any \(o(\log n)\)-time  \slocal algorithm \(\AA\) solving an LCL over rooted trees can be turned into an \(O(1)\)-time \slocal algorithm \(\BB\) achieving the same.
This is obtained by exploiting the same tree decomposition described above and providing fake, repeating identifiers to each cluster so that two equal identifiers are ``far enough''.
Then, we use \(\AA\) as a black-box and lie to \(\AA\) by providing the size of a cluster as the input graph size.
Once we have an \(O(1)\)-time \slocal algorithm \(\BB\), how to turn it into an \(O(\log* n)\)-time \local algorithm is folklore.

\section{Additional results}\label{sec:extra}

We now extend the discussion to additional results that are needed to piece
together missing parts of the big picture in \cref{fig:landscape}. 
These results also help provide further intuition and examples on our models
and their key properties.

\subsection[Randomized \onlinelocal with adaptive adversary]{\boldmath Randomized \onlinelocal with adaptive adversary (\cref{sec:adaptive-rolocal-derandomization})}
\label{sec:extra-rolcl-adaptive}

The new model that we introduced, randomized \onlinelocal, is defined using an \emph{oblivious} adversary. In \cref{sec:adaptive-rolocal-derandomization} we show that this is also necessary: if we defined randomized \onlinelocal with an \emph{adaptive} adversary, it would be as weak as the deterministic \onlinelocal model.

We show that an adaptive adversary in \rolcl is so strong that a succeeding \rolcl algorithm of locality \(T\) would admit a single random-bit string that outputs a good solution for all possible graphs of a given size; hence, a correct \detolcl algorithm exists.
Since \detolcl algorithms of locality \(T\) for graphs of \(n\) nodes are only finitely many, the \detolcl algorithm in its initialization phase can go over all of them until it finds the one working for all graphs of \(n\) nodes; it then uses that one.

\subsection[Lower bound on 3-coloring in randomized \onlinelocal]{\boldmath Lower bound on 3-coloring in randomized \onlinelocal (\cref{sec:3-col-grid-lb})}

So far we have connected randomized \onlinelocal with other models through
simulation arguments that only work in rooted trees. Let us now put limitations
on randomized \onlinelocal in a broader setting. Recall that in deterministic
\onlinelocal we can 3-color bipartite graphs with locality $O(\log n)$
\cite{akbari_et_al:LIPIcs.ICALP.2023.10}, and this is tight
\cite{chang23_tight_arxiv}. 
Extending the work of \citet{chang23_tight_arxiv}, in \cref{sec:3-col-grid-lb}
we show that randomness does not help with solving this problem:
\begin{framedthm}{}
    \label{thm:intro:3-col-grid-lb}
    $3$-coloring in bipartite graphs is not possible with locality $o(\log n)$ in the randomized \onlinelocal model.
\end{framedthm}
\noindent This demonstrates that, even though randomized \onlinelocal is a very
strong model---indeed strong enough to simulate, e.g., any non-signaling
distribution---, it is nevertheless possible to prove strong lower bounds in
this model (which then extend to lower bounds across the entire landscape of
models).

In the proof, we use the notion of a $b$-value defined in~\cite{chang23_tight_arxiv} as a measure of the number of incompatible boundaries present in a region of a grid. We start with the assumption that a grid can be $3$-colored with locality $o(\log n)$ and derive a contradiction. The high level idea is to construct two path segments below each other, where one path segment has a large count of incompatible boundaries (a high $b$-value) and the other segment has a low boundary count (incompatible $b$-value to the upper path). This forces an algorithm to make boundaries escape on the side between the two segments. We show that the boundary count is, however, too large compared to the distance between the two segments and thus the boundaries \enquote{cannot escape}.

Two difficulties arise in the randomized case compared to the deterministic
lower bound: (i)~in order to create a path with a large $b$-value we have to use
a probabilistic construction that produces a segment with a large $b$-value with
high probability; and (ii) since this construction is probabilistic but our
adversary oblivious (recalling our previous discussion in
\cref{sec:extra-rolcl-adaptive}), we cannot \enquote{see} the large $b$-value
segment constructed in (i), that is, we can neither predict its position nor its
size. 
We therefore need to use another probabilistic construction that positions the
segment in a position that forces a contradiction (and which succeeds with
constant probability).

\subsection[Randomized \onlinelocal in paths and cycles]{\boldmath Randomized \onlinelocal in paths and cycles (\cref{sec:cycles})}

Our final technical part shows that LCL problems in paths and cycles have complexity either \(\myO{1}\) or \(\myTheta{n}\) in the \rolcl model; moreover the locality is \(\myO{1}\) in \rolcl if and only if it is \(\myO{\log* n}\) in the deterministic \local model. Together with prior work, this also shows that locality of an LCL problem in paths and cycles is \emph{decidable} across all models \cite{balliu19lcl-decidability} (with the caveat that we cannot distinguish between $O(1)$ and $\Theta(\log^* n)$ for quantum-\local).
The proof is a reworking of its deterministic variant from \cite{akbari_et_al:LIPIcs.ICALP.2023.10}.
The main take-home message of this result is the following: cycles are not a fundamental obstacle for simulating \rolcl in weaker models. Hence, there is hope for generalizing the simulation result of \cref{sec:olocal-slocal-simulation} from rooted trees to a broader class of graphs.

\subsection[Quantum, bounded-dependence, and non-signaling]{\boldmath Quantum, bounded-dependence, and non-signaling (\cref{app:dc-models})}

Finally, the material in \cref{app:dc-models} is more conceptual in nature and aims at serving a dual purpose. First, it aims at formally introducing the non-signaling model based on the non-signaling principle, and at explaining \emph{why} it is more powerful than the quantum-\local model. This is not a new result, but included for completeness and to clarify the early works of \citet{gavoille2009} and \citet{arfaoui2014}.
Second, it formally introduces the bounded-dependence model based on finitely-dependent processes and argues why the relations in diagram \eqref{eq:diagram-quantum} hold, and in particular why quantum-\local without shared quantum state is contained not only in the non-signaling model but also in the bounded-dependence model. While all the ingredients are well-known, to our knowledge this relation between the quantum-\local model and the bounded-dependence model is not made explicit in the literature before. 

\subsection{Open questions}

Our work suggests a number of open questions; here are the most prominent ones:

\begin{question}[Solved by \cite{balliu2026}]\label{question:is-stronger}
    Is \qlocal stronger than randomized \local for any LCL problem? In particular, is there any LCL problem with constant locality in \qlocal but super-constant locality in \local? We conjecture that such a problem does not exist. Our work proves that to show this conjecture, new proof techniques are needed.
\end{question}

\begin{question}[Solved by \cite{balliu2026}]\label{question:is-stronger-natural}
    Is \qlocal stronger than randomized \local for any \emph{natural} LCL problem? By \emph{natural}, we mean problems that have been widely studied in the distributed computing community, such as graph coloring, maximal independent set, maximal matching, sinkless orientation, etc.
\end{question}

\begin{question}
    Does shared global information or shared quantum state ever help with any LCL problem (beyond the fact that shared quantum state can be used to generate shared randomness, which is known to help \cite{balliu2024shared-randomness})?
\end{question}

\begin{question}
    Is it possible to simulate deterministic or randomized online-\local in \slocal and \local also in a broader graph class than rooted trees?
    If we could extend the result to \emph{unrooted} trees, it would also imply new lower bounds for the widely-studied \emph{sinkless orientation} problem \cite{brandt16lll,balliu22so-simple} across all models.
\end{question}

\subsection{Follow-up works}

Since the first versions of this work appeared on arXiv, it has already influenced several follow-up papers:

\begin{enumerate}[noitemsep]
    \item \citet{dhar24rand-online-local} study the relationship between randomized online-\local and deterministic \local in much greater depth, in particular outside the \(o(\log \log \log n)\) regime. They show, for example, that on \emph{regular rooted trees} the complexity classes of LCL problems in randomized online-\local and deterministic \local coincide exactly. As a corollary, for this graph family all models in \cref{fig:landscape} have the same power in the \(\Omega(\log^* n)\) regime.
    \item \citet{balliu2024shared-randomness} show that shared randomness can strictly help for certain LCL problems. This yields several new separations between the models in \cref{fig:landscape}. For instance, quantum-\local with a shared quantum state is strictly stronger than quantum-\local without shared quantum state; the non-signaling model is strictly stronger than the bounded-dependence model; and randomized online-\local is strictly stronger than randomized \slocal.
    \item \citet{balliu2024quantum} exhibit the first family of LCL problems \(\{\Pi_\Delta\}\), parameterized by the maximum input degree \(\Delta\), such that \(\Pi_\Delta\) can be solved with constant locality in quantum-\local but requires locality \(\Omega(\Delta)\) in randomized \local. This does not yet answer \cref{question:is-stronger}, since an asymptotic separation would require \(\Delta\) to grow with the size of the input graph, which is not allowed by the definition of LCL problems.
    \item \citet{balliu2026} give a positive answer to \cref{question:is-stronger}. By modifying the problem \(\Pi_\Delta\) of \cite{balliu2024quantum} for \(\Delta = \Theta(\log n / \log \log n)\), they construct a proper LCL problem that can be solved with locality \(O(\log n)\) in quantum-\local, but requires locality \(\Omega(\log n \log \log n / \log \log \log n)\) in randomized \local.
    We have to mention that the LCL problem constructed in \cite{balliu2026} is artificial and has specifically been designed to separate quantum-\local from randomized \local. \cref{question:is-stronger-natural} remains open, that is, determining whether more natural LCL problems, e.g., graph coloring or maximal independent set, admit similar separations.
    \item \citet{balliu2025limits} prove that the non-signaling model provides no advantage for LP problems w.r.t.\ the deterministic \local model. In particular, this implies that both the quantum-\local model and the bounded-dependence model are no stronger than deterministic \local for LP problems. They also exhibit the first LCL problem for which quantum-\local is strictly weaker than deterministic or randomized \slocal. Together with the result of \cite{balliu2024shared-randomness}, this implies that the quantum-\local and \slocal models are incomparable.
    \item \citet{boudier25_orientation_opodis} show that the lower bound for
    $3$-coloring of \cref{thm:intro:3-col-grid-lb} extends to the case where the
    algorithm is even explicitly given a globally consistent orientation of the
    grid.
    \item Recently, a survey on progress in distributed quantum computing has been published \cite{damore2025eatcs}.
\end{enumerate}

Taken together, these works demonstrate two key points.
First, they show that in restricted graph classes---such as rooted trees---it is often possible to capture the entire hierarchy of models with a single theorem about randomized online-\local, whereas in general the landscape consists of many genuinely distinct models.
Second, they show that research in this area is highly active, with new and surprising results emerging rapidly, while at the same time many fundamental open questions remain.

\subsection{Differences with the conference version}

In the conference version of this paper \cite{akbari2025stoc}, among the
additional results presented in \cref{sec:extra}, we had a derandomization
result for \local in the \dlocal model (similar to how \slocal can derandomize
\local \cite{dist_deran}).
However, the proof of that statement contained a flaw that we were not able to
remedy, and hence we removed the theorem from this version.
Nevertheless, it is true that \dlocal can derandomize \local in certain graph classes, such as rooted trees, in the \(O(\log n)\) complexity region; this follows from \cref{thm:olocal-sim:reduction-slocal,thm:olocal-slocal-simulation}.
Whether \dlocal can derandomize \local in general remains an interesting open
question.   %
\section{Preliminaries}
\label{sec:preliminaries}

We use the notation \(\nat = \{0,1,2,\dotsc\}\) and \(\nat_+ = \nat \setminus \{0\}\).
For any positive integer \(n \in \natPos\), we denote the set \(\{1, \dots, n\}\) by \([n]\).

\paragraph{Graphs.}
In this work, a graph \(G = (V,E)\) can be either directed (\(E \subseteq V^2\)) or undirected (\(E \subseteq {\binom{V}{2}}\)).
If the set of nodes and the set of edges are not specified, we refer to them by \(V(G)\) and \(E(G)\), respectively.
For any edge \(e = (u,v) \in E(G)\) we say that \(e\) is directed from \(u\) to \(v\) and is incident to \(u\) and to \(v\) (the latter holds also for undirected edges).
All graphs in this paper are simple graphs without self-loops unless otherwise specified.
The degree of a node \(v\) is the number of edges that are incident to \(v\) and is denoted by \(\deg_G(v)\), or simply by \(\deg(v)\) when \(G\) is clear from the context.
The indegree of a node \(v\) is the number of directed edges that are directed towards \(v\) and is denoted by \(\inDeg_G(v)\), while the outdegree of \(v\) is the number of directed edges that are incident to \(v\) but directed to some other vertex  and is denoted by \(\outDeg_G(v)\).
Again, we omit the suffix \(G\) if the graph is clear from the context.

If \(G\) is a subgraph of \(H\), we write \(G \subseteq H\).
For any subset of nodes \(A \subseteq V\), we denote by \(G[A]\) the subgraph
induced by the nodes in \(A\).
For any nodes $u,v \in V$, $\dist_G(u,v)$ denotes the distance between $u$ and
$v$ in $G$ (i.e., the number of edges of any shortest path between $u$ and $v$
in $G$---it doesn't need to be a directed path); if $u$ and $v$ are disconnected, then $\dist_G(u,v) = +\infty$.
If $G$ is clear from the context, we may also simply write $\dist(u,v) =
\dist_G(u,v)$.
Also, for a subset of nodes \(A,B \subseteq V\) and any node \(v \in V\), we can define \(
\dist_G(v,A) = \min_{u \in A} \dist_G(u,v) \) and, by extension, \(
  \dist_G(A,B) = \min_{u \in A} \dist_G(u,B)\).
We assume that \(\dist_G(A, \emptyset) = +\infty\).
Similarly, for subgraphs \(G_1, G_2 \subseteq G\), we define \(\dist_G(G_1,G_2) = \dist_G(V(G_1),V(G_2))\): here, we also assume
\(\dist_G(G_1,\emptyset) = +\infty \) where \(\emptyset\) is now the empty graph.
For \(T \in [n]\), the \(T\)-neighborhood of a node \(u \in V\) of a graph \(G\) is the set
\(\NN_T(u,G) = \left\{v \in V \ \st \ \dist_G(u,v) \le T \right\}\).
The \(T\)-neighborhood of a subset \(A \subseteq V\) is the set \(\NN_T(A,G) =
\left\{v \in V \ \st \ \exists u \in A : \dist_G(u,v) \le T \right\}\).
Similarly, the \(T\)-neighborhood of a subgraph \(H \subseteq G\) is the set
\(\NN_T(H,G) = \left\{v \in V \ \st \ \exists u \in V(H) : \dist_G(u,v) \le T
\right\}\).
If \(G\) is clear from the context, we just write \(\NN_T(u)\), \(\NN_T(A)\), and \(\NN_T(H)\).
If \(u \notin V\), \(A \cap V = \emptyset\), or \(V(H) \cap V(G) = \emptyset\), then the neighborhood \(\NN_T(u) = \emptyset\), \(\NN_T(A) = \emptyset\), or \(\NN_T(H) = \emptyset\), respectively.
We make use of some graph operations:
For any two graphs \(G,H\), we denote by \(G \cap H\) the intersection graph defined by \(G \cap H = (V(G) \cap V(H), E(G) \cap E(H))\).
The graph union is defined by \(G \cup H = (V(G) \cup V(H), E(G) \cup E(H))\).
Moreover, the graph difference is the graph \(G \setminus H = (V(G) \setminus V(H), E(G) \setminus E(H))\).

Finally, for any two graphs \(G\) and \(H\), we write \(G \sim_{f} H\) to denote that \(G\) and \(H\) are isomorphic and \(f:V(G) \to V(H)\) is an isomorphism.

\paragraph{Labeling problems.}

We start with the notion of labeling problem.

\begin{definition}[Labeling problem]\label{def:labeling-problem}
    Let \(\inLabels\) and \(\outLabels\) be two sets of input and output labels, respectively.
    A \emph{labeling problem} \(\problem\) is a mapping \((G,\inLabel) \mapsto \{\indOutLabel{i}\}_{i \in I}\), with \(I\) being a discrete set of indexes, that assigns to every graph \(G\) with any input labeling \(\inLabel: V(G) \to \inLabels\) a set of permissible output vectors \(\indOutLabel{i} : V(G) \to \outLabels\) that might depend on \((G, \inLabel)\).
    The mapping must be closed under graph isomorphism, i.e., if \(\varphi: V(G) \to V(G')\) is an isomorphism between \(G\) and \(G'\), and \(\indOutLabel{i} \in \problem((G', \inLabel)) \), then \(\indOutLabel{i} \circ \varphi \in \problem((G, \inLabel \circ \varphi))  \).
\end{definition}

A labeling problem can be thought of as defined for \emph{any} input graph of \emph{any} number of nodes.
If the set of permissible output vectors is empty for some input \((G,\inLabel)\), we say that the problem is not solvable on the input \((G,\inLabel)\):
accordingly, the problem is solvable on the input \((G,\inLabel)\) if \(\problem(G,\inLabel) \neq \emptyset\).

One observation on the generality of definition of labeling problem follows:
one can actually consider problems that require to output labels on edges.

We actually focus on labeling problems where, for any input graph, an output vector \(\outLabel\) is permissible if and only if the restrictions of the problem on any local neighborhoods can be solved and there exist compatible local permissible output vectors whose combination provides \(\outLabel\).
This concept is grasped by the notion of locally checkable labeling (LCL) problems, first introduced by \citet{naor1995}.
For any function \(f : A \to B\) and any subset \(A' \subseteq A\), let us denote the restriction of \(f\) to \(A'\) by \(f \restriction_{A'}\).
Furthermore, we define a centered graph to be a pair \((H,v_H)\) where \(H\) is a graph and \(v_H \in V(H)\) is a vertex of \(H\) that we name the \emph{center} of \(H\).
The  \emph{radius} of a centered graph is the maximum distance from \(v_H\) to any other node in \(H\).

\begin{definition}[Locally checkable labeling problem]\label{def:lvl-problem}
    Let \(\lvlCR, \maxDeg \in \nat\).
    Let  \(\inLabels\) and \(\outLabels\) be two finite sets of input and output labels, respectively, and \(\problem\) a labeling problem.
    \(\problem\) is \emph{locally checkable} with checking radius \(\lvlCR\) if there exists a family \( \SS = \{((H,v_H), \inLblRes, \outLblRes)_i\}_{i \in I}\) of tuples, where \((H,v_H)\) is a centered graph of radius at most \(\lvlCR\) and maximum degree at most \(\maxDeg\), \(\inLblRes : V(H) \to \inLabels\) is an input labeling for \(H\), \(\outLblRes : V(H) \to \outLabels\) is an output labeling for \(H\) (which can depend on \(\inLblRes\)) with the following property
    \begin{itemize}
        \item for any input \((G, \inLabel)\) to \(\problem\) with \(\deg(G) \le \maxDeg\), an output vector \(\outLabel : V(G) \to \outLabels\) is permissible (i.e., \(\outLabel \in \problem((G,\inLabel))\)) if and only if, for each node \(v \in V(G)\), the tuple \(((G[\NN_\lvlCR(v)]), \inLabel \restriction_{\NN_{\lvlCR}(v)}, \outLabel \restriction_{\NN_{\lvlCR}(v)} )\) belongs to \(\SS\) (up to graph isomorphisms).
    \end{itemize}
\end{definition}
Notice that the family \(\SS\) can be always thought to be finite up to graph isomorphisms, as \(r\) and \(\maxDeg\) are fixed and the set of input/output labels are finite.
We now define the computational models we work in.

\paragraph{The \portnum model.}
A port-numbered network is a triple \(N = (V, P, p)\) where \(V\) is the set of nodes, \(P\) is the set of \emph{ports}, and \(p: P \to P\) is a function specifying connections between ports.
Each element \(x\in P\) is a pair \((v,i)\) where \(v \in V\), \(i \in \nat_+\).
The connection function \(p\) between ports is an involution, that is, \(p(p(x)) = x\) for all \(x \in P\).
If \((v,i) \in P\), we say that \((v,i)\) is port number \(i\) in node \(v\).
The degree of a node \(v\) in the network \(N\) is \(\deg_N(v)\), the number of ports in \(v\), that is, \(\deg_N(v) = \abs{\{i \in \nat: (v,i) \in P\}}\).
Unless otherwise mentioned, we assume that port numbers are consecutive, i.e., the ports of any node \(v \in V \) are \((v,1), \dots, (v, \deg_N(v))\).
Clearly, a port-numbered network identifies an \emph{underlying graph} \(G = (V,E)\) where, for any two nodes \(u,v \in V\), \(\{u,v\} \in E\) if and only if there exists ports \(x_u,x_v \in P\) such that \(p(x_u) = x_v\).
Clearly, the degree of a node \(\deg_N(v)\) corresponds to \(\deg_G(v)\).

In the \portnum model we are given a distributed system consisting of a port-numbered network of \( \abs{V} =
n\) \emph{processors} (or \emph{nodes}) that operates in a sequence of
synchronous rounds.
In each round the processors may perform unbounded computations on their
respective local state variables and subsequently exchange messages of
arbitrary size along the links given by the underlying input graph.
Nodes identify their neighbors by using ports as defined before, where port assignment may be done adversarially.
Barring their degree, all nodes are identical and operate according to the
same local computation procedures.
Initially all local state variables have the same value for all processors;
the sole exception is a distinguished local variable \(\inptData(v)\) of
each processor \(v\) that encodes input data.

Let \(\inLabels\) be a set of input labels.
The input of a problem is defined in the form of a labeled graph \((G,
\inptData)\) where \(G = (V, E)\) is the system graph, \(V\) is the set of
processors (hence it is specified as part of the input), and \(\inptData\colon V
\to \inLabels\) is an assignment of an input label \(\inLabel(v) \in \inLabels\) to
each processor \(v\).
The output of the algorithm is given in the form of a vector of local output
labels \(\outLabel\colon V \to \outLabels\), and the algorithm is assumed to
terminate once all labels \(\outLabel(v)\) are definitely fixed.
We assume that nodes and their links are fault-free.
The local computation procedures may be randomized by giving each processor
access to its own set of random variables; in this case, we are in the
\emph{randomized} \portnum model as opposed to the \emph{deterministic}
\portnum model.

The running time of an algorithm is the number of synchronous rounds required by all nodes to produce output labels.
If an algorithm's running time is \(T\), we also say that the algorithm has locality \(T\).
Notice that \(T\) can be a function of the size of the input graph. 
We say that a problem \(\problem\) over some graph family \(\FF\) has complexity \(T\) in the \portnum model if there is a \portnum algorithm running in time \(T\) that solves \(\problem\) over \(\FF\), and \(T=T(n)\) is the minimum running time (among all possible algorithms that solve \(\problem\) over \(\FF\)) in the worst case instance of size \(n\).
If the algorithm is randomized, we also require that the failure probability is at most \(1/n\), where \(n\) is the size of the input graph.

We remark that the notion of an (LCL) problem is a graph problem, and does not depend on the specific model of computation we consider (hence, the problem cannot depend on, e.g., port numbers).

\paragraph{\boldmath The \local model.}

The \local model was first introduced by \citet{linial87}: it is just the \portnum model augmented with an assignment of unique identifiers to nodes.
Let \(c \ge 1\) be a constant, and let \(\inLabels\) be a set of input labels.
The input of a problem is defined in the form of a labeled graph \((G,
\inptData)\) where \(G = (V, E)\) is the system graph, \(V\) is the set of
processors (hence it is specified as part of the input), and \(\inptData\colon V
\to [n^c] \times \inLabels\) is an assignment of a \emph{unique} identifier
\(\id(v) \in [n^c]\) and of an input label \(\inLabel(v) \in \inLabels\) to
each processor \(v\).
The output of the algorithm is given in the form of a vector of local output
labels \(\outLabel\colon V \to \outLabels\), and the algorithm is assumed to
terminate once all labels \(\outLabel(v)\) are definitely fixed.
We assume that nodes and their links are fault-free.
The local computation procedures may be randomized by giving each processor
access to its own set of random variables; in this case, we are in the
\emph{randomized} \local (\randlcl) model as opposed to \emph{deterministic}
\local (\detlcl).
Notice that the knowledge of \(n\) makes the randomized \portnum model roughly equivalent to the randomized \local model, as unique identifiers can be produced with high probability.
We say that a problem \(\problem\) over some graph family \(\FF\) has complexity \(T\) in the \local model if there is a \local algorithm running in time \(T\) that solves \(\problem\) over \(\FF\), and \(T=T(n)\) is the minimum running time (among all possible algorithms that solve \(\problem\) over \(\FF\)) in the worst case instance of size \(n\).
If the algorithm is randomized, we also require that the failure probability is at most \(1/n\), where \(n\) is the size of the input graph.

\paragraph{\boldmath The sequential \local model.}
The sequential \local model was first introduced by \cite{ghaffari2017}: it is a sequential version of the \local model.
Nodes are processed according to an adversarial order \(\sigma = v_1, \dots, v_n\).
When processing a node \(v_i\), a \(T\)-round algorithm collects all inputs in the radius-\(T\) neighborhood of \(v_i\) (including the states and the outputs of previously processed nodes in \(\NN_T(v_i)\), i.e., \(v_j \in \NN_T(v_i)\) for \(j < i\)):
we say that such an algorithm has complexity \(T\).
Note that the algorithm might store all inputs in \(\NN_T(v_i)\) in the state of \(v_i\): hence, when processing \(v_i\), it can see the input of \(v_j\), \(j < i\), if and only if there is a subsequence of nodes \(\{v_{h_\ell}\}_{\ell \in [m]}\) with \(j = h_1 < h_2 < \dots < h_m = i\) such that \(v_{h_\ell} \in \NN_T(v_{h_{\ell+1}})\) for all \(\ell \in [m-1]\).

If the algorithm is given an infinite random bit string, we talk about the randomized \slocal model, as opposed to the deterministic \slocal model.
We assume that the adversarial order according to which nodes are processed is oblivious to the random bit string, as in the original definition of the model.
We say that a problem \(\problem\) over some graph family \(\FF\) has complexity \(T\) in the \slocal model if there is an \slocal algorithm running in time \(T\) that solves \(\problem\) over \(\FF\), and \(T=T(n)\) is the minimum running time (among all possible algorithms that solve \(\problem\) over \(\FF\)) in the worst case instance of size \(n\).
If the algorithm is randomized, we also require that the failure probability is at most \(1/n\), where \(n\) is the size of the input graph.

\paragraph{\boldmath The \dlocal model.}
The deterministic \dlocal model was introduced in \cite{akbari_et_al:LIPIcs.ICALP.2023.10}. It is a centralized model of computing where the adversary constructs the graph one edge at a time.
An adversary constructs the input graph by adding one edge at a time (with an ordering of the nodes). 
The algorithm has a global view of the current state of the graph and has to commit for the newly added nodes (according to the ordering), but it has to maintain a feasible solution after each update. 
The algorithm is restricted so that after a modification at node $v$, it can only update the solution within distance $T(n)$ from $v$.
We say that a problem \(\problem\) over some graph family \(\FF\) has complexity \(T(n)\) in the \dlocal model if \(T(n)\) is the minimum function such that there is a \dlocal algorithm running in time \(T(n)\) that solves \(\problem\) over \(\FF\), and \(T=T(n)\) is the minimum running time (among all possible algorithms that solve \(\problem\) over \(\FF\)) in the worst case instance of size \(n\).

\paragraph{\boldmath The \onlinelocal model.}
The (deterministic) \onlinelocal model was introduced in~\cite{akbari_et_al:LIPIcs.ICALP.2023.10}. It is basically equivalent to the \slocal model with global memory. 
More specifically, the \onlinelocal model is a centralized model of computing where the algorithm initially knows only the set of nodes of the input graph $G$. The nodes are processed with respect to an adversarial input sequence $\sigma = v_{1}, v_{2}, \dots, v_{n}$.
The output of $v_{i}$ depends on $G_{i} = G[\bigcup_{j=1}^{i} \NN_T(v_{j})]$, i.e., the subgraph induced by the radius-$T$ neighborhoods of $v_{1}, v_{2}, \dots, v_{i}$ (including all input data), plus all the outputs of previously processed nodes (in order).

We define the \rolcl model as a randomized variant of the \onlinelocal model where the label assigned by the algorithm to $v_{i}$ is a random outcome. Note that this model is oblivious to the randomness used by the algorithm. In particular this means that the graph $G \setminus G_{i}$ cannot be changed depending on the label assigned to $v_{i}$. 
One could also define the \rolcl model in an adaptive manner, but it turns out that this is equivalent to the deterministic \onlinelocal model as we show in \cref{sec:adaptive-rolocal-derandomization}.
We say that a problem \(\problem\) over some graph family \(\FF\) has complexity \(T\) in the \olcl (\rolcl) model if there is an \olcl (\rolcl) algorithm running in time \(T\) that solves \(\problem\) over \(\FF\), and \(T=T(n)\) is the minimum running time (among all possible algorithms that solve \(\problem\) over \(\FF\)) in the worst case instance of size \(n\).
If the algorithm is randomized, we also require that the failure probability is at most \(1/n\), where \(n\) is the size of the input graph.

\section{\boldmath Simulating non-signaling in randomized \onlinelocal}\label{sec:non-signaling-to-rolocal}

\subsection{Framework}\label{ssec:philcl-framework}
In this section we give the necessary framework to define the non-signaling model, and it is largely inspired by the definitions given in \cite{gavoille2009,coiteuxroy2023}. The next definition introduces the concept of outcome.

\begin{definition}[Outcome]\label{def:sim-phi-lcl:outcome}
    Let \(\inLabels\) and \(\outLabels\) be two sets of input and output labels, respectively, and let \(\FF\) be a family of graphs.
    An \emph{outcome} \(\outcome\) over \(\FF\) is a mapping \((G,\inptData) \mapsto \{(\indOutLabel{i}, \outPr_i)\}_{i \in I}\), with \(I\) being a discrete set of indexes, assigning to every input graph \(G \in \FF\) with any input data \(\inptData = (\id:V(G) \to [\abs{V(G)}^c], \inLabel:V(G) \to \inLabels)\), a discrete probability distribution \(\{\outPr_i\}_{i \in I}\) over output vectors \(\indOutLabel{i} : V(G) \to \outLabels \) such that:
    \begin{enumerate}[noitemsep]
        \item for all \(i \in I\), \(\outPr_i > 0\);
        \item \(\sum_{i \in I} \outPr_i = 1 \);
        \item \(\outPr_i\) represents the probability of obtaining \(\indOutLabel{i}\) as the output vector of the distributed system.
    \end{enumerate}
\end{definition}

We say that an outcome \(\outcome\) over some graph family \(\FF\) \emph{solves} problem \(\problem\) over \(\FF\) \emph{with probability \(p\)} if, 
for every \(G \in \FF\) and any input data \(\inptData = (\id, \inLabel)\), it holds that
\[
    \sum_{\substack{(\indOutLabel{i}, \outPr_i) \in \outcome((G,\inptData)) \ : \\ \indOutLabel{i} \in \problem((G,\inLabel))}} \outPr_i \ge p.  
\]
When \(p = 1\), we will just say that \(\outcome\) \emph{solves} problem \(\problem\) over the graph family \(\FF\).

The next computational model tries to capture the fundamental properties of any
\emph{physical} computational model (in which one can run either deterministic,
random, or quantum algorithms) that respects causality. 
The defining property of such a model is that, for any two (labeled) graphs
$(G_1,\inptData_{1})$ and $(G_2,\inptData_{2})$ that share some identical subgraph $(H,y)$, every
node $u$ in $H$ must exhibit identical behavior in $G_1$ and $G_2$ as long as
its \emph{local view}, that is, the set of nodes up to distance $T$ away from
$u$ together with input data and port numbering, is fully contained in $H$.
As the port numbering can be computed with one round of communication through a fixed procedure (e.g., assigning port numbers \(1,2, \dots, \deg(v)\) based on neighbor identifiers in ascending order) and we care about asymptotic bounds, we will omit port numbering from the definition of local view.

The model we consider has been introduced by \cite{gavoille2009}.
In order to proceed, we first define the \emph{non-signaling} property of an outcome.
Let \(T \ge 0\) be an integer, and \(I\) a set of indices.
For any set of nodes \(V\), subset \(S \subseteq V\), and for any input \((G = (V,E), \inptData)\), we define its \emph{\(T\)-local view} as the set 
\[
    \vvv_T(G, \inptData, S) = \left\{(u, \inptData(u)) \ \st \ \exists \ u \in V, \ v \in S \text{ such that } \dist_G(u,v) \le T \right\},
\]
where \(\dist_G(u,v)\) is the distance in \(G\).
Furthermore, for any subset of nodes \(S \subseteq V\) and any output distribution \(\{\indOutLabel{i}, \outPr_i\}_{i \in I}\), we define the \emph{marginal distribution} of \(\{\indOutLabel{i}, \outPr_i\}_{i \in I}\) on set \(S\) as the unique output distribution defined as follows:
for any \(\outLabel : V(G) \to \outLabels\), the probability of \(\outLabel \restriction_{S}\) on \(S\) is given by
\[
  \outPr(\outLabel, S) = \sum_{\substack{i \in I \ :\ \outLabel \restriction _S = \indOutLabel{i} \restriction _ S }} p_i,  
\]
where \(\outLabel \restriction_{S}\) and \(\indOutLabel{i} \restriction _ S \) are the restrictions of \(\outLabel\) and \(\indOutLabel{i}\) on \(S\), respectively. 

\begin{definition}[Non-signaling outcome]\label{def:ns-outcome}
    Let \(\FF\) be a family of graphs.
    An outcome \(\outcome: (G, \inptData) \mapsto \{(\indOutLabel{i}, \outPr_i )\}_{i \in I}\) over \(\FF\) is \emph{non-signaling} beyond distance \(T = T(G, \inptData)\) if for 
    any pair of inputs \((G_1 = (V_1,E_1), \inptData_{1})\), \((G_2 = (V_2,E_2), \inptData_{2})\), with the same number of nodes, such that \(\vvv_{T(G_1, \inptData_1)} (G_1, \inptData_{1}, S)\) is isomorphic to \( \vvv_{T(G_1, \inptData_1)} (G_2, \inptData_{2}, S)\) and \(G_1,G_2 \in \FF\), the output distributions corresponding to these inputs have identical marginal distributions on the set \(S\).
\end{definition}
\cref{def:ns-outcome} is also the more general definition for the locality of an outcome: an outcome \(\outcome\) has locality \(T\) if it is non-signaling beyond distance \(T\).

\paragraph{\boldmath The \philcl model.} 

The \philcl model is a computational model that produces non-signaling outcomes over some family of graphs \(\FF\). 
Let \(p \in [0,1]\).
A problem \(\problem\) over some graph family \(\FF\) has complexity \(T\) (and success probability \(p\)) if there exists an outcome \(\outcome\) that is non-signaling beyond distance \(T\) which solves \(\problem\) over \(\FF\) (with probability at least \(p\)), and \(T=T(n)\) is the minimum ``non-signaling distance'' (among all possible outcomes that solve \(\problem\) over \(\FF\)) in the worst case instance of size \(n\).

As every (deterministic or randomized) algorithm running in time at most \(T\) in the \local model produces an outcome which has locality \(T\), we can provide lower bounds for the \local model by proving them in the \philcl model.

Notice that algorithms in the \local model can always be thought of as producing outputs for \emph{any} input graph: when the computation at any round is not defined for some node, we can make the node output some garbage label, say \(\perp\). 
If we require that also outcomes are defined for every possible graph, then we are restricting the power of \philcl because outcomes must be defined accordingly.
This gives rise to a slightly weaker model, the \nslclFull, which was considered in other works such as \cite{coiteuxroy2023} and is still stronger than any classical or quantum variation of the \local model.

\begin{theorem}
  Let \(\problem\) be any labeling problem over any family of graphs \(\FF\). 
  Let \( \outcome: (G, \inLabel) \mapsto \{(\indOutLabel{h}, \outPr_i)\}_{h \in H} \) be an outcome over \(\FF\) which solves \(\problem\) with failure probability at most \(\varepsilon\) and is non-signaling at distance greater than \(T\).
  There exists a \rolcl algorithm \(\algo\) with complexity \(T\) that solves \(\problem\) over \(\FF\) and has failure probability at most \(\varepsilon\).
\end{theorem}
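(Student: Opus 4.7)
The plan is to construct an \rolcl algorithm $\algo$ whose joint output distribution on any input $(G, \inptData)$ with $|V(G)| = n$ equals exactly the distribution $\outcome(G, \inptData)$. From this, the failure bound $\varepsilon$ is inherited directly from $\outcome$, and locality $T$ comes from the fact that only the radius-$T$ ball around the currently-processed node is ever read. Concretely, $\algo$ begins by fixing an enumeration of all $n$-node inputs $(H, \inptData')$ with $H \in \FF$; up to label-preserving isomorphism this set is finite. At step $i$, once the adversary has revealed the radius-$T$ local views of $v_1, \dots, v_i$, the algorithm scans this enumeration for the first witness $(H_i, \inptData_i)$ such that $\vvv_T(H_i, \inptData_i, \{v_1, \dots, v_i\})$ matches what has been revealed so far under the identity mapping on $\{v_1, \dots, v_i\}$. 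It then samples a label for $v_i$ from the conditional distribution induced by $\outcome(H_i, \inptData_i)$ on $v_i$ given the already-chosen outputs $\outLabel(v_1), \dots, \outLabel(v_{i-1})$.

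A valid witness exists at every step, since the true instance $(G, \inptData)$ is always itself one. The first substantive check is that the conditional distribution $\algo$ samples from is well-defined, i.e.\ independent of which witness $(H_i, \inptData_i)$ is chosen. This is precisely where the non-signaling property is used: any two valid witnesses share, by construction, the same $T$-local view on $S = \{v_1, \dots, v_i\}$, and so by \cref{def:ns-outcome} they produce the same marginal of $\outcome$ on $S$; in particular, the same conditional on $v_i$ given fixed outputs for $v_1, \dots, v_{i-1}$.

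Correctness is then a straightforward induction on $i$. Assuming the joint distribution of $(\outLabel(v_1), \dots, \outLabel(v_{i-1}))$ produced by $\algo$ equals the marginal of $\outcome(G, \inptData)$ on $\{v_1, \dots, v_{i-1}\}$, the chain rule together with the witness-independence above gives that, after step $i$, the joint distribution of $(\outLabel(v_1), \dots, \outLabel(v_i))$ equals the marginal of $\outcome(H_i, \inptData_i)$ on $\{v_1, \dots, v_i\}$, which, applying non-signaling once more to the pair $(G, H_i)$, equals the marginal of $\outcome(G, \inptData)$ on $\{v_1, \dots, v_i\}$. Taking $i = n$, $\algo$'s full output distribution coincides with $\outcome(G, \inptData)$, so the failure probability is at most $\varepsilon$. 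Obliviousness of the \rolcl adversary is essential: the graph $G$ and the order $\sigma$ are fixed before $\algo$ samples anything, which is what makes the induction well-posed (an adaptive adversary could exploit the random outputs to invalidate previously chosen witnesses).

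The main thing to be careful with is the bookkeeping between identifiers, isomorphism classes, and the existence of witnesses, since the non-signaling definition asks for local views to be isomorphic as \emph{labeled} structures and $\algo$'s search must match on the nose via the identity of the revealed nodes. Once this is pinned down, everything else is a direct consequence of non-signaling. Runtime and memory of $\algo$ may be astronomical, but this is immaterial: the \rolcl model charges only locality, not computation.
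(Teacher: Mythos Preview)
Your proposal is correct and follows essentially the same approach as the paper: both find, at each step~$i$, a witness graph $H_i \in \FF$ whose $T$-local view on $\{v_1,\dots,v_i\}$ matches what has been revealed, and sample $v_i$'s label from the conditional of $\outcome(H_i,\inptData_i)$ given the previously fixed labels. Your exposition is arguably cleaner in two respects: you make the witness-independence of the conditional distribution explicit (the paper just remarks that ``the specific choice of $H_{i+1}$ does not matter''), and you frame the correctness argument as a clean induction showing that the joint output distribution after step~$i$ equals the marginal of $\outcome(G,\inptData)$ on $\{v_1,\dots,v_i\}$, whereas the paper carries out an explicit telescoping calculation to the same end.
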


\begin{proof}
We want to design a \rolcl algorithm \(\algo\) with complexity \(T\) that solves \(\problem\) over \(\FF\) and has failure probability at most \(\varepsilon\) that somehow simulates \(\outcome\). 
We need to assume that the number of nodes of the input graph is known by \(\algo\).

Fix any graph \( G \in \FF \) of \(n\) nodes and any input labeling \(\inLabel\), and fix any sequence of nodes \( v_1, \dots, v_n \) the adversary might choose to reveal to the algorithm. 

The adversary initially shows \( G[\NN_T(v_1)] \) to the algorithm (including
input labels and identifiers) and asks it to label \( v_1 \).
In order for the algorithm to choose an appropriate output, it can arbitrarily
choose a graph \(H_1 \in \FF\) with \(n\) nodes that contains a subgraph
isomorphic to \( G[\NN_T(v_1)] \) (again, including input labels and
identifiers). 
Notice that \(H_1\) necessarily exists since \(G\) itself is such a graph.
We stress that the arbitrariness in the choice of $H_1$ is thanks to the
non-signaling property, which ensures that the restriction of the output
distribution \( \outcome(H_1, \inLabel) \) over \({\NN_T(v_1)} \) does not
change if the topology of the graph outside \( G[\NN_T(v_1)] \) differs (i.e.,
no matter how the adversary chooses it).
Indeed, if $\outcome$ did not have the non-signaling property, then it would be
impossible to sample from it correctly since the marginal output distribution on
$G[\NN_T(v_1)]$ would depend on the topology of the graph beyond it, which the
algorithm still has no knowledge of.

For any \(\outLabel: V(G) \to \outLabels\), the probability that \(\algo\) labels \(v_1\) with \(\outLabel(v_1)\) is
\begin{align*}
  \outPr({\outLabel},v_1) = \sum_{k: \ \outLabel \restriction_{\{v_1\}} = \indOutLabel{k}\restriction_{\{v_1\}}} \outPr_k,
\end{align*}
where \(\{(\indOutLabel{k},\outPr_k)\}_{k \in K_1} = \outcome(H_1, \inLabel)\).

Let \(\Lambda_i\) be the random variable yielding the label assigned to \(v_i\) by \(\algo\).
In general, for any \(\outLabel : V(G) \to \outLabels\) let \(p(\outLabel, v_i)\) denote the probability that \(\Lambda_i = \outLabel(v_i)\). 
Assume now that \(G[\NN_T(v_{i+1})]\) is shown to the algorithm.
Conditional on \(\Lambda_1, \dots, \Lambda_i\), for any \(\outLabel : V(G) \to \outLabels\) such that \(\outLabel(v_j) = \Lambda_j\) for all \(j = 1, \dots, i\), the probability that \(\Lambda_{i+1} = \outLabel(v_{i+1})\) is
\begin{align*}
  \outPr({\outLabel},v_{i+1}) = \sum_{\substack{k: \ \outLabel \restriction_{\{v_{j}\}} = \indOutLabel{k}\restriction_{\{v_{j}\}} \\ \forall \ 1 \le j \le i+1}} \frac{\outPr_k}{ \prod_{1 \le l \le i}\outPr({\outLabel},v_l)} ,
\end{align*}
where \(\{(\indOutLabel{k},\outPr_k)\}_{k \in K_{i+1}} = \outcome(H_{i+1}, \inLabel)\) for any graph \(H_{i+1} \in \FF\) of \(n\) nodes that contains a subgraph isomorphic to \(G[\cup_{i=1}^{i+1} \NN_T(v_j)]\) (including input labels and identifiers).
As before, \(H_{i+1}\) may be chosen arbitrarily due to the non-signaling
property. 

Now consider \(\outcome(G, \inLabel) = \{(\indOutLabel{h}, p_h)\}_{h \in H}\) for the right input graph \(G\), and take any \[(\indOutLabel{h^\star}, p_{h^\star}) \in \outcome(G, \inLabel).\] 

The probability that the outcomes of \(\Lambda_1, \dots, \Lambda_n\) give exactly \(\indOutLabel{h^\star}\) is
\begin{align*}
  & \pr{\Lambda_1 = \indOutLabel{h^\star}(v_1), \dots, \Lambda_n = \indOutLabel{h^\star}(v_n)} \\
  = \ & \pr{\Lambda_n = \indOutLabel{h^\star}(v_n) \st \Lambda_1 = \indOutLabel{h^\star}(v_1), \dots, \Lambda_{n-1} = \indOutLabel{h^\star}(v_{n-1})}  \\ 
  \cdot \ & \pr{\Lambda_{n-1} = \indOutLabel{h^\star}(v_{n-1}) \st \Lambda_1 = \indOutLabel{h^\star}(v_1), \dots, \Lambda_{n-2} = \indOutLabel{h^\star}(v_{n-2})}  \\
  & \dotsc \\
  \cdot \ & \pr{\Lambda_1 = \indOutLabel{h^\star}(v_1)} \\
  = \ & \outPr({\indOutLabel{h^\star}},v_{1})  \cdot \dotsc \cdot \outPr({\indOutLabel{h^\star}},v_{n}) \\
  = \ & \prod_{1 \le l \le n-1}\outPr({\outLabel},v_l) \cdot \sum_{\substack{k_n: \ \indOutLabel{h^\star} \restriction_{\{v_{j}\}} = \indOutLabel{k_j}\restriction_{\{v_{j}\}} \\ \forall \ 1 \le j \le n}} \frac{\outPr_{k_n}}{\prod_{1 \le l \le n-1}\outPr({\outLabel},v_l)} \\
  = \ & \sum_{\substack{k_n: \ \indOutLabel{h^\star} \restriction_{\{v_{j}\}} = \indOutLabel{k_j}\restriction_{\{v_{j}\}} \\ \forall \ 1 \le j \le n}} {\outPr_{k_n}}.
\end{align*}
Notice that 
\[
  \sum_{\substack{k_n: \ \indOutLabel{h^\star} \restriction_{\{v_{j}\}} = \indOutLabel{k_j}\restriction_{\{v_{j}\}} \\ \forall \ 1 \le j \le n}} {\outPr_{k_n}} = \outPr_{h^\star}
\]
as \(H_n\) is necessarily isomorphic to \(G\).
By the hypothesis,
\[
  \sum_{h: \ \indOutLabel{h} \text{ is valid for } (G, \inLabel)} \outPr_h \ge 1 - \varepsilon,
\]
implying that \(\algo\) succeeds with probability at least \(1 - \varepsilon\).
\end{proof}

\section{Bounded-dependence model can break symmetry}\label{sec:fin-dep}

For any graph \(G = (V,E)\), a \emph{random process} (or \emph{distribution}) \emph{on the vertices} of \(G\) is a family of random variables \(\{X_v\}_{v\in V}\), indexed by \(V\), while a \emph{random process on the edges} of \(G\) is a family of random variables \(\{X_e\}_{e\in E}\) indexed by \(E\).
More generally, a random process on \(G\) is a family of random variables \(\{X_y\}_{y \in V \cup E}\) indexed by \(V \cup E\).
The variables of a random process live in the same probability space and take values in some label set \(\labels\).
In general, we will consider random processes over vertices of graphs unless otherwise specified.

We now introduce the notion of \(T\)-dependent distribution. 
To do so, we extend the definition of distance to edges as follows: 
For any two edges \(e = \{v_1,v_2\},e'=\{u_1,u_2\} \in E\), \(\dist_G(e,e') = \min_{i,j \in [2]} \dist_G(v_i,u_j)\). 
Similarly, the distance between any edge \(e = (v_1,v_2)\) and a vertex \(v\) is \(\dist_G(e,v) = \min_{i \in [2]} \dist_G(v_i,v)\).
The definition extends easily to subsets containing vertices and edges.

\begin{definition}[\(T\)-dependent distribution]
  Let \(T \in \nat\) be a natural number and \(G = (V,E)\) be any graph.
  A random process \(\{X_v\}_{v\in V}\) on the vertices of \(G\) is said to be a \(T\)-dependent distribution if, for all subsets \(S,S' \subseteq V\) such that \(\dist_G(S,S') > T\), the two processes \(\{X_v\}_{v \in S}\) and \(\{X_v\}_{v \in S'}\) are independent.
  Analogous definitions hold for random processes on the edges of a graph and for random processes on the whole graph.
\end{definition}

A way to define \(T\)-dependent distributions that uses the same notation as \cref{ssec:philcl-framework} is by describing the output probability of global labelings: Let \(I\) be a discrete set of indices, and \(G = (V,E)\) some graph.
A distribution \( \{(\lbl_{i}, \outPr_i)\}_{i\in I}\) over output labelings, where \(\lbl_{i} : V \to \labels\) is an output labeling, is \emph{\(T\)-dependent} if the following holds:
for all output labelings \(\lbl\) in \( \{(\lbl_{i}, \outPr_i)\}\), for every two subsets of nodes \(S_1,S_2 \subseteq V(G)\) such that \(\dist_G(S_1,S_2) > T\), we have that 
\[
  \outPr(\lbl, S_1 \cup S_2) = \outPr(\lbl, S_1) \cdot \outPr(\lbl, S_2).
\]
Notice that outcomes, as defined in \cref{def:sim-phi-lcl:outcome}, output a random process for every input. 
Often, we have a family of graphs of arbitrarily large size \(n\) on which a \(T(n)\)-dependent distribution is defined.\footnote{Note that the dependency \(T\) might depend on other graph parameters such as the maximum degree \(\maxDeg\), the chromatic number \(\chi\), etc.}

Now we define the hypothetical computational model that outputs \(T\)-dependent distributions on some input graph.

\paragraph{The \boundep.}

The \boundep is a computational model that, for a given family of graphs \(\FF\), produces an outcome (as defined in \cref{def:sim-phi-lcl:outcome}) over \(\FF\) that is non-signaling beyond distance \(T = T(G,\inptData)\) (as defined in \cref{def:ns-outcome}), where \(G \in \FF\) and \(\inptData\) represents the input, which, in turn, produces \(T(G,\inptData)\)-dependent distributions. 
The random processes produced by an outcome are said to be \emph{finitely-dependent} if \(T = 
\myO{1}\) for all graphs in \(\FF\) and all input data.
If an outcome with the aforementioned properties solves a problem \(\problem\) over a graph family \(\FF\) with probability at least \(p\), we say that the pair \((\problem, \FF)\) has complexity \(T\) (with success probability \(p\)), and \(T=T(n)\) is the minimum dependence (among all possible distributions solving \(\problem\) over \(\FF\)) in the worst case instance of size \(n\).
We remark that \(T\) is also called the \emph{locality} or the \emph{complexity} of the corresponding output distributions.

We are particularly interested in \(T\)-dependent distributions  that satisfy invariance properties: We say that a random process \(\{X_v\}_{v \in V}\) over vertices of a graph \(G=(V,E)\) is \emph{invariant under automorphisms} if, for all automorphisms \(f: V \to V\) of \(G=(V,E)\), the two processes \(\{X_v\}_{v \in V}\) and \(\{Y_v = X_{f(v)}\}_{v \in V}\) are equal in law.
The definition is easily extendable to random processes on edges and random processes on the whole graph.

A stronger requirement is the \emph{invariance under subgraph isomorphisms}: 
Suppose we have an outcome \(\outcome: (G,\inptData) \mapsto \{X_v\}_{v \in V(G)}\) that maps each input graph \(G\) from a family of graphs \(\FF\) and any input data \(\inptData\) to a \(T=T(G,\inptData)\)-dependent distribution over \(G\) from a family of random processes \(\RR\).
We say that the random processes over vertices in 
\(\RR\) are invariant under subgraph isomorphisms if, given any two graphs \(G_1, G_2 \in \FF\) of size \(n_1,n_2\) with associated process 
\(\{X^{(1)}_v\}_{v \in V(G_1)}\) and \(\{X^{(2)}_v\}_{v \in V(G_2)}\), 
and any two subgraphs \(H_1 \subseteq G_1, H_2 \subseteq G_2\) such that 
\(G_1[\NN_{T(n_1)}(H_1)]\) and \(G_2[\NN_{T(n_2)}(H_2)]\) are isomorphic (with the isomorphism that brings \(H_1\) into \(H_2\)), then 
\(\{X^{(1)}_v\}_{v \in V(H_1)}\) and 
\(\{X^{(2)}_v\}_{v \in V(H_2)}\) are equal in law.\footnote{We remark that, as opposed to \cref{def:ns-outcome}, the isomorphism must preserve the input but \emph{not} the node identifiers.}
Trivially, invariance under subgraph isomorphisms implies invariance under automorphisms and the non-signaling property.
This definition is again easily extendable to the case of families of random processes over edges or over graphs in general.

The baseline of our result is a finitely-dependent distribution provided by \cite{holroyd2016,holroyd2018} on paths and cycles.

\begin{theorem}[Finitely-dependent coloring of the integers and of cycles \cite{holroyd2016,holroyd2018}]\label{thm:fin-dep:baseline}
  Let \(G = (V,E)\) be a graph that is either a cycle with at least 2 nodes or has \(V = \intg\) and \(E = \{\{i, i+1\} : i \in \intg\}\).
  For \((k,q) \in \{(1,4),(2,3)\}\), there exists a \(k\)-dependent distribution \(\{X^{(G)}_v\}_{v\in V(G)}\) that gives a \(q\)-coloring of \(G\).
  Furthermore, such distributions can be chosen to meet the following properties:
  If \(\{X_i\}_{i \in [n]}\) is the \(k\)-dependent \(q\)-coloring of the \(n\)-cycle and \(\{Y_i\}_{i \in \intg}\) is the \(k\)-dependent \(q\)-coloring of the integers, then \(\{X_i\}_{i \in [n-k]}\) is equal in law to \(\{Y_i\}_{i \in [n-k]}\).
  If \(n=2\), then the finitely-dependent colorings on the \(2\)-cycle and on a path of 2 nodes are identical in distribution. 
\end{theorem}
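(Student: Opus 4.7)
My plan is to first construct the distribution on $\mathbb{Z}$, then derive the cycle distribution from it, and finally establish the coupling property via a marginalization argument that exploits the finite range of dependence. The integer case is by far the hard part; once it is in hand, the cycle and the coupling claim should be obtainable by surgery and bookkeeping.

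\textbf{Construction on $\mathbb{Z}$.} A first attempt would be a factor of iid: assign iid uniforms $\{U_i\}_{i \in \mathbb{Z}}$ to each integer and set $Y_i = \phi(U_{i-R}, \ldots, U_{i+R})$ for some deterministic rule $\phi$ of finite radius~$R$. Such a construction is automatically $2R$-dependent and shift-invariant. The obstacle is that no finite-radius factor of iid can produce a proper $q$-coloring for $q$ as small as $3$ or $4$; this is a known impossibility, so a finer approach is necessary. Following Holroyd and Liggett, I would instead build the process from an auxiliary block structure: a stationary renewal process on $\mathbb{Z}$ with carefully chosen inter-arrival law cuts $\mathbb{Z}$ into independent random blocks, and within each block the colors are assigned by an internal rule that (i)~avoids monochromatic edges inside the block and (ii)~glues consistently with the color chosen at the boundary of the adjacent block. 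Independence of non-adjacent blocks would give $k$-dependence once the construction is tuned so that only immediately-neighboring blocks share information. The nontrivial input is the algebraic choice of the inter-arrival distribution and boundary-color law guaranteeing that the \emph{marginal} on $\{1,\ldots,q\}^{\mathbb{Z}}$ is stationary and proper; I would rely on the explicit constructions in \cite{holroyd2016,holroyd2018}.

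\textbf{From $\mathbb{Z}$ to cycles and the coupling property.} Given the stationary $k$-dependent $q$-coloring $\{Y_i\}_{i \in \mathbb{Z}}$, I would obtain the cycle distribution $\{X_i\}_{i \in [n]}$ by taking the restriction $\{Y_1,\ldots,Y_n\}$ and performing a local surgery near the seam $\{n,1\}$ so that the wraparound edge becomes proper and the block structure closes up consistently. The surgery only touches coordinates within distance $k$ of the seam, because $k$-dependence means the proper-coloring constraints elsewhere are unaffected by what happens at the seam. This is exactly what gives the coupling property: the coordinates $\{Y_1, \ldots, Y_{n-k}\}$ depend (through the factor-of-renewal representation) only on randomness localized within distance $k$ of $\{1,\ldots,n-k\}$, i.e., within $\{1-k,\ldots,n\}$, which is disjoint from the surgery window on the side of the seam (for $n$ sufficiently larger than $k$, and handled trivially otherwise). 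Hence $\{X_i\}_{i \in [n-k]}$ and $\{Y_i\}_{i \in [n-k]}$ are equal in law.

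\textbf{Main obstacle.} The essential difficulty is the construction on $\mathbb{Z}$: producing a stationary $k$-dependent proper $q$-coloring with $(k,q)\in\{(1,4),(2,3)\}$ is surprising because no finite-radius factor of iid can achieve it, so the construction has to be a genuinely non-block-factor object. This is the technical heart of \cite{holroyd2016,holroyd2018}, and I would quote it rather than reprove it. Everything else, including the invariance of the initial segment on the cycle versus on $\mathbb{Z}$, reduces to locality bookkeeping once the integer process is available.
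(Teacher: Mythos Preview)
The paper does not prove this theorem; it is quoted as a black box from \cite{holroyd2016,holroyd2018}. So there is no ``paper's own proof'' to compare against beyond the citation itself. That said, your sketch has a genuine gap worth flagging.

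Your argument for the coupling property (that $\{X_i\}_{i\in[n-k]}$ agrees in law with $\{Y_i\}_{i\in[n-k]}$) rests on the claim that ``the coordinates $\{Y_1,\ldots,Y_{n-k}\}$ depend only on randomness localized within distance~$k$.'' But you have just emphasized, correctly, that the process is \emph{not} a finite-radius block factor. The property of being $k$-dependent is a statement about the \emph{output} law: sets of outputs at distance more than~$k$ are independent. It does \emph{not} say that each $Y_i$ is a measurable function of randomness in a window of radius~$k$. In a renewal-type construction the block containing site~$i$ can be arbitrarily long, so $Y_i$ genuinely depends on randomness far away; the $k$-dependence arises from algebraic cancellation in the marginals, not from literal locality of the construction. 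Consequently your surgery argument---modify the seam and claim nothing else changes---does not go through as stated: the seam modification can in principle perturb the law of distant coordinates even though the output process is $k$-dependent.

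The actual route in \cite{holroyd2018} is to give a direct construction on the $n$-cycle (via the same insertion/buildup mechanism, run on a finite cyclic index set) and then verify the marginal agreement with the $\mathbb{Z}$ process by an explicit computation of finite-dimensional distributions, not by a locality-of-randomness argument. If you want to keep a surgery flavor, you would need to argue at the level of distributions rather than sample paths: show that the cycle law and the restricted $\mathbb{Z}$ law satisfy the same recursive/consistency relations on $[n-k]$, which is essentially what Holroyd--Liggett do.
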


It is immediate that the disjoint union of any number of paths and cycles (even countably many) admits such distributions as well.

\begin{corollary}\label{cor:fin-dep:baseline}
  Let \(\FF\) be the family of graphs formed by the disjoint union (possibly uncountably many) paths with countably many nodes and cycles of any finite length.
  For all \(G \in \FF\) and for \((k,q) \in \{(1,4),(2,3)\}\), there exists a \(k\)-dependent distribution \(\{X^{(G)}_v\}_{v\in V(G)}\) that gives a \(q\)-coloring of \(G\).
  Furthermore, such distributions can be chosen to meet all the following properties:
  \begin{enumerate}[noitemsep]
    \item On each connected component \(H \subseteq G\), \(\{X^{(G)}_v\}_{v\in V(H)}\) is  given by \cref{thm:fin-dep:baseline}.
    \item \(\{\{X^{(G)}_v\}_{v\in V(G)}\}_{G \in \FF}\) is invariant under subgraph isomorphisms.
  \end{enumerate}
\end{corollary}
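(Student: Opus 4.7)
The plan is to define the desired distribution on $G\in\FF$ as the product distribution obtained by sampling independently on each connected component of $G$, using the distribution supplied by \cref{thm:fin-dep:baseline} on each component (which is either a finite cycle with at least three nodes or the bi-infinite path $\intg$). Property~1 is immediate from this construction. The remaining work is to verify $k$-dependence and invariance under subgraph isomorphisms, and in each case the point is to reduce to a per-component statement.

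For $k$-dependence, suppose $S,S'\subseteq V(G)$ with $\dist_G(S,S')>k$. For every connected component $C$ of $G$, the subsets $S\cap V(C)$ and $S'\cap V(C)$ are at distance greater than $k$ within $C$, so by \cref{thm:fin-dep:baseline} the restrictions of $\{X_v^{(G)}\}_{v\in V(C)}$ to these two sets are independent. Since the component processes are jointly independent by construction, the collections $\{X_v^{(G)}\}_{v\in S}$ and $\{X_v^{(G)}\}_{v\in S'}$ are independent as well.

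For invariance under subgraph isomorphisms, fix $G_1,G_2\in\FF$ with subgraphs $H_1\subseteq G_1$, $H_2\subseteq G_2$, and an isomorphism $\phi$ of $G_1[\NN_k(H_1)]$ onto $G_2[\NN_k(H_2)]$ carrying $H_1$ onto $H_2$. I would decompose $H_1$ into connected components and argue one component at a time. Fix such a connected component $K\subseteq H_1$, and let $C_1$ be the component of $G_1$ containing $K$ and $C_2$ the component of $G_2$ containing $\phi(K)$. I then split into two cases. If $\NN_k(K)\supseteq C_1$, then $C_1$ is a finite cycle entirely captured by the isomorphism, so $\phi$ restricts to an isomorphism of $C_1$ onto $C_2$; the rotation/reflection invariance inherent in Holroyd's cycle distribution then yields that the law of $\{X_v^{(G_1)}\}_{v\in V(K)}$ pushed forward by $\phi$ equals the law of $\{X_v^{(G_2)}\}_{v\in V(\phi(K))}$. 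Otherwise $\NN_k(K)$ is a proper subgraph of $C_1$, hence a finite path-segment, and the same holds on the $G_2$ side. Here I invoke the consistency clause of \cref{thm:fin-dep:baseline}: after using rotation invariance to slide any length-$(n-k)$ arc of a cycle onto the prefix $[n-k]$, the restriction of the cycle distribution matches the restriction of the $\intg$-distribution; translation invariance on $\intg$ then collapses all four combinations (cycle/cycle, cycle/path, path/cycle, path/path) to one and the same law, depending only on the local isomorphism type of $\NN_k(K)$. Finally, independence across connected components on both sides, together with the fact that $\phi$ induces a bijection between the components of $H_1$ and those of $H_2$, glues the per-component equalities in law into the full equality needed.

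The main obstacle is the second case of the invariance argument, where $C_1$ and $C_2$ need not be of the same type (one could be a long cycle and the other the bi-infinite path). Handling this cleanly requires combining \cref{thm:fin-dep:baseline}'s consistency clause with the rotation and translation invariance of the underlying Holroyd distributions to show that the local law inside any path-segment window depends only on the isomorphism type of that window, irrespective of whether it sits inside a cycle or inside $\intg$. Once this locality of the law is established, matching it across $\phi$ is routine.
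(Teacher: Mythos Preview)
Your overall strategy---take the product over components and then verify $k$-dependence and subgraph-isomorphism invariance---is exactly the intended one; the paper itself offers no proof beyond the one-line remark that the corollary is immediate. The $k$-dependence check and your per-component Cases~1 and~2 are fine.

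The gap is in the final gluing step. You decompose $H_1$ into its connected components $K$ and then claim that ``independence across connected components on both sides \dots\ glues the per-component equalities in law into the full equality needed.'' But two distinct components of $H_1$ may lie in the \emph{same} component $C_1$ of $G_1$ at distance $\le k$: e.g.\ take $G_1$ a long cycle, $k=2$, and $H_1=\{u,w\}$ two isolated vertices at distance~$2$. Then $X^{(G_1)}_u$ and $X^{(G_1)}_w$ are not independent, so the product of marginals is not the joint law and the gluing fails. (Correspondingly, $\phi$ need not send components of $H_1$ into separate components of $G_2$.)

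The fix is to decompose by the connected components $N_1,N_2,\dotsc$ of $G_1[\NN_k(H_1)]$ rather than of $H_1$. Writing $K_i:=V(H_1)\cap N_i$, one has $N_i=\NN_k(K_i)$, and any path of length $\le k$ from $K_i$ to $K_j$ would lie entirely in $\NN_k(H_1)$, forcing $N_i=N_j$; hence $\dist_{G_1}(K_i,K_j)>k$ for $i\neq j$, and now $k$-dependence (together with independence across components of $G_1$) \emph{does} give independence of the pieces. Within each $N_i$ your Case~1/Case~2 analysis still applies---$N_i$ is a single interval, and even though $K_i$ need not be connected, $K_i$ sits inside the middle $|N_i|-2k$ vertices of that interval, which can be rotated into $[1,n-k]$ so that the consistency clause of \cref{thm:fin-dep:baseline} matches the joint law on $K_i$ to that on $\intg$. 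Since $\phi$ carries the $N_i$ bijectively to the components of $G_2[\NN_k(H_2)]$, the same decomposition works on the $G_2$ side and the gluing goes through.
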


We will use this result to provide ``fake local identifiers'' to the nodes of the graph to simulate an \(\myO{\log* n}\)-round LOCAL algorithm through a random process with constant dependency.
In order to do that, we introduce some results on the composition of \(T\)-dependent distributions.

Trivially, every \(T\)-round (deterministic or randomized) \PN algorithm defines a \(2T\)-dependent distribution over the input graph.
Furthermore, if the underlying \PN model has access to random bits, the distribution can be made invariant under subgraph isomorphisms (provided that, whenever a distribution over input labelings is given together with the input graph, such distribution is also invariant under subgraph isomorphisms). 
The composition of the \(T_2\)-dependent distribution obtained by a  \(T_2\)-round \PN algorithm and any \(T_1\)-dependent distribution yields a \((2T_2 + T_1)\)-dependent distribution.

\begin{lemma}\label{lemma:fin-dep:combination}
  Let \(\labels^{(1)}\) and \(\labels^{(2)}\) be two label sets with countably many labels.
  Let \(\FF\) be any family of graphs, and let \(\RR = \{\{X_v\}_{v \in V(G)} : G \in \FF\}\) be a family of \(T_1\)-dependent distributions taking values in \(\labels^{(1)}\), where \(T_1\) might depend on parameters of \(G\). 
  Consider any \(T_2\)-round \PN algorithm \(\AA\) that takes as an input \(G \in \FF\) labelled by \(\{X_v\}_{v\in V(G)}\): it defines another distribution \(\{Y_v\}_{v\in V(G)}\) taking values in some label set \(\labels^{(2)}\).
  Then, the following properties hold:
  \begin{enumerate}
    \item  \(\{Y_v\}_{v \in V(G)}\) is a \((2T_2 + T_1)\)-dependent distribution on \(G\).
    \item  If the processes in \(\RR\) are invariant under subgraph isomorphisms,
    \(T_2\) is constant, and \(\AA\) does not depend on the size of the input
    graph and permutes port numbers locally u.a.r.\ at round 0, then the
    processes in \(\{\{Y_v\}_{v \in V(G)} : G \in \FF\}\) are invariant under
    subgraph isomorphisms.
  \end{enumerate}
\end{lemma}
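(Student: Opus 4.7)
The argument splits into the two assertions, both hinging on the fundamental property of $T_2$-round \portnum algorithms: the state of a node $v$ after $T_2$ rounds, and hence its output $Y_v$, is a (randomized) function of only the $T_2$-ball around $v$---its topology, port numbering, input labels $\{X_u\}_{u \in \NN_{T_2}(v)}$, and the private random bits of the nodes inside. Consequently, for any $S \subseteq V(G)$, the joint law of $\{Y_v\}_{v \in S}$ is determined entirely by the corresponding data on $\NN_{T_2}(S)$.

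For the dependence claim, I would fix $S_1, S_2 \subseteq V(G)$ with $\dist_G(S_1, S_2) > 2T_2 + T_1$. The triangle inequality gives $\dist_G(\NN_{T_2}(S_1), \NN_{T_2}(S_2)) > T_1$, so $T_1$-dependence of $\{X_v\}$ yields independence of $\{X_u\}_{u \in \NN_{T_2}(S_1)}$ and $\{X_u\}_{u \in \NN_{T_2}(S_2)}$. The private random bits of $\AA$ at distinct nodes are independent as well, so $\{Y_v\}_{v \in S_1}$ and $\{Y_v\}_{v \in S_2}$ are functions of disjoint, mutually independent sources of randomness, hence independent.

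For the invariance statement, let $G_1, G_2 \in \FF$ and $H_i \subseteq G_i$ admit an isomorphism $f \colon G_1[\NN_{2T_2 + T_1}(H_1)] \to G_2[\NN_{2T_2 + T_1}(H_2)]$ sending $H_1$ onto $H_2$. Its restriction to $\NN_{T_1 + T_2}(H_i)$ is still an isomorphism of those smaller neighborhoods (any shortest path of length at most $T_1 + T_2$ from $H_i$ stays within the smaller ball), so applying the invariance of $\{X_v\}$ to the subgraphs $\NN_{T_2}(H_i)$ at its own dependency $T_1$, the input restrictions $\{X^{(1)}_u\}_{u \in \NN_{T_2}(H_1)}$ and $\{X^{(2)}_u\}_{u \in \NN_{T_2}(H_2)}$ are equal in law under $f$. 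Combined with the hypotheses that $\AA$ is oblivious to $n$, operates in the \portnum model (and so has no identifier dependence), uniformly permutes port numbers locally at round $0$, and draws independent private random bits per node, $Y_v$ becomes a fixed randomized function only of the input-labeled topology of $\NN_{T_2}(v)$, with port numbers marginalized away. Equality in law of the inputs then transports to equality in law of $\{Y^{(1)}_v\}_{v \in V(H_1)}$ and $\{Y^{(2)}_v\}_{v \in V(H_2)}$ modulo $f$.

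The most delicate point is the handling of port numbers: two isomorphic subgraphs embedded in different ambient graphs may carry incompatible port numberings, and a generic \portnum algorithm is sensitive to these. The assumption that $\AA$ performs a uniform local permutation of port numbers at round $0$ is precisely the device that absorbs this discrepancy, marginalizing the output distribution over a uniform port numbering so that only the bare input-labeled topology remains. Checking that these local permutations compose coherently along shared edges, and that the three sources of randomness (port permutations, private random bits, and $\{X_v\}$) decouple correctly across the relevant neighborhoods, is where the proof demands the most bookkeeping.
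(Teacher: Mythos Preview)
Your proposal is correct and follows essentially the same approach as the paper: both arguments rest on the locality of $T_2$-round \portnum algorithms (output at $v$ depends only on the $T_2$-ball), then combine the triangle-inequality observation $\dist_G(\NN_{T_2}(S_1),\NN_{T_2}(S_2)) > T_1$ with $T_1$-dependence of $\{X_v\}$ for the first part, and use the uniform local port permutation to make the output law depend only on the input-labeled topology for the invariance part. The paper spells out the conditioning sums explicitly whereas you invoke the ``functions of independent sources are independent'' principle directly, but the content is the same; your discussion of why the port randomization is needed is if anything more transparent than the paper's.
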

\begin{proof}
  We prove the claim 1 first.
  Fix any \(G = (V,E) \in \FF\), and the corresponding \(T_2\)-dependent distribution \(\{X_v\}_{v \in V} \in \RR\).
  Fix any two subsets \(S,S' \subseteq V\) such that \(\dist_G(S,S') > 2T_2 + T_1\).
  Consider any output labeling \(\lbl^{(2)} : V \to \labels^{(2)}\).
  Then,
  \begin{align}
    & \pr{\cap_{v \in S \cup S'} \{Y_v = \lbl^{(2)}(v)\}} \nonumber \\
    = \ &  
    \sum_{\lbl^{(1)}\, : \, V \to \labels^{(1)}} 
    \pr{\cap_{v \in S \cup S'} \{Y_v = \lbl^{(2)}(v)\} \st \cap_{v \in V} \{X_v = \lbl^{(1)}(v)\}} \nonumber \\
    & \cdot \pr{\cap_{v \in V} \{X_v = \lbl^{(1)}(v)\}} \nonumber \\
    = \ & \sum_{\lbl^{(1)}\, : \, V \to \labels^{(1)}} 
    \pr{\cap_{v \in S \cup S'} \{Y_v = \lbl^{(2)}(v)\} \st \cap_{v \in \NN_{T_2(S \cup S')}} \{X_v = \lbl^{(1)}(v)\}} \label{eq:fin-dep:indep-far-away} \\
    & \cdot \pr{\cap_{v \in V} \{X_v = \lbl^{(1)}(v)\}} \nonumber \\
    = \ & \sum_{\lbl^{(1)}\, : \, \NN_{T_2}(S \cup S') \to \labels^{(1)}} 
    \pr{\cap_{v \in S \cup S'} \{Y_v = \lbl^{(2)}(v)\} \st \cap_{v \in \NN_{T_2(S \cup S')}} \{X_v = \lbl^{(1)}(v)\}} \nonumber \\
    & \cdot \pr{\cap_{v \in \NN_{T_2}(S \cup S')} \{X_v = \lbl^{(1)}(v)\}}\nonumber\nonumber ,
  \end{align}
  where  \cref{eq:fin-dep:indep-far-away} holds because the output of \(\{Y_v\}_{v \in S \cup S'}\) is independent of \(\{X_v\}_{v \notin \NN_{T_2}(S \cup S')}\).
  Since \(\dist_G(S,S') > 2T_2\), 
  \begin{align*}
    & \sum_{\lbl^{(1)}\, : \, \NN_{T_2}(S \cup S') \to \labels^{(1)}} 
    \pr{\cap_{v \in S \cup S'} \{Y_v = \lbl^{(2)}(v)\} \st \cap_{v \in \NN_{T_2(S \cup S')}} \{X_v = \lbl^{(1)}(v)\}} \\
    & \cdot \pr{\cap_{v \in \NN_{T_2}(S \cup S')} \{X_v = \lbl^{(1)}(v)\}} \\
    = \ & \sum_{\lbl^{(1)}\, : \, \NN_{T_2}(S \cup S') \to \labels^{(1)}} 
    \pr{\cap_{v \in S} \{Y_v = \lbl^{(2)}(v)\} \st \cap_{v \in \NN_{T_2(S \cup S')}} \{X_v = \lbl^{(1)}(v)\}} \\
    & \cdot \pr{\cap_{v \in S'} \{Y_v = \lbl^{(2)}(v)\} \st \cap_{v \in \NN_{T_2(S \cup S')}} \{X_v = \lbl^{(1)}(v)\}} \\
    & \cdot \pr{\cap_{v \in \NN_{T_2}(S \cup S')} \{X_v = \lbl^{(1)}(v)\}} \\
    = \ & \sum_{\lbl^{(1)}\, : \, \NN_{T_2}(S \cup S') \to \labels^{(1)}} 
    \pr{\cap_{v \in S} \{Y_v = \lbl^{(2)}(v)\} \st \cap_{v \in \NN_{T_2(S)}} \{X_v = \lbl^{(1)}(v)\}} \\
    & \cdot \pr{\cap_{v \in S'} \{Y_v = \lbl^{(2)}(v)\} \st \cap_{v \in \NN_{T_2(S')}} \{X_v = \lbl^{(1)}(v)\}} \\
    & \cdot \pr{\cap_{v \in \NN_{T_2}(S \cup S')} \{X_v = \lbl^{(1)}(v)\}}.
  \end{align*}
  Now observe that \(\dist_G(\NN_{T_2}(S),\NN_{T_2}(S')) > T_1\).
  Since \(\{X_v\}_{v \in V}\) is a \(T_1\)-dependent distribution, it holds that
  \begin{align*}
    & \sum_{\lbl^{(1)}\, : \, \NN_{T_2}(S \cup S') \to \labels^{(1)}} 
    \pr{\cap_{v \in S} \{Y_v = \lbl^{(2)}(v)\} \st \cap_{v \in \NN_{T_2(S)}} \{X_v = \lbl^{(1)}(v)\}} \\
    & \cdot \pr{\cap_{v \in S'} \{Y_v = \lbl^{(2)}(v)\} \st \cap_{v \in \NN_{T_2(S')}} \{X_v = \lbl^{(1)}(v)\}} \\
    & \cdot \pr{\cap_{v \in \NN_{T_2}(S \cup S')} \{X_v = \lbl^{(1)}(v)\}} \\ 
    = \ & \sum_{\lbl^{(1)}\, : \, \NN_{T_2}(S \cup S') \to \labels^{(1)}} 
    \pr{\cap_{v \in S} \{Y_v = \lbl^{(2)}(v)\} \st \cap_{v \in \NN_{T_2(S)}} \{X_v = \lbl^{(1)}(v)\}} \\
    & \cdot \pr{\cap_{v \in S'} \{Y_v = \lbl^{(2)}(v)\} \st \cap_{v \in \NN_{T_2(S')}} \{X_v = \lbl^{(1)}(v)\}} \\
    & \cdot \pr{\cap_{v \in \NN_{T_2}(S)} \{X_v = \lbl^{(1)}(v)\}} \pr{\cap_{v \in \NN_{T_2}(S')} \{X_v = \lbl^{(1)}(v)\}} \\
    = \ &  \pr{\cap_{v \in S} \{Y_v = \lbl^{(2)}(v)\} } \cdot
    \pr{\cap_{v \in S'} \{Y_v = \lbl^{(2)}(v)\}}.
  \end{align*}

  We now prove claim 2.
  Given  any \(G,H \in \FF\) of size \(n_G\) and \(n_H\), respectively, consider any subgraph isomorphism \(f: \NN_{2T_2 + T_1(n_G)}(G',G) \to \NN_{2T_2 + T_1(n_H)}(H',H)\) for any two \(G' \subseteq G, H' \subseteq H\) such that \(f\) restricted to \(G'\) is an isomorphism to \(H'\).
  Let \(\{X^{(G)}_{v}\}_{v \in V(G)}\) and \(\{X^{(H)}_{v}\}_{v \in V(H)}\), and \(\{Y^{(G)}_{v}\}_{v \in V(G)}\) and \(\{Y^{(H)}_{v}\}_{v \in V(H)}\) be the corresponding distributions of interest before and after the combination with the \PN algorithm, respectively.
  Fix any subset of nodes \(U \subseteq V(G')\) and consider any family of labels \(\{\lbl_u\}_{u \in U}\) indexed by \(U\).
  In order to prove property 2, it is sufficient to show that 
  \[
    \pr{\cap_{u \in U} \{Y^{(G)}_u = \lbl_u\}} = \pr{\cap_{u \in U} \{Y^{(H)}_{f(u)} = \lbl_u\}}.
  \]
  Notice that \(\pr{\cap_{u \in U} \{Y_u = \lbl_u\}}\) depends solely on the graph \(G[\NN_{T_2}(U,G)] = \cup_{u\in U} G[\NN_{T_2}(u,G)]\), the distribution of the port numbers in \(G[\NN_{T_2}(U,G)]\), and the random process \(\{X_u\}_{u \in \NN_{T_2}(U)}\).
  By hypothesis, \(\{X^{(G)}_u\}_{u \in \NN_{T_2}(U,G)}\) is equal in law to \(\{X^{(H)}_{f(u)}\}_{u \in \NN_{T_2}(U,G)}\).
  Furthermore, the restriction of \(f\) to \(G[\NN_{T_2}(U,G)]\) defines an isomorphism from \(G[\NN_{T_2}(U,G)]\)  to \( H[\NN_{T_2}(f(U),H)]\), hence the distribution of the port numbers in \(G[\NN_{T_2}(U,G)]\) is the same as that in \(H[\NN_{T_2}(f(U),H)]\) because each node permutes the port numbers locally u.a.r.
  Thus, \(\pr{\cap_{u \in U} \{Y^{(G)}_u = \lbl_u\}}\) must be the same as \(\pr{\cap_{u \in U} \{Y^{(H)}_{f(u)} = \lbl_u\}}\).
\end{proof}

\(T\)-dependent distributions over different graphs can be combined to obtain a \(T\)-dependent distribution over the graph union.

\begin{lemma}\label{lemma:fin-dep:composition}
  Let \( \{X_v\}_{v\in V_1}\) and \(\{Y_v\}_{v \in V_2}\) be a \(T_1\)-dependent distribution over a graph \(G_1 = (V_1, E_1)\) taking values in \(\labels^{(1)}\) and a \(T_2\)-dependent distribution over a graph \(G_2 = (V_2,E_2)\) taking values in \(\labels^{(2)}\), respectively.
  Assume \(\{X_v\}_{v \in V_1}\) and \(\{Y_v\}_{v\in V_2}\) to be independent processes.
  Consider the graph \(H = (V = V_1 \cup V_2, E = E_1 \cup E_2)\) and a distribution \(\{Z_v\}_{v \in V}\) over \(H\) taking values in \(\labels = (\labels^{(1)} \cup \{0\})\times(\labels^{(2)} \cup \{0\})\) defined by
  \begin{align*}
    Z_v = 
    \begin{cases}
      (X_v, 0) & \text{ if } v \in V_1 \setminus V_2, \\ 
      (X_v, Y_v) & \text{ if } v \in V_1 \cap V_2, \\ 
      (0, Y_v) & \text{ if } v \in V_2 \setminus V_1.
    \end{cases}
  \end{align*}
  Then, \(\{ Z_v\}_{v \in V} \) is \(\max(T_1,T_2)\)-dependent.
\end{lemma}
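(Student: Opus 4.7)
\medskip

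\noindent\textbf{Proof plan.} The plan is to verify the defining property of $\max(T_1,T_2)$-dependence directly, by reducing statements about $\{Z_v\}$ to statements about the underlying independent processes $\{X_v\}$ and $\{Y_v\}$. Concretely, I would fix two subsets $S,S' \subseteq V$ with $\dist_H(S,S') > \max(T_1,T_2)$ and show that $\{Z_v\}_{v\in S}$ is independent of $\{Z_v\}_{v\in S'}$.

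The first step is the observation that $Z_v$ is, by construction, a deterministic function of the pair $(X_v, Y_v)$ (with the convention that $X_v$ or $Y_v$ is replaced by the symbol $0$ when $v \notin V_1$ or $v \notin V_2$). Hence the joint distribution of $\{Z_v\}_{v\in S}$ is entirely determined by the joint distribution of the two families $\{X_v\}_{v \in S \cap V_1}$ and $\{Y_v\}_{v \in S \cap V_2}$, and analogously for $S'$. So it suffices to show that the four families
\[
\{X_v\}_{v \in S \cap V_1},\quad \{X_v\}_{v \in S' \cap V_1},\quad \{Y_v\}_{v \in S \cap V_2},\quad \{Y_v\}_{v \in S' \cap V_2}
\]
are jointly mutually independent.

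The second step is the key metric observation: since $E_1 \subseteq E_1 \cup E_2 = E(H)$, any path in $G_1$ is also a path in $H$, and therefore $\dist_H(u,v) \le \dist_{G_1}(u,v)$ for all $u,v \in V_1$. Consequently, for any $u \in S \cap V_1$ and $v \in S' \cap V_1$,
\[
\dist_{G_1}(u,v) \ge \dist_H(u,v) \ge \dist_H(S,S') > \max(T_1,T_2) \ge T_1,
\]
so $\dist_{G_1}(S \cap V_1,\, S' \cap V_1) > T_1$. The $T_1$-dependence of $\{X_v\}_{v\in V_1}$ then implies that $\{X_v\}_{v \in S \cap V_1}$ is independent of $\{X_v\}_{v \in S' \cap V_1}$. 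The identical argument applied in $G_2$ gives independence of $\{Y_v\}_{v \in S \cap V_2}$ and $\{Y_v\}_{v \in S' \cap V_2}$.

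The final step combines these two pairwise independences with the global independence of the processes $\{X_v\}_{v\in V_1}$ and $\{Y_v\}_{v\in V_2}$ that is part of the hypothesis. Since the joint law of $(\{X_v\}_{v\in V_1}, \{Y_v\}_{v\in V_2})$ factors as a product, the four families listed above factor as a product of four marginals, and hence are mutually independent. Pushing this through the deterministic coordinate-wise map $(X_v, Y_v) \mapsto Z_v$ yields independence of $\{Z_v\}_{v \in S}$ and $\{Z_v\}_{v \in S'}$, which is exactly the $\max(T_1,T_2)$-dependence of $\{Z_v\}_{v \in V}$. The only point that requires any care is the distance inequality $\dist_{G_i}(\cdot,\cdot) \ge \dist_H(\cdot,\cdot)$, and this is immediate from $G_i \subseteq H$; no other obstacle is expected.
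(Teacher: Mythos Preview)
Your proposal is correct and follows essentially the same approach as the paper: both arguments factor the law of $\{Z_v\}_{v\in S\cup S'}$ into $X$- and $Y$-parts via the assumed independence, then split each part over $S$ and $S'$ using the respective $T_i$-dependence, and recombine. Your version is actually a bit more explicit than the paper's on one point: the paper simply invokes $T_i$-dependence of $\{X_v\}$ and $\{Y_v\}$ without spelling out the inequality $\dist_{G_i}(S\cap V_i,\,S'\cap V_i)\ge \dist_H(S,S')$ that follows from $G_i\subseteq H$, whereas you state and justify it.
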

\begin{proof}
  For any vector \(\vvv \in \Sigma_1 \times \dots \times \Sigma_n\), we write \(\vvv[i]\) to denote its \(i\)-th entry.
  Fix any output labeling \(\lbl: V \to \labels\) such that \(\lbl(v)[2] = 0\) for all \(v \in V_1 \setminus V_2\) and \(\lbl(v)[1] = 0\) for all \(v \in V_2 \setminus V_1\).
  Observe that for any subset \(S \subseteq V\) it holds that
  \begin{align}
    & \pr{\cap_{v \in S} \{Z_v = \lbl(v)\}} \nonumber \\
    = \ & \pr{\left(\cap_{v \in S \cap V_1} \{X_v = \lbl(v)[1]\}\right) \cap \left(\cap_{v \in S \cap V_2} \{Y_v = \lbl(v)[2]\}\right) } \nonumber \\
    = \ &  \pr{\cap_{v \in S \cap V_1} \{X_v = \lbl(v)[1]\}} \cdot \pr{\cap_{v \in S \cap V_2} \{Y_v = \lbl(v)[2]\} }, \label{eq:fin-dep:indep-processes1}
  \end{align}
	  where the latter equality follows by independence between \(\{X_v\}_{v \in V_1}\) and \(\{Y_v\}_{v \in V_2}\).
  W.l.o.g., suppose \(T_1 \ge T_2\).
  Consider two subsets \(S,S' \subseteq V\) such that \(\dist_G(S,S') > T_1\).
  Fix any output labeling \(\lbl: V \to \labels\) such that \(\lbl(v)[2] = 0\) for all \(v \in V_1 \setminus V_2\) and \(\lbl(v)[1] = 0\) for all \(v \in V_2 \setminus V_1\).
  Using \cref{eq:fin-dep:indep-processes1}, we have that
  \begin{align}
    & \pr{\cap_{v \in S\cup S'} \{Z_v = \lbl(v)\}} \nonumber \\
    = \ & \pr{\cap_{v \in (S \cup S') \cap V_1} \{X_v = \lbl(v)[1]\}} \cdot \pr{\cap_{v \in (S \cup S')  \cap V_2} \{Y_v = \lbl(v)[2]\} } \nonumber \\
    = \ & \pr{\cap_{v \in (S \cap V_1) \cup (S' \cap V_1 )} \{X_v = \lbl(v)[1]\}} \cdot \pr{\cap_{v \in (S \cap V_2) \cup (S' \cap V_2 )} \{Y_v = \lbl(v)[2]\} } \nonumber \\
    = \ & \pr{\cap_{S \cap V_1} \{X_v = \lbl(v)[1] \}} \cdot \pr{\cap_{S' \cap V_1} \{X_v = \lbl(v)[1] \}} \label{eq:fin-dep:T-dependence} \\
    & \cdot \pr{\cap_{S \cap V_2} \{Y_v = \lbl(v)[2] \}} \cdot  \pr{\cap_{S' \cap V_2} \{Y_v = \lbl(v)[2] \}}  \nonumber \\
    = \ & \pr{\left(\cap_{v \in S\cap V_1 } \{X_v = \lbl(v)[1]\}\right) \cap\left(\cap_{v \in S\cap V_2 } \{Y_v = \lbl(v)[2]\}\right) }   \label{eq:fin-dep:indep-processes2} \\
    & \cdot \pr{\left(\cap_{v \in S' \cap V_1 } \{X_v = \lbl(v)[1]\}\right) \cap\left(\cap_{v \in S' \cap V_2 } \{Y_v = \lbl(v)[2]\}\right) }   \nonumber \\
    = \ & \pr{\cap_{v \in S} \{Z_v = \lbl(v)\}} \cdot \pr{\cap_{v \in S'} \{Z_v = \lbl(v)\}}   \nonumber,
  \end{align}
  where \cref{eq:fin-dep:T-dependence} holds because \(\{X_v\}_{v \in V_1}\) is \(T_1\)-dependent and \(\{Y_v\}_{v \in V_2}\) is \(T_2\)-dependent, while \cref{eq:fin-dep:indep-processes2} holds because \(\{X_v\}_{v \in V_1}\) and \(\{Y_v\}_{v \in V_2}\) are independent.
\end{proof}

Now we present a final lemma on the composition of random processes.
To do so, we first introduce the notation of \emph{random decomposition} of a graph.

\begin{definition}[Random decomposition]\label{def:fin-dep:random-decomposition}
  Let \(G = (V,E)\) be any graph and \(\PP\) a family of subgraphs of \(G\).
  For any \(k \in \nat\),
  let \(\Gamma(G)\) be a random variable taking values in \(\PP^k\) 
  that is sampled according to any probability distribution.
  We say that \(\Gamma(G)\) is a \emph{random \(k\)-decomposition} of \(G\) in \(\PP\).
\end{definition}

Given a random \(k\)-decomposition \(\Gamma(G)\) of \(G\) in \(\PP\), for any \(y \in V \cup E\), we define the random variable \(\Gamma(G)_y \in \{0,1\}^k\) as follows: \(\Gamma(G)_y[i] = 1\) if \(y\) belongs to \(\Gamma(G)[i]\) and 0 otherwise.
Notice that \(\{\Gamma(G)_y\}_{y \in V \cup E}\) is a random process on \(G\).
If \(\{\Gamma(G)_y\}_{y \in V \cup E}\) is invariant under automorphisms, then we say that the random \(k\)-decomposition \(\Gamma(G)\) is invariant under automorphisms.
If, for a family of graphs \(\FF\) and any graph \(G \in \FF\), \(\{\Gamma(G)_y\}_{y \in V(G) \cup E(G)}\) is \(T\)-dependent (with \(T\) being a function of the size of \(G\)) and the processes in \(\{\{\Gamma(G)_y\}_{y \in V(G) \cup E(G)} : G \in \FF \}\) are invariant under subgraph isomorphisms, then we say that the random \(k\)-decompositions in \(\{\Gamma(G) : G \in \FF\}\) are \(T\)-dependent and invariant under subgraph isomorphisms.

For a random decomposition, we define the notion of induced random process.

\begin{definition}[Induced process]\label{def:fin-dep:induced-process}
  Let \(G = (V,E)\) be a graph that admits a family of subgraphs \(\PP\).
  Suppose there exists a random process \(\{X_v\}_{v \in V(H(\PP))}\) with \(H(\PP)\) being the graph obtained by the disjoint union of all elements of \(\PP\).
  Let \(\Gamma(G)\) be a random \(k\)-decomposition of \(G\) in \(\PP\).
  For all \(v\in V\) and \(G' \in \PP\), define the random process \(\{X_v^{(G')}\}_{v \in V}\) by setting \(X_v^{(G')} = X_{f_{G'}(v)}\) for all \(v \in V(G')\), where \(f_{G'}: V(G') \to V(H(\PP)) \) is the natural immersion of \(G'\) into \(H(\PP)\) otherwise set \(X_v^{(G')} =  0\).
  Let \(\{Y^{(\Gamma(G))}_v\}_{v\in V}\) be a random process that we define conditional on the output of \(\Gamma(G)\): 
  for all \(\Gbold \in \PP^k\), conditional on \(\Gamma(G) = \Gbold\),
  \begin{align*}
    Y^{(\Gamma(G))}_v = Y^{(\Gbold)}_v = (X_v^{(\Gbold[1])}, \dots, X_v^{(\Gbold[k])}).
  \end{align*}
  The random process \(\{Y_v^{(\Gamma(G))}\}_{v\in V}\) on \(G\) is said to be induced by the action of \(\{X_v\}_{v \in V(H(\PP))}\) over the random \(k\)-decomposition \(\Gamma(G)\).
\end{definition}

Now we present a result on the induced random process whenever the random decomposition and the family of random processes that acts on the random decomposition meet some invariance properties.

\begin{lemma}\label{lemma:fin-dep:induced-process}
  Let \(\FF\) be a family of graphs.
  For any \(G = (V,E) \in \FF\), let \(\PP_G\) be any family of subgraphs of \(G\) that is closed under node removal and disjoint graph union, that is, if \(G_1,G_2 \in \PP_G\) and \(V(G_1) \cap V(G_2) = \emptyset\), then \(G_1 \cup G_2 \in \PP_G\).
  Furthermore, suppose that, for each pair of isomorphic subgraphs \(G_1,G_2 \subseteq G\),  \(G_1 \in \PP_G \implies G_2 \in \PP_G\). 
  Moreover, let \(\Gamma(G)\) be a random \(k\)-decomposition of \(G\) in \(\PP_G\) that is \(T_1\)-dependent, and suppose that the random decompositions in \(\{\Gamma(G) : G \in \FF\}\) are invariant under subgraph isomorphism.
  Suppose there is a \(T_2\)-dependent distribution \(\{X_v\}_{v \in  V(H(\PP_G))}\), taking values in a finite set \(\labels\), such that \(\{\{X_v\}_{v \in  V(H(\PP_G))} : G \in \FF\}\) is invariant under subgraph isomorphism, where  \(H(\PP_G)\) is obtained by the disjoint union of a copy of each element of \(\PP_G\).   
  Let \(\{Y^{(\Gamma(G))}_v\}_{v \in V(G)}\) be the random process induced by the action of \(\{X_v\}_{v \in V(H(\PP_G))}\) over \(\Gamma(G)\).
  Then, \(\{Y^{(\Gamma(G))}_v\}_{v\in V(G)}\) is a \((T_1 + 2T_2)\)-dependent distribution.
  Furthermore, the processes in \(\{\{X_v\}_{v \in V(H(\PP_G))} : G \in \FF\}\) are invariant under subgraph isomorphism.
\end{lemma}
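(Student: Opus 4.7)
The plan is to extend the conditioning argument used in the proof of Lemma~\ref{lemma:fin-dep:combination}, but now condition on the random decomposition $\Gamma(G)$ rather than on an input labeling. Fix $S, S' \subseteq V(G)$ with $\dist_G(S, S') > T_1 + 2T_2$ and any output labeling $\lbl : V(G) \to (\labels \cup \{0\})^k$. By the law of total probability,
\begin{align*}
    \pr{\cap_{v \in S \cup S'}\{Y_v^{(\Gamma(G))} = \lbl(v)\}}
    = \sum_{\Gbold} \pr{\cap_{v \in S \cup S'}\{Y_v^{(\Gbold)} = \lbl(v)\} \mid \Gamma(G) = \Gbold} \cdot \pr{\Gamma(G) = \Gbold},
\end{align*}
and the goal is to decouple this sum into a product over $S$ and $S'$.

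First, I would show that for any fixed $\Gbold$ the conditional event factorizes. Unrolling the definition of $Y_v^{(\Gbold)}$, the event becomes one about $\{X_u\}$ at the vertices of $U^S_\Gbold \cup U^{S'}_\Gbold \subseteq V(H(\PP_G))$, where $U^S_\Gbold \coloneqq \{f_{\Gbold[i]}(v) : v \in S,\ v \in V(\Gbold[i]),\ i \in [k]\}$ and analogously for $U^{S'}_\Gbold$. A vertex of $U^S_\Gbold$ and one of $U^{S'}_\Gbold$ either lie in distinct copies inside $H(\PP_G)$ (infinite distance), or in the same copy of some $\Gbold[i]$, in which case their distance equals a distance inside $\Gbold[i] \subseteq G$ and hence is at least $\dist_G(S, S') > T_2$. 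The $T_2$-dependence of $\{X_v\}$ then yields the conditional factorization.

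Second, I would argue that the factor $q_S(\Gbold) \coloneqq \pr{\cap_{v \in S}\{Y_v^{(\Gbold)} = \lbl(v)\} \mid \Gbold}$ depends on $\Gbold$ only through its restriction to $\NN_{T_2}(S, G)$ (and analogously for $q_{S'}$). This uses the subgraph-isomorphism invariance of $\{X_v\}$: the joint law of $\{X_u\}_{u \in U^S_\Gbold}$ is determined by the isomorphism type of the $T_2$-neighborhood of $U^S_\Gbold$ inside $H(\PP_G)$, which by the disjoint-union structure decomposes into the $T_2$-neighborhoods of $S \cap V(\Gbold[i])$ inside each $\Gbold[i]$, and is in turn recoverable from $\Gbold \cap \NN_{T_2}(S, G)$. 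The $T_1$-dependence of $\Gamma(G)$ combined with $\dist_G(\NN_{T_2}(S, G), \NN_{T_2}(S', G)) > T_1$ then gives that these two restrictions are independent, so that the whole sum equals $\Ebb[q_S(\Gamma(G))] \cdot \Ebb[q_{S'}(\Gamma(G))]$, which is the desired product of marginals.

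The invariance claim, which I read as being about $\{Y_v^{(\Gamma(G))}\}$ (the statement appears to be a typographical repetition of the hypothesis), follows by the same decomposition applied to two graphs $G_1, G_2 \in \FF$ sharing subgraphs $H_1, H_2$ with isomorphic $(T_1 + 2T_2)$-neighborhoods via some $\varphi$. By the argument of step~2, $q_{V(H_1)}$ depends only on $\Gbold \cap \NN_{T_2}(H_1, G_1)$; the $T_1$-invariance of $\Gamma$ implies that the law of this restriction is already determined by $G_1[\NN_{T_1 + T_2}(H_1)] \subseteq G_1[\NN_{T_1 + 2T_2}(H_1)]$, and the subgraph-isomorphism invariance of $\{X_v\}$ makes the functions $q_{V(H_1)}$ and $q_{V(H_2)} \circ \varphi$ agree, so the two expectations coincide. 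The main obstacle is justifying step~2 rigorously: one must track how the natural immersions $f_{\Gbold[i]}$ embed vertices of $S$ into the various disjoint copies inside $H(\PP_G)$, and verify that the dependence of $q_S$ on $\Gbold$ genuinely factors through a function of the local combinatorial data $\Gbold \cap \NN_{T_2}(S, G)$.
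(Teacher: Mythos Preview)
Your proposal is correct and follows essentially the same approach as the paper: condition on the decomposition, factorize the conditional probability using the $T_2$-dependence of $\{X_v\}$, observe that each factor depends only on the restriction of $\Gbold$ to a $T_2$-neighborhood (via subgraph-isomorphism invariance), and then use the $T_1$-dependence of $\Gamma(G)$ to split the outer average. The only notable differences are cosmetic: the paper writes the outer average as an integral with a density $f_\Gamma$ (since $\PP_G^k$ may be uncountable), and it packages your step~2 by introducing explicit restriction spaces $\HH(T,U)$ and showing the measure on $\HH(T_2,S\cup S')$ is isomorphic to the product $\HH(T_2,S)\times\HH(T_2,S')$; your reading of the final invariance claim as a typo for $\{Y_v^{(\Gamma(G))}\}$ is also what the paper proves.
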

\begin{proof}
  Fix \(G \in \FF\) and let \(\Gamma = \Gamma(G)\), \(\PP = \PP_G\).
  Since \(\PP^k\) might be uncountable, we consider the density function \(f_{\Gamma}\) and the probability measure \(\Pbb\) associated to \(\Gamma\).
  Consider two subsets of nodes \(S, S' \subseteq V\) at distance at least \(\max(T_1,T_2) + 1\).
  Fix any labeling \(\lbl: V \to \labels^k\).
  It holds that
  \begin{align}
    & \pr{\cap_{v \in S \cup S'} \{Y_v^{(\Gamma)} = \lbl(v)\}} \nonumber \\
    = \ & \int_{\PP^k} \pr{\cap_{v \in S \cup S'} \{Y_v^{(\Gamma)} = \lbl(v)\} \ \st \ \Gamma = \Gbold }  f_{\Gamma}(\Gbold) \dd{\Pbb} \label{eq:fin-dep:random-decomp:law-total-prob} \\
    = \ &  \int_{\PP^k} \pr{\cap_{v \in S \cup S'} \{Y^{(\Gbold)}_v = \lbl(v)\}} f_{\Gamma}(\Gbold) \dd{\Pbb} \nonumber \\
    = \ &  \int_{\PP^k} \pr{\cap_{v \in S } \{Y^{\Gbold}_v = \lbl(v)\} } \pr{\cap_{v \in  S'} \{Y^{\Gbold} = \lbl(v)\} } f_{\Gamma}(\Gbold) \dd{\Pbb} \label{eq:fin-dep:random-decomp:indep} \\
    = \ &  \int_{\PP^k}  \prod_{i \in [k]} \pr{\cap_{v \in S  } \{X^{\Gbold[i]}_v = \lbl(v)[i]\} }  \pr{\cap_{v \in  S' } \{X^{\Gbold[i]}_v = \lbl(v)[i]\} } f_{\Gamma}(\Gbold) \dd{\Pbb} , \label{eq:fin-dep:random-decomp:non-zero}
  \end{align}
  where \cref{eq:fin-dep:random-decomp:law-total-prob} holds by the law of total probability, \cref{eq:fin-dep:random-decomp:indep} holds since \(\{Y_v^{(\Gbold)}\}_{v\in V}\) is \(T_2\)-dependent for all \(\Gbold \in \PP^k\) by \cref{lemma:fin-dep:composition}, and \cref{eq:fin-dep:random-decomp:non-zero} holds since \(\{X^{\Gbold[i]}_v\}_{v \in V}\) and \(\{X^{\Gbold[j]}_v\}_{v \in V}\) are independent if \(i \neq j\).
  Notice that, for any set \(S \subseteq V\), \(G' = \Gbold[i][\NN_{T_2} (S, \Gbold[i])] \in \PP\) and the event \(\cap_{v \in S } \{X^{\Gbold[i]}_v = \lbl(v)[i]\} \) has the same probability as \(\cap_{v \in S } \{X^{G'}_v = \lbl(v)[i]\} \) because \(\{X_v\}_{v \in V(H(\PP))}\) is invariant under subgraph isomorphisms.

  For any subset \(U \subseteq V\) and any integer \(T \ge 0\), let 
  \[
    \HH(T,U) = \left\{(\Gbold[1][\NN_{T} (U, \Gbold[1])], \dots, \Gbold[k][\NN_{T} (U, \Gbold[k])])  \, \st \, \Gbold \in \PP^k \right\} \subseteq \PP^k,
  \]
  and let \(\Pbb[\HH(T,U)]\) be the restriction of \(\Pbb\) to \(\HH(T,U)\) defined as follows: for any event \(E\), 
  \[
    \Pbb[\HH(T,U)](E) = \Pbb(E\cap \HH(T,U)) / \Pbb(\HH(T,U)).
  \]
  Finally, define the random variable \(\Gamma^{(T,U)}\) to be a \(k\)-dimensional vector, taking values in \(\HH(T,U)\),  whose \(i\)-th entry is \(\Gamma[i][\NN_{T}(U, \Gamma[i])]\), and let \(f_{\Gamma^{(T,U)}}\) be its density function.
  Define \(U = S \cup S'\):  \cref{eq:fin-dep:random-decomp:non-zero} becomes
  \begin{align}
    &  \int_{\PP^k}  \prod_{i \in [k]} \pr{\cap_{v \in S  } \{X^{\Gbold[i]}_v = \lbl(v)[i]\} }  \pr{\cap_{v \in  S' } \{X^{\Gbold[i]}_v = \lbl(v)[i]\} } f_{\Gamma}(\Gbold) \dd{\Pbb} \label{eq:fin-dep:random-decomp:graph-inv} \\
    = \ & \int_{\HH(T_2,U)} \prod_{i \in [k]} \pr{\cap_{v \in S  } \{X^{\Hbold[i]}_v = \lbl(v)[i]\} }  \pr{\cap_{v \in  S' } \{X^{\Hbold[i]}_v = \lbl(v)[i]\} } f_{\Gamma^{(T_2,U)}}(\Hbold) \dd{\Pbb[\HH(T_2,U)]}. \nonumber
  \end{align}
  Now notice that, since the random decomposition \(\Gamma\) is \(T_1\)-dependent and \(\dist_G(S,S') > T_1 + 2T_2\), the probability space \(\HH(T_2,U)\) (with measure \(\Pbb[\HH(T_2,U)]\))  is isomorphic to the  product space \(\HH(T_2,S) \times \HH(T_2,S')\) (with product measure \(\Pbb[\HH(T_2,S)] \times \Pbb[\HH(T_2,S')]\)).
  The isomorphism brings any element
  \[
    (\Gbold[1][\NN_{T_2} (U, \Gbold[1])], \dots, \Gbold[k][\NN_{T_2} (U, \Gbold[k])])
  \]
  of \(\HH(T_2,U)\)
  into
  \[
    \left((\Gbold[1][\NN_{T_2} (S, \Gbold[1])], \dots, \Gbold[k][\NN_{T_2} (S, \Gbold[k])]) ,(\Gbold[1][\NN_{T_2} (S', \Gbold[1])], \dots, \Gbold[k][\NN_{T_2} (S', \Gbold[k])]) \right)
  \]
  which belongs to \(\HH(T_2,S) \times \HH(T_2,S')\).
  Hence, we get
  \begin{align}
    & \int_{\HH(T_2,U)} \prod_{i \in [k]} \pr{\cap_{v \in S  } \{X^{\Hbold[i]}_v = \lbl(v)[i]\} }  \pr{\cap_{v \in  S' } \{X^{\Hbold[i]}_v = \lbl(v)[i]\} } f_{\Gamma^{(T_2,U)}}(\Hbold) \dd{\Pbb[\HH(T_2,S)]} \nonumber \\
    = \ & \int_{\HH(T_2,S)} \int_{\HH(T_2,S')} \prod_{i \in [k]} \pr{\cap_{v \in S  } \{X^{\Hbold[i]}_v = \lbl(v)[i]\} }  \pr{\cap_{v \in  S' } \{X^{\Hbold[i]}_v = \lbl(v)[i]\} } \nonumber \\
    & \cdot f_{\Gamma^{(T_2,S)}}(\Hbold_1)  f_{\Gamma^{(T_2,S')}}(\Hbold_2) \dd{\Pbb[\HH(T_2,S)]} \dd{\Pbb[\HH(T_2,S')]} \nonumber. 
  \end{align}
  The latter becomes
  \begin{align}
    & \int_{\HH(T_2,S)} \int_{\HH(T_2,S')} \prod_{i \in [k]} \pr{\cap_{v \in S  } \{X^{\Hbold[i]}_v = \lbl(v)[i]\} }  \pr{\cap_{v \in  S' } \{X^{\Hbold[i]}_v = \lbl(v)[i]\} } \nonumber \\
    & \cdot f_{\Gamma^{(T_2,S)}}(\Hbold_1)  f_{\Gamma^{(T_2,S')}}(\Hbold_2) \dd{\Pbb[\HH(T_2,S)]} \dd{\Pbb[\HH(T_2,S')]} \nonumber \\
    = \ & \int_{\HH(T_2,S)} \prod_{i \in [k]} \pr{\cap_{v \in S  } \{X^{\Hbold[i]}_v = \lbl(v)[i]\} }  f_{\Gamma^{(T_2,S)}}(\Hbold_1)   \dd{\Pbb[\HH(T_2,S)]}  \nonumber \\
    & \cdot \int_{\HH(T_2,S')} \prod_{i \in [k]} \pr{\cap_{v \in S'  } \{X^{\Hbold[i]}_v = \lbl(v)[i]\} }  f_{\Gamma^{(T_2,S')}}(\Hbold_2)   \dd{\Pbb[\HH(T_2,S')]}  \nonumber \\
    = \ & \int_{\PP^k} \prod_{i \in [k]} \pr{\cap_{v \in S  } \{X^{\Gbold[i]}_v = \lbl(v)[i]\} }  f_{\Gamma}(\Gbold)   \dd{\Pbb} \label{eq:fin-dep:random-decomp:reassembling} \\
    & \cdot \int_{\PP^k} \prod_{i \in [k]} \pr{\cap_{v \in S'  } \{X^{\Gbold[i]}_v = \lbl(v)[i]\} }  f_{\Gamma}(\Gbold)   \dd{\Pbb}  \nonumber \\
    = \ & \pr{\cap_{v \in S} \{Y_v^{(\Gamma(G))} = \lbl(v)\}} \pr{\cap_{v \in S'} \{Y_v^{(\Gamma(G))} = \lbl(v)\}} \nonumber,
  \end{align}
  where \cref{eq:fin-dep:random-decomp:reassembling} follows by the same reasoning as for \cref{eq:fin-dep:random-decomp:graph-inv} but in reverse.

  Now we want to prove that \(\{\{Y^{(\Gamma(G))}_v\}_{v \in V} : G \in \FF\}\) is invariant under subgraph isomorphism.
  Fix any two graphs \(G, H \in \FF\) of sizes \(n_G\) and \(n_H\), respectively.
  Consider any isomorphism \(\alpha\) between the radius-\((T_1(n_G) + 2T_2(n_G))\) neighborhoods of any subgraph \(G' \subseteq G\) and the radius-\((T_1(n_H) + 2T_2(n_H))\) neighborhoods of any subgraph \(H' \subseteq H\), such that the restriction of \(\alpha\) to \(G'\) is an isomorphism to \(H'\). 
  With an abuse of notation, for any subgraph \(K \subseteq G'\), let us denote \(\alpha(K) \subseteq H'\) its isomorphic image in \(H'\) through \(\alpha\).
  Let \(T_G = T_1(n_G) + 2T_2(n_G)\) and  \(T_H = T_1(n_H) + 2T_2(n_H)\).
  Fix any labeling \(\lbl: V \to \labels ^{k}\).
  We have that
  \begin{align}
    & \pr{\cap_{v \in V(G')} \{Y_v^{(\Gamma(G))} = \lbl(v)\}} \nonumber \\
    = \ & \int_{\PP_G^k} \pr{\cap_{v \in V(G')} \{Y_v^{(\Gamma(G))} = \lbl(v)\} \ \st \ \Gamma(G) = \Gbold }  f_{\Gamma(G)}(\Gbold) \dd{\Pbb_G}  \nonumber. \\
    = \ &  \int_{\PP_G^k} \pr{\cap_{v \in V(G')} \{Y^{(\Gbold)}_v = \lbl(v)\}} f_{\Gamma(G)}(\Gbold) \dd{\Pbb_G} \nonumber \\
    = \ &  \int_{\PP_G^k}  \prod_{i \in [k]} \pr{\cap_{v \in V(G')  } \{X^{(\Gbold[i])}_{v} = \lbl(v)[i]\} } f_{\Gamma(G)}(\Gbold) \dd{\Pbb_G} \nonumber \\
    = \ & \int_{\HH_G(T_G,V(G'))}  \prod_{i \in [k]} \pr{\cap_{v \in V(G')  } \{X^{(\Hbold[i])}_{v} = \lbl(v)[i]\} } f_{\Gamma(G)^{(T_G,V(G'))}}(\Hbold) \dd{\Pbb_G(\HH_G(T_G,V(G')))} \nonumber \\
    = \ & \int_{\HH_G(T_G,V(G'))}  \prod_{i \in [k]} \pr{\cap_{v \in V(G')  } \{X^{(\alpha(\Hbold[i]))}_{\alpha(v)} = \lbl(v)[i]\} } f_{\Gamma(G)^{(T_G,V(G'))}}(\Hbold) \dd{\Pbb_G(\HH_G(T_G,V(G')))},\nonumber   
  \end{align}
  where the latter holds because 
  \[ 
    \pr{\cap_{v \in V(G')  } \{X^{(\Hbold[i])}_{v} = \lbl(v)[i]\} }=\pr{\cap_{v \in V(G')  } \{X^{(\alpha(\Hbold[i]))}_{\alpha(v)} = \lbl(v)[i]\} }
  \]
  as \(\{\{X_v\}_{v \in V(H(\PP_G))}: G \in \FF\}\) is \(T_2\)-dependent and invariant under subgraph isomorphism.
  Furthermore, since the processes in \(\{\Gamma(G) : G \in \FF\}\) are \(T_1\)-dependent and invariant under subgraph isomorphism, we have that \(f_{\Gamma(G)^{(T_G,V(G'))}}(\Hbold) = f_{\Gamma(H)^{(T_H,V(H'))}}((\alpha(\Hbold[1]), \dots, \alpha(\Hbold[k])))\) almost everywhere in \(\HH_G(T_G,V(G'))\), and that the probability space \(\HH_G(T_G,V(G'))\) with measure \( \dd{\Pbb_G(\HH_G(T_G,V(G')))}\) is isomorphic to  \(\HH_H(T_H,V(H'))\) with measure \( \dd{\Pbb_H(\HH_H(T_H,V(H')))}\), where the isomorphism brings \(\Hbold\) into  \((\alpha(\Hbold[1]), \dots, \alpha(\Hbold[k]))\).
  Hence,
  \begin{align}
    & \int_{\HH_G(T_G,V(G'))}  \prod_{i \in [k]} \pr{\cap_{v \in V(G')  } \{X^{(\alpha(\Hbold[i]))}_{\alpha(v)} = \lbl(v)[i]\} } f_{\Gamma(G)^{(T_G,V(G'))}}(\Hbold) \dd{\Pbb_G(\HH_G(T_G,V(G')))}\nonumber  \\
    = \ & \int_{\HH_H(T_H,V(H'))}  \prod_{i \in [k]} \pr{\cap_{v \in V(H')  } \{X^{(\Hbold[i])}_{v} = \lbl(v)[i]\} } f_{\Gamma(H)^{(T_H,V(H'))}}(\Hbold) \dd{\Pbb_H(\HH_H(T_H,V(H')))}\nonumber  \\
    = \ & \pr{\cap_{v \in V(H')} \{Y_v^{(\Gamma(H))} = \lbl(v)\}}, \nonumber 
  \end{align}
  concluding the proof.
\end{proof}

\begin{remark}
  When the underlying graph is a directed graph, \cref{lemma:fin-dep:induced-process} guarantees invariance under subgraph isomorphisms that keep edge orientation.
\end{remark}

We are going to work on rooted pseudotrees and pseudoforests.
A \emph{pseudotree} is a graph that is connected and contains at most one cycle. 
A \emph{pseudoforest} is a graph obtained by the disjoint union of pseudotrees; an equivalent definition of pseudoforest is a graph in which each connected component has no more edges than vertices.
Note that a pseudotree might contain multiple edges: however, we assume it does not contain self-loops as self-loops are useless communication links in the \local model.
A \emph{rooted tree} is a tree where each edge is oriented and all nodes have outdegree at most 1: it follows that all but one node have outdegree exactly 1 and one node (the \emph{root}) has outdegree 0.
Trivially, a tree can be rooted by selecting one node and orienting all edges towards it.
A \emph{rooted pseudotree} is a pseudotree where each edge is oriented and each node has outdegree at most 1: if the pseudotree contains a cycle, then all nodes necessarily have outdegree exactly 1. 
Any pseudotree can be oriented so that it becomes rooted: just orient the cycle first (if it exists) in a consistent way, then remove it, and make the remaining trees rooted at nodes that belonged to the cycle.
A \emph{rooted pseudoforest} is the union of rooted pseudotrees.

We will show that pseudoforests of maximum degree \(\maxDeg\) admit an \(\myO{\log* \maxDeg}\)-dependent \(3\)-coloring distribution.
In order to do so, we use a color reduction technique that follows by known revisions of the Cole--Vishkin technique \cite{cole1986,goldberg1988}.

\begin{lemma}[\PN algorithm for color reduction in pseudoforests \cite{cole1986,goldberg1988}]\label{lemma:fin-dep:cole-vishkin}
  Let \(G \) be a pseudoforest with countably many nodes.
  Assume \(G\) is given as an input a \(k\)-coloring for some \(k \ge 3\).
  There exists a deterministic \PN algorithm that does not depend on the size of \(G\) and outputs a \(3\)-coloring of \(G\) in time \(\myO{\log* k}\).  
\end{lemma}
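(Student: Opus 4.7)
The plan is to apply a direct adaptation of the Cole--Vishkin color reduction to rooted pseudoforests, which, after a constant-round preprocessing orientation, is the natural setting for the classical technique. The strategy has three phases: orient \(G\) as a rooted pseudoforest, iterate a bit-picking step to reduce from \(k\) colors to a constant palette in \(\myO{\log* k}\) rounds, and finally reduce the constant palette to \(3\) colors in \(\myO{1}\) additional rounds.

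First I would orient \(G\) in \(\myO{1}\) rounds so that every node has outdegree at most one. For tree components, this amounts to electing a root in each ball via the input \(k\)-coloring together with port numbers as a local tie-breaker, and then orienting edges toward the root. For pseudotree components containing a cycle, the cycle is oriented consistently using the input coloring (which is necessarily nonconstant on the cycle) together with the port numbering, and the remaining tree edges are oriented toward the cycle. After this step every non-root node has a unique parent \(p(v)\) and each node can identify its single outgoing port.

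Next I would iterate the Cole--Vishkin bit-picking step for \(\myO{\log* k}\) rounds. In each round every non-root node \(v\) with current color \(c(v)\) inspects its parent's current color \(c(p(v))\), locates the smallest index \(i\) at which the binary representations of \(c(v)\) and \(c(p(v))\) differ, and adopts the new color \((i, c(v)_i)\). Since adjacent nodes must differ at some bit, the resulting colors of \(v\) and \(p(v)\) disagree, so properness is preserved. A palette of size \(m\) shrinks to one of size \(2\lceil\log m\rceil\), so after \(\myO{\log* k}\) rounds the palette stabilizes at a fixed constant (say \(6\)). A standard \(\myO{1}\)-round finalization then reduces this constant palette to \(3\) colors by processing the surplus colors one at a time in decreasing order, each time reassigning every affected node to a free color in \(\{1,2,3\}\) that avoids conflicts with both its parent's current color and the colors of already-updated neighbors.

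The main obstacle I anticipate is ensuring correctness of the final \(\myO{1}\)-round reduction on cycle components, where each pseudotree may contain a cycle that must be properly \(3\)-colored jointly with the surrounding tree edges. This is handled by exploiting the rooted orientation: on a directed cycle each node has exactly one parent, so both the bit-picking iteration and the finalization treat the cycle uniformly with the rest of the pseudoforest, and the fact that every pseudoforest has chromatic number at most \(3\) guarantees that a valid assignment exists at each step. The resulting algorithm is deterministic, uses only port numbers and local colors, does not depend on \(\abs{V(G)}\), and terminates in \(\myO{\log* k}\) rounds.
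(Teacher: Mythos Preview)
The paper does not prove this lemma; it cites \cite{cole1986,goldberg1988} and in fact only ever applies it to \emph{rooted} pseudoforests (see \cref{lemma:fin-dep:col-rooted-pseudoforest}), where the outdegree-\(\le 1\) orientation is already part of the input. Your Phase~1 attempts to manufacture this orientation in \(\myO{1}\) PN rounds, and that is where the argument breaks. A node in a tree component of diameter \(d\) cannot locate any globally agreed root in fewer than \(d\) rounds, so ``electing a root in each ball'' produces inconsistent local orientations that do not glue into a global one; likewise a node at distance \(d\) from the unique cycle of its pseudotree cannot orient its incident edge ``toward the cycle'' without \(d\) rounds of communication. The input \(k\)-coloring does not rescue this: orienting each edge toward the smaller-colored endpoint takes one round but leaves local maxima with outdegree equal to their full degree, which is exactly what the single-parent Cole--Vishkin step cannot tolerate. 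With the orientation given (as in the paper's usage), Phases~2 and~3 are essentially the standard argument.

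There is also a smaller gap in Phase~3. Avoiding only the parent's current color and ``already-updated neighbors'' lets a node recoloring from, say, color \(5\) collide with a child that already holds a color in \(\{1,2,3\}\) and will never be touched again. The standard repair is a single shift-down round (each non-root adopts its parent's color) before eliminating colors \(6,5,4\); after the shift all children of any node share one common color, so avoiding the parent and the children is only two constraints and a free color in \(\{1,2,3\}\) is guaranteed.
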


\begin{figure}[t]
  \centering
  \includegraphics[scale=0.4]{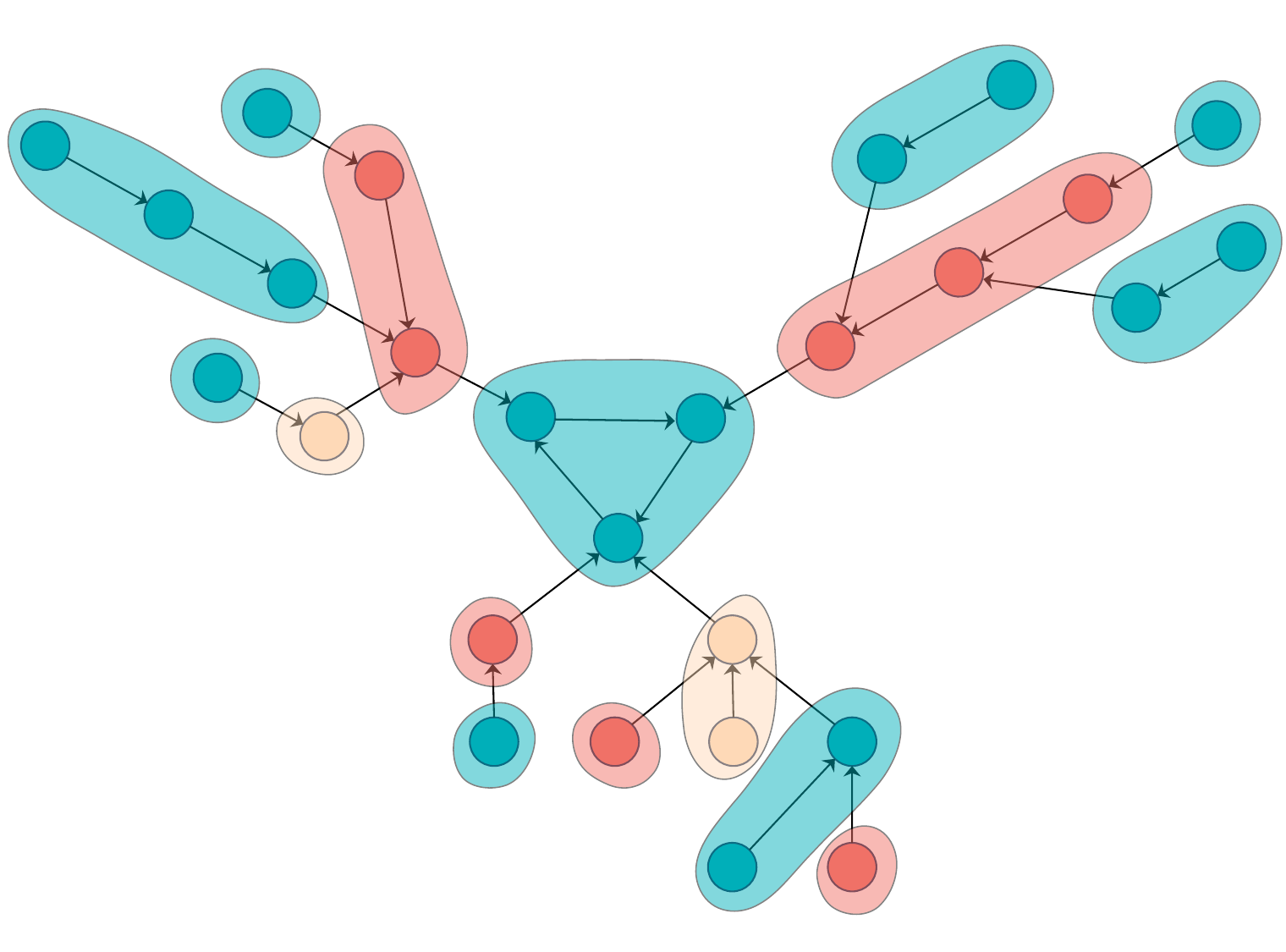}
  \caption{A decomposition of rooted pseudoforests in directed paths and cycles: each node \(v\) colors its in-neighbors with a uniformly sampled permutation of the elements of \([\inDeg(v)]\).
  The graph induced by nodes colored with color \(i\) is a disjoint union of directed paths and cycles.}
  \label{fig:fin-dep:rooted-pseudoforest}
\end{figure}

\begin{figure}[tp]
  \centering
  \begin{subfigure}{\textwidth}
    \centering
    \includegraphics[scale=0.33]{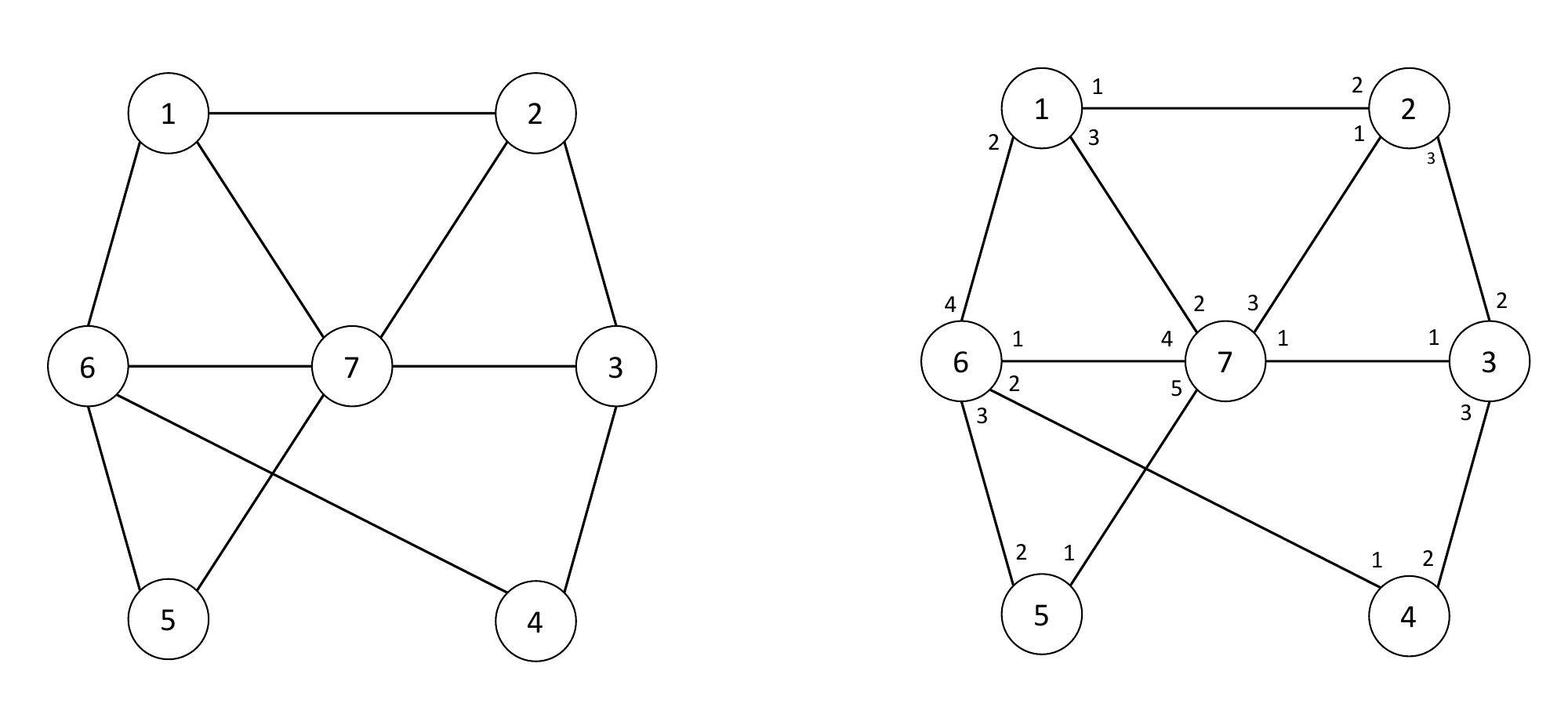}
    \caption{Each node \(v\) rearranges its port numbers uniformly at random.}
    \label{fig:fin-dep:bounded-graphs-1}
  \end{subfigure}
  \\[5mm]
  \begin{subfigure}{\textwidth}
    \centering
    \includegraphics[scale=0.45]{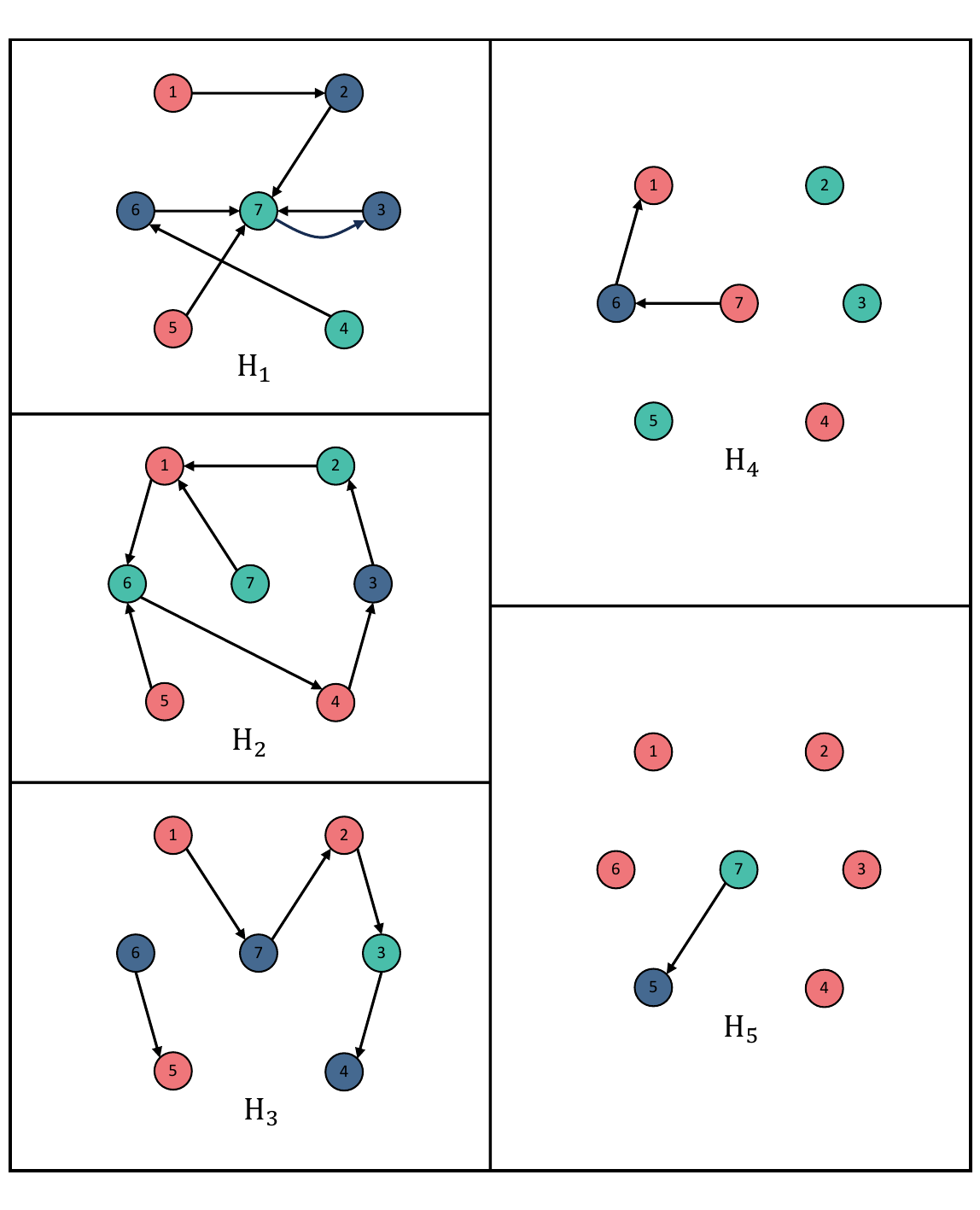}
    \caption{The \(\maxDeg = 5\) rooted pseudoforests that we take by considering port numbers \(i \in [\maxDeg]\).}
    \label{fig:fin-dep:bounded-graphs-2}
  \end{subfigure}
  \caption{A decomposition of a graph of maximum degree \(\maxDeg = 5\) in rooted pseudoforests: for the sake of image clarity, we focus on the undirected case.
  In \cref{fig:fin-dep:bounded-graphs-1}, each node \(v\) rearranges its port numbers with a uniformly sampled permutation of the elements of \([\deg(v)]\).
  As shown in \cref{fig:fin-dep:bounded-graphs-2}, edges hosting port number \(i\) at some endpoint are oriented away from that port (in case both endpoints host port number \(i\), the edge is duplicated) and form a rooted pseudoforest.}
  \label{fig:fin-dep:bounded-graphs}
\end{figure}

Now we are ready to prove our result on pseudoforests.

\begin{lemma}[Finitely-dependent coloring of rooted pseudoforests]\label{lemma:fin-dep:col-rooted-pseudoforest}
  Let \(\FF\) be a family of rooted pseudoforests of countably many nodes of maximum degree \(\maxDeg\).
  Then, there exists an outcome that associates to each graph \(G \in \FF\) a \(\myO{\log* \maxDeg}\)-dependent distribution on the vertices of \(G\) that gives a \(3\)-coloring of \(G\).
  Furthermore, the family of distributions output by the outcome is invariant under subgraph isomorphisms.
\end{lemma}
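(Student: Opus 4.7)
\medskip

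\textbf{Plan.} The goal is to build the desired $3$-coloring by first constructing a finitely-dependent coloring with many colors, then reducing the palette to $3$ via a constant-time Cole--Vishkin pass. Concretely, I would proceed in four steps that chain the composition lemmas proved earlier.

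\textbf{Step 1: A random decomposition into directed paths and cycles.} Fix any $G \in \FF$, of maximum degree $\maxDeg$ and maximum indegree $\maxInDeg \le \maxDeg$. Let each node $v$ independently sample a uniformly random bijection $\pi_v : \{\text{in-neighbors of } v\} \to [\inDeg(v)]$, and label every node $u$ with its out-neighbor's value $\pi_{\text{out}(u)}(u) \in [\maxInDeg]$ (roots, which have outdegree $0$, are assigned the default label $1$). For $i \in [\maxInDeg]$, let $G_i$ be the subgraph induced by nodes labelled $i$. Since the labels assigned by each $v$ form a permutation, every node of $G_i$ has indegree at most $1$ in $G_i$, and since a node has at most one parent in the rooted pseudoforest, its outdegree in $G_i$ is also at most $1$. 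Hence each $G_i$ is a disjoint union of directed paths and cycles. Setting $\Gamma(G) := (G_1,\dots,G_{\maxInDeg})$ gives a random $\maxInDeg$-decomposition of $G$ into the family $\PP_G$ of subgraphs that are disjoint unions of directed paths and cycles (which is closed under node removal, disjoint union, and preserved by subgraph isomorphism, as required by \cref{lemma:fin-dep:induced-process}). Since each label $\Gamma(G)_v$ depends only on the random bits of $v$'s (at most one) out-neighbor, the induced process is $2$-dependent, and because the permutations are chosen locally uniformly at random, the family $\{\Gamma(G) : G \in \FF\}$ is invariant under subgraph isomorphism.

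\textbf{Step 2: A finitely-dependent $4\maxDeg$-coloring of $G$.} By \cref{cor:fin-dep:baseline}, on $H(\PP_G)$ there exists a $1$-dependent $4$-coloring $\{X_v\}_{v \in V(H(\PP_G))}$, and the family of such colorings as $G$ ranges over $\FF$ is invariant under subgraph isomorphism. Applying \cref{lemma:fin-dep:induced-process} to $\Gamma(G)$ and $\{X_v\}$ yields an induced random process $\{Y_v^{(\Gamma(G))}\}_{v \in V}$ that is $(2 + 2)$-dependent and invariant under subgraph isomorphism. The label at $v$ is a pair $(i, c) \in [\maxInDeg] \times [4]$ (with $c = 0$ for the unused components), so this is a coloring with at most $4\maxDeg$ colors. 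To see it is a proper vertex coloring of $G$, consider any edge $\{u,v\}$ in the underlying graph: if $u,v$ end up in different $G_i$'s, their $i$-coordinates differ; if they end up in the same $G_i$, then the edge $\{u,v\}$ is present in $G_i$ by construction, so the $4$-coloring of $G_i$ assigns them distinct colors.

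\textbf{Step 3: Reducing to $3$ colors via Cole--Vishkin.} Using the $4\maxDeg$-coloring as the input labeling, invoke the deterministic \portnum algorithm of \cref{lemma:fin-dep:cole-vishkin} to produce a proper $3$-coloring of the (underlying) pseudoforest in $\myO{\log* (4\maxDeg)}$ rounds. Since $\maxDeg$ is a fixed constant of the graph family $\FF$, this running time is $\myO{1}$. Thus \cref{lemma:fin-dep:combination} applies: picking the randomized variant that first permutes port numbers locally uniformly at random at round $0$, and observing that the algorithm is independent of $\abs{V}$, we conclude that the resulting $3$-coloring distribution is $\myO{1}$-dependent and invariant under subgraph isomorphism.

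\textbf{Main obstacle.} The arithmetic of combining dependencies is routine once the framework lemmas are in place; the real work is in verifying that the random decomposition from Step~1 meets all the hypotheses of \cref{lemma:fin-dep:induced-process}. In particular, one must carefully argue (i) that $\PP_G$ is closed under the required graph operations and that isomorphic subgraphs of $G$ belong to $\PP_G$ together, (ii) that $\Gamma(G)$ is genuinely a $k$-decomposition in the sense of \cref{def:fin-dep:random-decomposition} after handling roots and potential multi-edges of the pseudoforest, and (iii) that the induced process $\{\Gamma(G)_y\}$ is invariant across the whole family $\FF$—this last property is what lets the coloring of a local patch depend only on the isomorphism type of its bounded-radius neighborhood rather than on global features of $G$. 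Once these are established, Steps~2 and~3 are direct applications of \cref{lemma:fin-dep:induced-process} and \cref{lemma:fin-dep:combination}.
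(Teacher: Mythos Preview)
Your proposal is correct and follows essentially the same approach as the paper: a random in-neighbor permutation to decompose the rooted pseudoforest into directed paths and cycles, the Holroyd--Liggett $4$-coloring lifted through \cref{lemma:fin-dep:induced-process} to get a $4$-dependent $4\maxInDeg$-coloring, and then Cole--Vishkin (\cref{lemma:fin-dep:cole-vishkin}) via \cref{lemma:fin-dep:combination} with the random port-permutation at round~$0$. The only cosmetic difference is that the paper states the final dependency as $\myO{\log* \maxDeg}$, keeping $\maxDeg$ as a parameter, rather than collapsing it to $\myO{1}$.
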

\begin{proof}
  Let us fix the rooted pseudoforest \(G \in \FF\).
  Let \(\PP_G\) be the family of all subgraphs of \(G\) formed by the disjoint union of directed paths and cycles.
  Notice that \(\PP_G\) is closed under node removal and disjoint graph union.
  Furthermore, for any pair of isomorphic subgraphs \(G_1,G_2 \subseteq G\), \(G_1 \in \PP_G \implies G_2 \in \PP_G\).
  Let \(H(\PP_G)\) be the graph formed by the disjoint union of a copy of each element of \(\PP_G\).
  By \cref{cor:fin-dep:baseline}, \(H(\PP_G)\)
  admits a \(1\)-dependent \(4\)-coloring distribution \(\{X_v\}_{v\in H(\PP_G)}\) (with colors in \([4]\)), such that \(\{\{X_v\}_{v\in H(\PP_G)} : G \in \FF\}\) is invariant under subgraph isomorphism.

  Consider now a (non-proper) coloring of the pseudoforest in which each node \(u\) colors its in-neighbors with a permutation of \(\{1, \dots, \inDeg(u)\}\) sampled uniformly at random:
  if a node has outdegree zero, then it is deterministically colored with color 1.
  Such a coloring is described by a 2-dependent distribution \(\{Z_v\}_{v \in V}\) that is (trivially) invariant under subgraph isomorphisms that keep edge orientation. 
  Also,
  \(\{Z_v\}_{v \in V}\) identifies \(\maxInDeg\) disjoint random subsets of nodes \(\Vrand_1, \dots, \Vrand_{\maxInDeg}\), where nodes in \(\Vrand_i\) are colored with the color \(i\), and \(\maxInDeg\) is the maximum indegree of the graph.
  Let \(\Gamma(G) = (G[\Vrand_1], \dots, G[\Vrand_{\maxInDeg}])\), where \(\Gamma(G)[i]\) 
  is the random graph induced by \(\Vrand_i\).
  Furthermore, observe that the output of \(\Gamma(G)[i]\) is the disjoint union of oriented paths and/or oriented cycles, with \(\Gamma(G)[i]\) being the \(i\)-th entry of the \(\maxInDeg\)-tuple \(\Gamma(G)\) (see \cref{fig:fin-dep:rooted-pseudoforest}).
  Notice that the process \(\Gamma(G)\) is 2-dependent and is a random \(\maxInDeg\)-decomposition of \(G\) in \(\PP_G\) (according to \cref{def:fin-dep:random-decomposition}), such that the random decompositions in \(\{\Gamma(G) : G \in \FF \}\) are invariant under subgraph isomorphisms.
  By \cref{lemma:fin-dep:induced-process}, the random process \(\{Y^{(\Gamma(G))}_v\}_{v \in V}\), that is induced by the action of \(\{X_v\}_{v \in H(\PP_G)}\) over the random \(\maxInDeg\)-decomposition \(\Gamma(G)\) (according to \cref{def:fin-dep:induced-process}), is a \(4\)-dependent distribution that gives a \(4\maxInDeg\)-coloring of \(G\): in fact, \(\Gamma(G)[i]\) and \(\Gamma(G)[j]\) are disjoint if \(i\neq j\), hence only one entry of \(Y^{(\Gamma(G))}_v\) is non-zero, for all \(v \in V\).
   
  We then combine \(\{Y^{(\Gamma(G))}_v\}_{v\in V}\) and a modified version of the \PN algorithm from \cref{lemma:fin-dep:cole-vishkin} where at round 0 each node permutes port numbers locally u.a.r.:
  by \cref{lemma:fin-dep:combination}, we obtain an \(\myO{\log* \maxDeg}\)-dependent 3-coloring distribution \(\{Q^{(G)}_v\}_{v \in V(G)}\) of \(G\) such that processes in \(\{\{Q^{(G)}_v\}_{v \in V(G)} : G \in \FF\}\) are invariant under subgraph isomorphisms.
\end{proof}

Our finitely-dependent coloring of pseudoforests can be used as a baseline to provide a \((\maxDeg + 1)\)-coloring of graphs with maximum degree \(\maxDeg\).
The tool we use is again an application of the Cole--Vishkin color reduction technique  \cite{maus2022}.

\begin{lemma}[\PN algorithm for color reduction of general graphs \cite{maus2022}]\label{lemma:fin-dep:color-reduction-general-graphs}
  Let \(G = (V,E)\) be a graph with maximum degree \(\maxDeg\) and countably many nodes.
  Suppose \(G\) is given in input a \(k\)-coloring for some \(k \ge \maxDeg + 1\).
  There exists a deterministic \PN algorithm that does not depend on the size of the input graph and outputs a \((\maxDeg + 1)\)-coloring of \(G\) in time \(\myO{\log* k + \sqrt{\maxDeg \log \maxDeg}}\).
\end{lemma}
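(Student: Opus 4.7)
The algorithm proceeds in two phases: (i) a fast reduction from $k$ colors down to $\myO{\maxDeg^2}$ colors via Linial's one-round technique iterated $\myO{\log* k}$ times; and (ii) a slow ``shattering'' phase that reduces $\myO{\maxDeg^2}$ colors to $\maxDeg+1$ colors by recoloring one color class per round. Both phases use only the input coloring to break symmetry, so they run in the \PN model (no identifiers) and do not depend on $|V|$.

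\textbf{Phase 1 (Linial's reduction, $\myO{\log* k}$ rounds).} The key tool is the existence, for every $k \ge \maxDeg+1$, of a \emph{cover-free family}: a collection $\FF \subseteq 2^{[m]}$ of $k$ subsets of $[m]$, with $m = \myO{\maxDeg^2 \log k}$, such that no set $S \in \FF$ is contained in the union of any $\maxDeg$ other sets of $\FF$. Such a family exists by a probabilistic construction (which is deterministic once one fixes a canonical choice depending only on $k$ and $\maxDeg$). In one round, each node $v$ sends its color $c(v)$ to its neighbors and, interpreting $c(v)$ as an index into $\FF$, picks as its new color any element of $S_{c(v)} \setminus \bigcup_{u \in N(v)} S_{c(u)}$. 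By the cover-free property this set is nonempty, and the resulting labeling is a proper coloring with $m = \myO{\maxDeg^2 \log k}$ colors. Iterating this one-round primitive, the number of colors drops from $k$ to $\myO{\maxDeg^2 \log k}$ to $\myO{\maxDeg^2 \log \log k}$, and so on, reaching $\myO{\maxDeg^2}$ colors after $\myO{\log* k}$ rounds.

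\textbf{Phase 2 (sequential color elimination, $\myO{\maxDeg^2}$ rounds).} Let $q = \myO{\maxDeg^2}$ be the palette size produced by Phase 1; fix a canonical ordering $q, q-1, \dots, \maxDeg+2$ of the ``excess'' colors. For $i$ running from $q$ down to $\maxDeg+2$, perform one round in which every node currently colored $i$ simultaneously recolors itself with the smallest color in $\{1,\dots,\maxDeg+1\}$ that is not used by any of its neighbors. Since each such node has at most $\maxDeg$ neighbors but $\maxDeg+1$ target colors are available, a valid choice always exists; moreover, because all nodes recolored in a single round previously shared the same color, they are pairwise non-adjacent, so no conflict is introduced. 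After $q - (\maxDeg+1) = \myO{\maxDeg^2}$ rounds, only colors in $\{1,\dots,\maxDeg+1\}$ remain, yielding a proper $(\maxDeg+1)$-coloring. Summing the two phases gives locality $\myO{\log* k + \maxDeg^2}$, and no step of the procedure uses $|V|$ or any identifier beyond the input color, so the algorithm is a valid \PN algorithm independent of graph size.

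\textbf{Expected obstacle.} The only delicate point is the existence and uniformity of the cover-free family used in Phase 1: one must verify that a single deterministic family $\FF_k$ can be chosen as a function of $k$ and $\maxDeg$ alone (so that the algorithm's code does not depend on $n$), and that $|\FF_k| = \myO{\maxDeg^2 \log k}$ suffices to ensure the nested iteration reaches $\myO{\maxDeg^2}$ in $\myO{\log* k}$ steps rather than stalling at some larger polynomial in $\maxDeg$. This is exactly the content of Linial's classical construction, so no new ideas are needed beyond citing \cite{panconesi2001}.
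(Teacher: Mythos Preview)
The paper does not give its own proof of this lemma; it is stated as a citation to \cite{panconesi2001} (building on Linial's technique), and your two-phase outline---cover-free-family reduction to $\myO{\maxDeg^2}$ colors in $\myO{\log* k}$ rounds, followed by eliminating one color class per round---is exactly the standard argument from that line of work. Your sketch is correct, and the ``expected obstacle'' you flag (that the iteration stalls around $\myO{\maxDeg^2 \log \maxDeg}$ rather than $\myO{\maxDeg^2}$ unless one invokes the sharper polynomial-based cover-free family for the final step) is real but harmless here, since even $\myO{\maxDeg^2 \log \maxDeg}$ excess colors would still be absorbed into the $\myO{\maxDeg^2}$ term of the stated bound up to a logarithmic factor, and the exact $\myO{\maxDeg^2}$ endpoint is achieved by Linial's construction over finite fields once $k = \mathrm{poly}(\maxDeg)$.
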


We first present a corollary of \cref{lemma:fin-dep:color-reduction-general-graphs} where we characterize the combination of an input finitely-dependent coloring distribution and the \PN color-reduction algorithm. 

\begin{corollary}\label{cor:fin-dep:color-reduction-from-fin-dep-coloring}
  Let \(G = (V,E)\) be a graph with maximum degree \(\maxDeg\) and countably many nodes.
  Let \(\{X_v\}_{v \in V}\) be a \(T\)-dependent distribution that gives a \(k\)-coloring of \(G\), with \(k \ge \maxDeg + 1\).
  Then there exists a distribution \({Y_v}_{v \in V}\) that gives a \((\maxDeg + 1)\)-coloring of \(G\) and is \(\myO{\log* k + \sqrt{\maxDeg \log \maxDeg} + T}\)-dependent.
  Furthermore, if \(\{X_v\}_{v \in V}\) is invariant under subgraph isomorphisms, then so is \(\{Y_v\}_{v \in V}\).
\end{corollary}
\begin{proof}
  We combine the distribution \(\{X_v\}_{v \in V}\) with a modified version of the \PN algorithm from \cref{lemma:fin-dep:color-reduction-general-graphs} where at round 0 each node permutes port numbers locally u.a.r.:
  \cref{lemma:fin-dep:combination} implies the existence of an \(\myO{\log* k + \sqrt{\maxDeg \log \maxDeg} + T}\)-dependent \((\maxDeg + 1)\)-coloring distribution of \(G\).
  If \(\{X_v\}_{v \in V}\) is invariant under subgraph isomorphism, \cref{lemma:fin-dep:combination} implies that \(\{Y_v\}_{v \in V}\) has the same property.
\end{proof}

\begin{lemma}[Finitely-dependent coloring of bounded-degree graphs]\label{lemma:fin-dep:coloring-bounded-deg-graphs}
  Let \(\FF\) be a family of graphs of countably many nodes and maximum degree \(\maxDeg\).
  Then, there exists an outcome that associates to each graph \(G \in \FF\) an \(O({ \sqrt{\maxDeg \log \maxDeg}})\)-dependent distribution on the vertices of \(G\) that gives a \((\maxDeg + 1)\)-coloring of \(G\).
  Furthermore, the family of distributions outputted by the outcome are invariant under subgraph isomorphisms.
\end{lemma}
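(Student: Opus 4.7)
The plan mirrors the construction for rooted pseudoforests (\cref{lemma:fin-dep:col-rooted-pseudoforest}), but with the random decomposition taking place at the level of edges rather than at the level of endpoints. Fix $G \in \FF$; if $G$ is undirected, replace each edge by a pair of oppositely oriented directed edges, so every vertex has outdegree at most $\maxDeg$. Each node $v$ then samples independently a uniformly random permutation of $[\outDeg(v)]$ and uses it to label its outgoing edges. For each $i \in [\maxDeg]$, the set of edges carrying label $i$ is a spanning subgraph in which every vertex has outdegree at most $1$, hence a rooted pseudoforest of maximum degree at most $\maxDeg$. The resulting random $\maxDeg$-tuple $\Gamma(G) = (G_1, \dots, G_\maxDeg)$ is an $\myO{1}$-dependent random decomposition of $G$ in the family $\PP_G$ of rooted-pseudoforest subgraphs of $G$, and it is invariant under subgraph isomorphisms because the labels are determined purely by local uniform randomness. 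The family $\PP_G$ is closed under node removal and disjoint graph union and is preserved under isomorphisms, so the hypotheses of \cref{lemma:fin-dep:induced-process} are in place.

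Next, I would apply \cref{lemma:fin-dep:col-rooted-pseudoforest} to the family of rooted pseudoforests to obtain an $\myO{\log* \maxDeg}$-dependent $3$-coloring distribution on $H(\PP_G)$ that is invariant under subgraph isomorphisms. Feeding this distribution together with $\Gamma(G)$ into \cref{lemma:fin-dep:induced-process} yields an induced process $\{Y_v^{(\Gamma(G))}\}_{v \in V(G)}$ on $G$ that is $\myO{\log* \maxDeg}$-dependent, invariant under subgraph isomorphisms, and takes values in $\{0,1,2,3\}^\maxDeg$ (hence in an alphabet of constant size). I would then verify that this tuple-valued labeling is a proper coloring of $G$: if $u$ and $v$ are adjacent in $G$ via a directed edge $(u,v)$ carrying label $i$, both $u$ and $v$ lie in pseudoforest $G_i$ and are adjacent there, so the $3$-coloring restricted to $G_i$ forces $Y_u^{(\Gamma)}[i] \ne Y_v^{(\Gamma)}[i]$, ruling out $Y_u^{(\Gamma)} = Y_v^{(\Gamma)}$.

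Finally, I would post-process the resulting $\myO{1}$-colored graph using the port-numbered algorithm of \cref{lemma:fin-dep:color-reduction-general-graphs}, which reduces to a $(\maxDeg+1)$-coloring in $\myO{\log* 4^\maxDeg + \maxDeg^2} = \myO{\maxDeg^2}$ rounds. By \cref{lemma:fin-dep:combination}, applied in the variant in which the algorithm begins by permuting port numbers uniformly at random, the composed distribution is $\myO{\maxDeg^2}$-dependent and the invariance under subgraph isomorphisms is preserved. The main subtlety to check is that all ingredients respect the preliminary orientation-doubling step, and that the notion of ``invariance under subgraph isomorphism'' used in \cref{lemma:fin-dep:induced-process} and \cref{lemma:fin-dep:combination} is compatible with the orientation-respecting isomorphisms that survive the doubling; beyond that, the argument is a direct assembly of the previously established lemmas, exactly paralleling the proof of \cref{lemma:fin-dep:col-rooted-pseudoforest}.
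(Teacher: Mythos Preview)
Your proposal is correct and follows essentially the same approach as the paper: orient the graph by doubling edges, randomly decompose into rooted pseudoforests via a uniformly random permutation of outgoing-edge labels, apply \cref{lemma:fin-dep:col-rooted-pseudoforest} and \cref{lemma:fin-dep:induced-process} to obtain a finitely-dependent proper coloring with $\myO{1}$ many colors, and then reduce to $\maxDeg+1$ colors via \cref{lemma:fin-dep:color-reduction-general-graphs} and \cref{lemma:fin-dep:combination}. The only cosmetic difference is that the paper states the intermediate palette as $3^{\maxDeg}$ rather than $4^{\maxDeg}$, and your explicit flag about orientation-respecting isomorphisms matches the paper's remark following \cref{lemma:fin-dep:induced-process}.
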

\begin{proof}
  Let us fix \(G = (V,E) \in \FF\).
  First, if \(G\) is not directed, then duplicate each edge and give to each pair of duplicates different orientations.
  Since a coloring of the original graph is a proper coloring if and only if the same coloring is proper in the directed version, w.l.o.g., we can assume \(G\) to be directed.

  Let \(\PP_G\) be a family of all subgraphs of \(G\) formed by rooted pseudotrees and disjoint union of rooted pseudotrees.
  Notice that \(\PP_G\) is closed under node removal and disjoint graph union.
  Furthermore, for any two pair of isomorphic subgraphs \(G_1,G_2 \subseteq G\), \(G_1 \in \PP_G \implies G_2 \in \PP_G\).
  The graph \(H(\PP_G)\) that is the disjoint union of all elements of copies of each \(\PP_G\) is a rooted pseudoforest of maximum degree \(\maxDeg\) and, by \cref{lemma:fin-dep:col-rooted-pseudoforest}, admits a \(3\)-coloring \(\myO{\log* \maxDeg}\)-dependent distribution \(\{X_v\}_{v \in H(\PP_G)}\) such that processes in \(\{\{X_v\}_{v \in H(\PP_G)} : G \in \FF\}\) are invariant under subgraph isomorphism.
  
  Now, consider a process in which each node \(v\) samples uniformly at random a permutation of a port numbering from \(\{1, \dots, \outDeg(v)\}\) for its outgoing edges.
  For each \(i \in [\maxOutDeg]\), consider the graph \(G_i\) induced by edges that host port \(i\):
  notice that \(G_i\) is a rooted pseudoforest as  each node has at most one out-edge with port \( i \). 
  If a node has degree \(0\), it deterministically joins \(G_1\), which remains a pseudoforest.
  The random choice of port numbering defines a random variable \(\Gamma \in \PP_G^k\), where \(\Gamma[i]\) is the graph induced by port number \(i\): according to \cref{def:fin-dep:random-decomposition}, we obtain a random \(\maxOutDeg\)-decomposition \(\Gamma(G)\) of \(G\) in \(\PP_G\) which is 2-dependent (by construction): also, the random decompositions in \(\{\Gamma(G) : G \in \FF\}\) are invariant under subgraph isomorphisms.
  For an example of a possible output of the random decomposition, see \cref{fig:fin-dep:bounded-graphs}.

  Hence, the random process \(\{Y_v^{(\Gamma(G))}\}_{v \in V}\) from \cref{def:fin-dep:induced-process} is well defined, and provides a proper \(3^\maxDeg\)-coloring of \(G\).
  By \cref{lemma:fin-dep:induced-process}, the random process \(\{Y_v^{(\Gamma(G))}\}_{v \in V}\) is \(\myO{\log* \maxDeg}\)-dependent and, when \(G \in \FF\) varies, the processes \(\{Y_v^{(\Gamma(G))}\}_{v \in V}\) are  invariant under subgraph isomorphisms.

  By \cref{cor:fin-dep:color-reduction-from-fin-dep-coloring} we obtain an \(O({ \sqrt{\maxDeg \log \maxDeg}})\)-dependent \((\maxDeg + 1)\)-coloring distribution on \(G\) that is invariant under subgraph isomorphism (as \(G \in \FF\) varies).
\end{proof}

\cref{lemma:fin-dep:coloring-bounded-deg-graphs} answers an open question posed
by Holroyd \cite{holroyd2024}.
(See also \cref{cor:intro:fin-dep-trees} and the discussion around it.)

\begin{corollary}\label{cor:fin-dep:trees}
  Let \(G = (V,E)\) be the infinite \(d\)-regular tree.
  There exists a finitely-dependent distribution giving a \((d+1)\)-coloring of \(G\) that is invariant under automorphisms.
\end{corollary}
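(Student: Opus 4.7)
The plan is to derive \cref{cor:fin-dep:trees} as an immediate application of \cref{thm:fin-dep} to the LCL problem of proper vertex $(d+1)$-coloring. First, I would note that $(d+1)$-coloring is an LCL with checking radius $r = 1$: validity is witnessed by inspecting each edge, equivalently each radius-$1$ neighborhood. Let $\FF$ be the singleton family containing the infinite $d$-regular tree $G$; its maximum degree is $\maxDeg = d$.

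Next, I would verify the \slcl hypothesis of \cref{thm:fin-dep}. The standard greedy algorithm solves $(d+1)$-coloring in \slcl with locality $T = 1$: when a node $v$ is processed, it has at most $d$ already-colored neighbors within its radius-$1$ ball, so at least one of the $d+1$ candidate colors is still free and can be assigned to $v$. The hypotheses of \cref{thm:fin-dep} are therefore satisfied with $T = r = 1 = \myO{1}$, and invoking the theorem produces an $\myO{T^2 \maxDeg^{2T}} = \myO{d^2}$-dependent distribution $\{Y_v^{(G)}\}_{v \in V(G)}$ on $G$ that outputs a valid $(d+1)$-coloring, and this distribution is invariant under subgraph isomorphisms across the family $\FF$.

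Finally, I would upgrade subgraph-isomorphism-invariance to automorphism-invariance. Given any automorphism $f$ of $G$, take $G_1 = G_2 = G$ and $H_1 = H_2 = G$ in the definition of subgraph-isomorphism-invariance: the map $f$ itself is a subgraph isomorphism from $G$ onto $G$ that trivially preserves all constant-radius neighborhoods, so the invariance property guaranteed by \cref{thm:fin-dep} forces $\{Y_v^{(G)}\}_{v \in V(G)}$ and $\{Y_{f(v)}^{(G)}\}_{v \in V(G)}$ to be equal in law, which is exactly the desired automorphism-invariant finitely-dependent $(d+1)$-coloring. The only point I would double-check is that the underlying machinery (\cref{lemma:fin-dep:col-rooted-pseudoforest,lemma:fin-dep:coloring-bounded-deg-graphs,lemma:fin-dep:induced-process}) handles countably infinite graphs, but this is explicitly built into their hypotheses (``countably many nodes''), so the infinite $d$-regular tree is covered without modification and no real obstacle arises.
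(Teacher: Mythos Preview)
Your proposal is correct and follows essentially the same approach as the paper: observe that greedy $(d+1)$-coloring gives a constant-locality \slocal algorithm, then invoke \cref{thm:fin-dep}. You spell out details the paper leaves implicit (the checking radius, the passage from subgraph-isomorphism invariance to automorphism invariance, and applicability to infinite graphs), but the argument is the same.
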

\begin{proof}
  Finding a \((d+1)\)-coloring of any \(d\)-regular tree has complexity \(O(1)\) in \slocal;
  the coloring can just be performed greedily.
  \cref{lemma:fin-dep:coloring-bounded-deg-graphs} yields the desired result.
\end{proof}

Consider any graph \(G = (V,E)\).
For any \(k \in \natPos\), a distance-\(k\) coloring of \(G\) is an assignment of colors \(c: V \to \labels\) such that, for each node \(v \in V\), all nodes in \(\NN_k(v) \setminus \{v\} = \{u \in V: \dist_G(u,v) \le k\}  \setminus \{v\}\) have colors that are different from \(c(v)\).

It is well known that any LCL problem \(\problem\) that has complexity \(\myO{\log* n}\) in the \local model has the following property: 
there exists a constant \(k \in \natPos\) (that depends only on the hidden constant in \(\myO{\log* n}\)) such that, if the input graph is given a distance-\(k\) coloring, then \(\problem\) is solvable in time \(\myO{1}\) in the \portnum model \cite{chang16exponential}.
Furthermore, no knowledge of the size of the input graph is required.

\begin{theorem}\label{thm:fin-dep}
  Consider any LCL problem \(\problem\) with checking radius \(r\) that has complexity \(T \ge r\) in the \local model over a family \(\FF\) of graphs with maximum degree \(\maxDeg\), where \(T = \myO{\log* n}\) for input graphs of size \(n\).
  For each \(G \in \FF\), there exists an \(\myO{f(\maxDeg)}\)-dependent distribution \(\{Y_v^{(G)}\}_{v \in V(G)}\) that solves \(\problem\) over \(G\).
  Furthermore, 
  the processes in \(\{\{Y_v^{(G)}\}_{v \in V(G)} : G \in \FF\}\) are invariant under subgraph isomorphisms.  
\end{theorem}
\begin{proof}
  Fix \(G \in \FF\) of size \(n\).
  As said before, there exists a constant \(k \in \natPos\) (that depends only on the description of the \(\problem\)) such that, if the input graph is given a distance-\(k\) coloring, then \(\problem\) is solvable in time \(\myO{1}\) (hiding some dependence on \(\maxDeg\)) in the \portnum model \cite{chang16exponential}.
  Furthermore, no knowledge of the size of the input graph is required.
  Consider the power graph \(G^{k}\), where \(k \ge T\), defined by \(G^k = (V, E^k)\) with \(E^k = \{\{u,v\} : u,v \in V, \dist_G(u,v) \le k\}\).
  The maximum degree of \(G^k\) is \(\maxDeg^k\).
  By \cref{lemma:fin-dep:coloring-bounded-deg-graphs}, \(G^k\) admits an \(\myO{\sqrt{k \maxDeg ^k \log \maxDeg}}\)-dependent \((\maxDeg^k + 1)\)-coloring distribution \(\{X_v\}_{v \in V}\) that is invariant under subgraph isomorphisms (as \(G \in \FF\) varies).
  Notice that such a coloring provides a distance-\(k\) coloring for \(G\).

  Let \(\algo\) be the algorithm in the \portnum model that solves \(\problem\) in time \(O(1)\) when given the distance-\(k\) coloring as input.
  Consider a \PN algorithm \(\AA'\) that simulates \(\algo\) and is defined as follows:
  At round 0, \(\AA'\) permutes ports locally u.a.r.
  Then, \(\AA'\) simply simulates \(\algo\) using identifiers given by the distance-\(k\) coloring and properly solves \(\problem\) by the hypotheses.
  Notice that \(\{Y_v\}_{v \in V}\), that is, the random process induced by combining \(\{X_v\}_{v \in V}\) and \(\AA'\) is a \(\myO{f(\maxDeg)}\)-dependent distribution on \(G\) by \cref{lemma:fin-dep:combination} with the required invariance properties, and we get the thesis by choosing \(k = T\).
\end{proof}   %
\section{\boldmath Simulation of \onlinelocal in \slocal for rooted trees}
\label{sec:olocal-slocal-simulation}

In this section, we show how to turn an \onlinelocal algorithm solving an LCL problem in rooted forests into a deterministic \slocal algorithm solving the same problem.
More concretely, we prove the following theorems:
\begin{restatable}{theorem}{restateThmOlocalSlocalSim}
  \label{thm:olocal-slocal-simulation}
  Let $\Pi$ be an LCL problem with degree constraint $\maxDeg$, input label set $\inLabels$, output label set $\outLabels$, and checking radius \(r > 0\).
  In addition, let $\algoA$ be an \onlinelocal algorithm solving $\Pi$ with
  locality $T(n)$ over rooted forests.
  Then the following holds:
  \begin{enumerate}
    \item If \(\algoA\) is deterministic, then there exists a deterministic
    \slocal algorithm solving $\Pi$ with locality $O(r) + T(2^{O(n^3)})$.
    \item If \(\algoA\) is randomized and has success probability \(p(n) > 0\),
    then there exists a deterministic \slocal algorithm solving $\Pi$ with
    locality $O(r) + T(2^{O(n^3)} + 2^{O(2^{n^2})} \cdot \log \frac{1}{p(n)})$.
  \end{enumerate}
\end{restatable}

\cref{thm:olocal-slocal-simulation} implies that any randomized \olocal algorithm with locality \(o(\log \log \log n)\) solving an LCL \(\problem\) over rooted trees can be turned into an \(\slocal\) algorithm solving \(\problem\) with locality \(o(\log n)\).
Next theorem shows that, over rooted trees, the class of LCL problems with complexity \(o(\log n)\) is the same as that of LCL problems with complexity \(O(1)\).

\begin{restatable}{theorem}{thmolocalsimreductionslocal}\label{thm:olocal-sim:reduction-slocal}
  Let \(\AA\) be an \slocal algorithm solving an LCL problem \(\problem\) that runs with locality \(o(\log_{\maxDeg} n)\) over rooted forests of maximum degree \(\maxDeg\) and \(n\) nodes.
  Then, there exists an \slocal algorithm \(\BB\) solving \(\problem\) with locality \(O(1)\).
\end{restatable}

It is folklore that any \(O(1)\)-round \slocal algorithm solving an LCL can be turned into an \(O(\log* n)\)-round \local algorithm solving the same LCL, implying the thesis of \cref{thm:intro-rolocal-simulation}.

\subsection{\boldmath Amnesiac \onlinelocal algorithms}

We start by formalizing what an \onlinelocal algorithm sees when run for a fixed number of steps on a graph.
We then define formally what we mean by amnesiac algorithms:
\begin{definition}[Partial \onlinelocal run of length $\ell$]
  Let $G$ be a graph with an ordering of nodes $v_1, v_2, \dots, v_n$.
  Consider the subgraph~$G_\ell \subseteq G$ induced by the radius-$T$ neighborhoods of the first $\ell$ nodes $v_1, \dots, v_\ell$.
  We call $(G_\ell, (v_1, \dots, v_\ell))$ the \emph{partial \onlinelocal run of length~$\ell$ of $G$} as this is exactly the information that an \onlinelocal algorithm would know about $G$ when deciding the output for node~$v_\ell$.

  We denote by \(({G}_\ell, (v_1, \dots, v_\ell))[v_\ell]\)
  the pair
  \((\bar{G}_\ell, (w_1, \dots, w_k))\) where \(\bar{G}_\ell \subseteq G_\ell\)
  is the connected component containing \(v_\ell\), and \((w_1,\dots, w_k)\) is
  the maximal subsequence of \((v_1, \dots, v_\ell)\) of nodes that belong to
  \(\bar{G}_\ell\). In such case, \(k\) is the length of
  {
    \((G_\ell, (v_1, \dots, v_\ell))[v_\ell]\).
  }
  Trivially, \(w_k = v_\ell\).
  
  If $({G}_\ell, (v_1, \dots, v_\ell))[v_\ell] = ({G}_\ell, (v_1, \dots,
  v_\ell))$, then we say that $(G_\ell, (v_1, \dots, v_\ell))$ is a
  \emph{partial one-component \onlinelocal run of length~$\ell$ of $G$}.
  In such case, we call $v_\ell$ the \emph{center} of the neighborhood.

  Finally, we denote by $G_\ell^-$ the disjoint set of partial one-component \onlinelocal runs formed by $v_1, \dots, v_{\ell-1}$, that is, one step shorter than $G_\ell$; this is exactly the information that an \onlinelocal algorithm knows before labeling node $v_\ell$ when $(G_\ell, (v_1, \dots, v_\ell))$ is a partial one-component \onlinelocal run.
\end{definition}

\begin{definition}[Amnesiac algorithm]
  Let $\algoA$ be a deterministic \onlinelocal algorithm, and let us fix the number of nodes to be \(n\).
  We say that \(\algoA\) is \emph{amnesiac} if the following condition is met:
  \begin{displayquote}
    Consider any two graphs \(G,H\) of \(n\) nodes and any adversarial orderings of the nodes \((v_1, \dots, v_n)\) (for \(G\)) and \((u_1, \dots, u_n)\) (for \(H\)). 
    Fix any pair \((i,j) \in [n]^2\) of indices such that
    \begin{align*}
      (G_i, (v_1, \dots, v_i))[v_i] &= (\bar{G}_i, (v_1', \dots, v_k')) \text{ and } \\
      (H_j, (u_1,\dots, u_j))[u_j] &= (\bar{H}_i, (u_1', \dots, u_k'))
    \end{align*}
    and are isomorphic (with the isomorphism being
    order-preserving, i.e., bringing \(v_h'\) into \(u_h'\), for all \(h \in
    [k]\)).
    Then the outputs of \(u_i\) and \(v_j\) coincide.
  \end{displayquote}
  
  Intuitively, $\algoA$ is amnesiac if the output of $\algoA$ for some
  node $v$ depends only on the local connected component of the partial online
  local run and not anything else the algorithm has seen. 
\end{definition}

In general, an \onlinelocal algorithm uses global memory and hence it is not amnesiac.

\subsection{Online to amnesiac}

We are now ready to generalize the intuition we provided in
\cref{ssec:overview-olocal-slocal-simulation} for turning any \onlinelocal
algorithm into an amnesiac algorithm with the cost of blowing up its locality.

\begin{lemma}
  \label{lemma:olocal-amnesiac}
  Let $\Pi$ be an LCL problem with degree constraint $\maxDeg$, input label set $\inLabels$ and output label set $\outLabels$.
  Let $\algoA$ be a deterministic \onlinelocal algorithm solving $\Pi$ over a family \(\FF\) of graphs that is closed under disjoint graph union and node/edge removal.
  Assume \(\algoA\) has locality $T(n)$ for an \(n\)-node graph.
  Then there exists an amnesiac \onlinelocal algorithm $\algoA'$ that solves $\Pi$ on an \(n\)-node graph in \(\FF\) with locality $T(2^{O(n^3)})$.
\end{lemma}
\begin{proof}
  Let $T = T(N)$ be the locality of the following experiment; we determine the size~$N$ of the experiment graph later.
  Let $\GG_\ell$ be the set of all partial one-component \onlinelocal runs of length~$\ell$ for $n$-node graphs in \(\FF\) (up to isomorphisms).
  We denote $g_\ell = |\GG_\ell|$ and remark that $g_\ell \le g = 2^{n^2} \abs*{\inLabels}^n$. 
  Notice that the term \(2^{2^n}\) takes into account not only all possible topologies (up to isomorphisms), but also all possible orderings of the nodes:
  indeed, there are at most \(2^{n^2}\) directed graphs of \(n\) nodes where the vertices are labelled with elements of \([n]\) (which defines an ordering).
  We set $N_\ell = (3\abs*{\outLabels} n g)^{n - \ell + 1} n$.

  We now adaptively construct an experiment graph~$H$ in $n$ phases.
  In the first phase, we construct $N_1$ copies of all graphs in $\GG_1$, that is, we construct $N_1g_1$ disjoint components describing all possible isolated neighborhoods an \onlinelocal algorithm might see.
  We then show the centers of these regions to $\algoA$ and let it label them.
  As there are {$N_1$} copies of each partial online local
  run of length 1, for each of these, say $\TT$, there exists a canonical label
  $\sigma_\TT$ that appears at least $N_1 / \abs*{\outLabels}$ times, by the
  pigeonhole principle.
  We say that such \onlinelocal runs of length 1 are \emph{good}, and ignore the rest.

  In each subsequent phase~$\ell$, we again construct $N_\ell$ copies of all graphs in
  $\GG_\ell$, with the catch that, for each graph $(G, (v_1, \dots, v_\ell)) \in
  \GG_\ell$ and for each component of $G^-$, we take the corresponding \emph{good} partial one-component \onlinelocal run from a previous phase; we defer arguing why such good runs exist for now.
  Next we show node $v_\ell$ to $\algoA$ and let it label the node.
  Finally, again by the pigeonhole principle, for each partial one-component
  \onlinelocal run $\TT$ (of length $\ell$) we find a canonical label
  $\sigma_\TT$ that appears at least $N_\ell / \abs*{\outLabels}$ times, and mark
  the corresponding runs as good.
  
  Now our new \onlinelocal algorithm~$\algoB$ works as follows:
  Before even processing the first node of the input, $\algoB$ runs the
  above-described experiment by simulating $\algoA$ on $H$ with locality $T$.
  Notice that the nodes of the experiment graph will be processed according to a \emph{global ordering} which is \emph{locally consistent} with that of the nodes in the single partial one-component \onlinelocal runs.
  We do not care about the specific global ordering of the nodes.

  When processing node~$v$, algorithm~$\algoB$ looks at partial one-component \onlinelocal run around $v$ and finds a corresponding good and unused run from the experiment graph~$H$.
  Algorithm~$\algoB$ labels $v$ with the good label that it found in $H$ and marks the neighborhood in $H$ \emph{used}.

  In each step, algorithm~$\algoB$ can find such unused good neighborhoods as after the whole input has been processed, what the algorithm has seen corresponds to a partial \onlinelocal run, and in particular each component of this run corresponds to a partial one-component \onlinelocal run in $H$.
  There are at most $n$ such components.
  By construction, there are at least $N_n$ copies of each type of partial one-component runs in $H$.
  At least $N_n / \abs*{\outLabels} \gg n$ of those are good, and hence the algorithm can find good and unused neighborhoods for each of the final partial one-component \onlinelocal runs.
  But as each good run in $H$ is formed by combining only good components in $H$, there must also be a good and unused partial one-component \onlinelocal run corresponding to each processed node~$v$.
  This construction ensures that $\algoB$ is amnesiac.

  Now, by contradiction, assume that \(\algoB\) doesn't solve \(\problem\).
  Since \(\problem\) is an LCL, there must be a node \(v\) of the input graph such that the whole output in its radius-\(r\) neighborhood, with \(r\) being the checking radius, is not admissible.
  However, the output of \(\algoB\) in this neighborhood is the output of \(\algoA\) from the experiment graph in a partial one-component \onlinelocal run, which must be correct by the hypothesis.

  The only thing left to do is to argue that we can always find a good \onlinelocal run in the construction and to compute the final size of the experiment graph, and hence the locality of $\algoB$.
  We start with the former:
  Consider phase~$\ell$.
  All subsequent phases contain $\sum_{k = \ell+1}^n N_k g_k$ graphs, the simulation uses at most $n$ graphs, and each of those may use at most $n$ previously-used components.
  Hence, nodes from phase~$\ell$ can be used at most
  \begin{equation*}
    n + n\sum_{k = \ell+1}^n N_k g_k
    \le n + ng \sum_{k = \ell+1}^n N_k
    \le n + \frac{N_\ell}{2\abs*{\outLabels}}
  \end{equation*}
  As $n \ll N_\ell$, we get that
  \begin{equation*}
    \frac{N_\ell}{\abs*{\outLabels}} \ge n + n\sum_{k = \ell+1}^n N_k g_k ,
  \end{equation*}
  that is, the future phases will not run out of components.

  Finally, the size of the experiment graph is
  \begin{equation*}
    N
    = n \sum_{\ell=1}^n N_\ell g_\ell
    \le N_0
    = n (3\abs*{\outLabels}gn)^{n+1}
    = n 2^{O(n^3)} = 2^{O(n^3)},
  \end{equation*}
  and it belongs to \(\FF\) as \(\FF\) is closed under disjoint graph
  union and node/edge removal.
\end{proof}

The next lemma shows how to generalize the idea behind \cref{lemma:olocal-amnesiac} all the way up to randomized \onlinelocal.

\begin{lemma}\label{lemma:rand-olocal-amnesiac}
  Let $\Pi$ be an LCL problem with degree constraint $\maxDeg$, input label set $\inLabels$ and output label set $\outLabels$.
  Let $\algoA$ be a randomized \onlinelocal algorithm solving $\Pi$ over a
  family \(\FF\) of graphs that is closed under disjoint graph union and
  node and edge removals.
  Assume \(\algoA\) has locality $T(n)$ and success probability \(p(n) > 0\) for an \(n\)-node graph.
  Then there exists an amnesiac \onlinelocal
  algorithm $\algoA'$ that solves $\Pi$ on a graph in \(\FF\) with \(n\) nodes
  with locality $T( 2^{O(n^3)} + 2^{O(2^{n^2})} \cdot \log \frac{1}{p(n)} )$.
\end{lemma}

Note the locality is such that, if $p(n)$ converges to $1$ fast enough, then we
only have a $2^{\poly(n)}$ blowup in the locality (instead of a
doubly-exponential one).

\begin{proof}
  The main idea is to adapt the proof of \cref{lemma:olocal-amnesiac}.
  The main challenge is that the construction of the simulation graph in the
  proof of \cref{lemma:olocal-amnesiac} was adaptive to the output of the
  deterministic \onlinelocal algorithm; however, the adversary in the randomized
  \onlinelocal model is oblivious to the randomness of the algorithm.
  In fact only a $q' = 1/\abs*{\outLabels}$ fraction of components in each phase
  of the proof of \cref{lemma:olocal-amnesiac} are good, and we cannot
  adaptively choose to use only those.

  To combat this, we modify the experiment: instead of adaptively choosing only the good components and discarding the rest, we sample the components from the previous phases uniformly at random without replacement.
  We later show that the sampled component is good with probability at least $q = q'/2$.
  In the worst case, each component at layer \(N_\ell\) may use up to $n$ partial one-component \onlinelocal runs from the previous layers \(N_1, \dots, N_{\ell-1}\), each of which is sampled to be good with probability only $q$.
  As the construction is $n$ levels deep (i.e., there are \(n\) layers \(N_1,
  \dots, N_n\)) and the whole construction needs to work for all up to $g =
  2^{n^2} \abs*{\inLabels}^n$ different partial one-component \onlinelocal runs, the
  total probability of finding a good label in all \(N_\ell\), \(\ell \in [n]\),
  is at least $q^{g n^2}$.

  This probability is minuscule, but we can boost it by increasing the size of the
  construction by considering
  \[
    k = 1 + \ceil*{q^{-{2^{n^2} \abs*{\inLabels}^n n^2}} \log\frac{1}{p(n)}}
  \] 
  disjoint copies of the simulation graph independently.
  Letting \(N\) be the size of a single simulation graph, \(kN\) is then the
  size of the whole experiment.
  Since $N = 2^{O(n^3)}$ in \cref{lemma:olocal-amnesiac}, we have
  \begin{equation*}
    kN = 2^{O(n^3)} + 2^{O(2^{n^2})} \log \frac{1}{p(n)}.
  \end{equation*}
  The probability that each simulation graph does not contain all the necessary good labels in all layers is at most
  \[
    \left( 1 - q^{{gn^2}} \right)^k
    \le e ^{-kq^{{g n^2}}}
    = e ^{-kq^{{2^{n^2} \abs*{\inLabels}^n n^2}}}
    < p(n).
  \]
  Since the success probability of $\algoA$ is $p(n)$, by the inclusion-exclusion principle, we get positive probability that
  at least one simulation graph in the experiment contains all the necessary good
  labels and, in addition, $\algoA$ does not fail on it.

  We show that the probability of sampling a good component is indeed at
  least $q = 1/(2\abs*{\outLabels})$.
  Consider the components in phase~$\ell$.
  For each component type, there are at least $N_\ell / \abs*{\outLabels}$ that are good.
  By the calculations in the proof of \cref{lemma:olocal-amnesiac}, future layers will use at most $N_\ell / (2\abs*{\outLabels})$ of them.
  Hence, after each step, at least $1 / (2\abs*{\outLabels}) = q$ fraction of
  the components are still good and unused, as desired.

  We finish the proof by showing that there indeed exists an amnesiac algorithm $\algoA'$ as in the claim.
  As argued above, our experiment succeeds with positive probability.
  After fixing the randomness \(f: V(H) \to \{0,1\}^{\nat}\) of the algorithm
  \(\algoA\), we obtain a uniquely defined deterministic \onlinelocal algorithm
  \(\algoA[f]\) that behaves exactly as \(\algoA\) when given the random bit
  string \(f\) as input.
  In particular, for such \(f\), the following holds:
  \begin{enumerate}
    \item There exists a simulation graph with the necessary number of good labels in all layers.
    \item \(\algoA[f]\) does not fail on the aforementioned simulation graph.
  \end{enumerate}
  Now the amnesiac algorithm \(\algoB\) works as follows:
  Since there are finitely many possible simulation graphs and finitely many
  behaviors of \(\algoA[f]\) (for all possible \(f\)), \(\algoB\) executes a
  preprocessing phase in which it goes over all possible simulation graphs and
  tries all possible deterministic \onlinelocal algorithms (according to some
  ordering, e.g., lexicographical in the description) until it finds the pair
  with the properties 1 and 2 above.
  We take such pair and use the labeling for that component as the base for our
  amnesiac algorithm.
  Proceeding as in the proof of \cref{thm:olocal-slocal-simulation}, we then
  obtain an amnesiac algorithm that has locality 
  \[
    T\left(
      2^{O(n^3)} + 2^{O(2^{n^2})} \log\frac{1}{p(n)}
    \right). \qedhere
  \]
\end{proof}

\begin{remark}
  Note that the proofs of \cref{lemma:olocal-amnesiac,lemma:rand-olocal-amnesiac} hold also if we consider labeling problems that are checkable in connected components (and not only checkable within constant-radius neighborhoods), such as component-wise leader election.
  However, for our purposes we focus just on LCLs.
\end{remark}

\subsection{\boldmath From \olocal to \slocal}

We first show how the \slocal model can nicely ``partition'' a rooted forest.
The idea is not new and comes from \cite[Section 7]{chang23automata-theoretic} where it is directly applied to the \local model; 
we present here an adaptation to \slocal.
The definition of rooted forest is given in \cref{sec:fin-dep}.
For a rooted forest \(G = (V,E)\) and each node \(v \in V\), we denote by \(G_v\) the subtree of \(G\) that is rooted at \(v\).
A root-to-leaf path \(v_0v_1\dots v_k\), where \(v_i \in V\) for all \(i \in \{0,1,\dots,k\}\), is a path of a rooted tree such that \((v_i, v_{i-1})\) is an edge for all \(i \in [k]\): we say that the path \emph{starts} at \(v_0\) and \emph{ends} at \(v_k\).

\begin{definition}[\((\alpha,\beta)\)-clustering of rooted trees]\label{def:online-to-slocal:clustering}
  Let \(G = (V,E)\) be a rooted forest. 
  An \((\alpha,\beta)\)-clustering of \(G\) is a subset \(\LL \subseteq V\) of \emph{leader nodes} that contains the root and is such that, for each \(v \in \LL\), the following properties are met:
  \begin{enumerate}
    \item \(G_v\) does not contain elements of \(\LL\) at levels \(1, \dots, \alpha - 1\).
    \item Each maximal oriented path in \(G_v\) contains exactly one element \(u
    \in \LL\) such that \(\dist_{G_v} (u,v) \in [\alpha, \beta]\), unless the
    length of the maximal oriented path is at most \(\beta - 1\), in which case
    there is at most one element of \(\LL\) in the path (and possibly none).
  \end{enumerate}
  A \emph{cluster} is a maximally connected component of non-leader nodes.
  A \emph{closed cluster} is a cluster together with its adjacent leader nodes.
  See \cref{fig:slocal-clustering} for an example.
\end{definition}

\begin{figure}
  \centering
  \resizebox{.7\textwidth}{!}{%
    \includegraphics{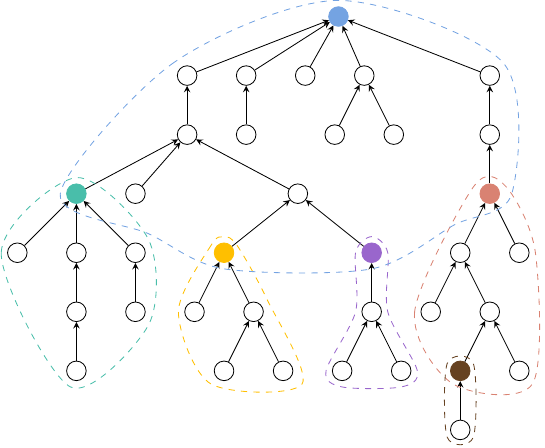}
  }
  \caption{A $(3,4)$-clustering of a rooted tree.
  The leader nodes are colored and their closed clusters marked with their respective
  color.}
  \label{fig:slocal-clustering}
\end{figure}

We will combine many \slocal algorithms:
it is easy to prove that the combination of two \(\slocal\) algorithms with localities \(T_1,T_2\) gives an \slocal algorithm with locality \(O(T_1 + T_2)\) \cite[Lemma 2.3]{ghaffari2017}.

\begin{lemma}\label{lemma:online-to-slocal:clustering}
  Let \(\alpha \in \natPos\), and let \(G = (V,E)\) be a rooted forest.
  There is an \slocal algorithm with locality $O(\alpha)$ that produces an \((\alpha-1,2\alpha + 1)\)-clustering of \(G\).
\end{lemma}
\begin{proof}
  We combine some \slocal algorithms.
  Consider first the following algorithm \(\algoA_1\):
  Suppose a node \(v \in V\) is picked by the adversary and asked to commit to
  something.
  For each root-to-leaf path \(v_0v_1\dots v_{2\alpha-1}\) (with \(v_0 = v\))
  that \(v\) can distinguish using locality \(2\alpha-1\), in parallel, \(v\)
  precommits that the node \( v_{\alpha-1}\) is the
  \emph{leader} of the path \(v_{\alpha-1} v_{\alpha} \dots v_{2\alpha-1}\)  unless
  there is another precommitment on the path within distance \(\alpha\) from~\(v\).
  
  Next we define an algorithm \(\algoA_2\) that takes as input a rooted forest
  labelled by \(\algoA_1\):{
  Each node \(v\) looks at its radius-\(\alpha\) neighborhood and checks if there is some other (unique) node that precommitted for \(v\) in the neighborhood. 
  If so, it becomes a leader and stores the path that is under its leadership, otherwise it does nothing.
  }

  The final algorithm \(\algoA_3\) takes as input a rooted forest labelled by \(\algoA_2\) and is defined as follows:
  Each node \(v\) becomes a leader
	  if and only if it belongs to an oriented root-to-leaf path \(v_0v_1\dots v_{\alpha-1}\), with \(v_{\alpha-1} = v\) led by \(v_0\). 
  All other nodes do not output anything, except the root, which becomes a leader.

  We now prove that the \slocal algorithm \(\algoA\) providing the clustering is the composition of \(\algoA_1\), \(\algoA_2\), and \(\algoA_3\), and has locality \(O(\alpha)\).
  
  Notice that the root is a leader thanks to \(\algoA_3\).
  Property 1 in \cref{def:online-to-slocal:clustering} is satisfied due to \(\algoA_3\) as well.
  Furthermore, for each maximal root-to-leaf path, consecutive leader nodes must be within distance at most \(2\alpha + 1\) between each other, unless the path ends with a leaf at distance at most \(2\alpha\) from the last leader.
  In fact, if the distance between consecutive leaders is at least \(2\alpha + 2\), or there is only one leader and the path is longer than \(2\alpha\) from the last leader, there would be at least one non-leader node that does not see any leader within distance  \(\alpha\) in the path, which is impossible due to how precommitments are done in  \(\algoA_1\).
\end{proof}

We can now prove
\cref{thm:olocal-slocal-simulation,thm:olocal-slocal-simulation}, which we
restate here for the reader's convenience:
\restateThmOlocalSlocalSim*

\begin{proof}
  Let $G$ denote the input graph, which is a rooted forest, and assume \(\algoA\) is a deterministic \onlinelocal algorithm.
  We can apply \cref{lemma:olocal-amnesiac} to get an amnesiac algorithm~$\algoA'$ that solves $\Pi$ with locality $T'(n) = T(2^{O(n^3)})$.
  We now show how to get an \slocal algorithm \(\algoB\) solving the problem
  with roughly the same locality. 
  \(\algoB\) is the composition of four
	  different \slocal algorithms $\algoB_1, \algoB_2,\algoB_3, \algoB_4$, each of them having
  locality \(O(T')\):
  \begin{enumerate}
    \item We first use \cref{lemma:online-to-slocal:clustering} with \(\alpha =
    10T' + 4r\), where \(r\) is the checking radius of \(\Pi\), and obtain an
    algorithm \(\algoB_1\) that outputs an \((\alpha-1,2\alpha+1)\)-clustering
    of $G$ with locality~\(O(\alpha)\).
    \item Now consider an algorithm \(\algoB_2\) with locality \(2\alpha+4r\)
    that takes as input $G$ as labelled by \(\algoB_1\) and works as follows:
    When processing a leader node \(v\), it collects the topology of the
    radius-\(2r\) neighborhood of the set composed of the closest leaders \(v\) sees
    in each root-to-leaf path.
    Then, \(v\) runs \(\algoA'\) in this neighborhood and
    {precommits} a solution to the LCL for the nodes in such neighborhood.
    If \(v\) is the root, it also presents its own radius-\(2r\) neighborhood to \(\algoA'\) and precommits a solution for the whole neighborhood.
    Observe that all such neighborhoods are disjoint by construction of the \((\alpha-1,2\alpha+1)\)-clustering.
    \item \(\algoB_3\) also has locality $2\alpha+4r$ and just makes all nodes
    whose label has been precommitted by some other node actually output the
    precommitted label. 
    (The output of $\algoB_2$ guarantees there are no conflicts to be resolved.)
    \item Finally, \(\algoB_4\) has again locality \(2\alpha+4r\) and just
    brute-forces a solution in each cluster.
    The solution is guaranteed to exist because \(\algoA'\) was run in all
    disjoint neighborhoods and works correctly.
    In fact, \(\algoB_4\) can just continue calling \(\algoA'\) on each cluster it sees.
  \end{enumerate}
  The combination of \(\algoB_1\), \(\algoB_2\), \(\algoB_3\), and \(\algoB_4\)
  yields a deterministic \slocal algorithm \(\algoB\) that has locality~\(O(\alpha) = O(r) + O(T'(n))\).

  The same argument applies to a randomized \onlinelocal algorithm, the only change is that now \(T'\) is \(T'(n) = T(2^{O(n^3)} + 2^{O(2^{n^2})}  \cdot \log \frac{1}{p(n)}   )\).
\end{proof}

\subsection{\boldmath From \slocal to \local}

It is folklore that all LCL problems that have complexity \(O(1)\) in \slocal belong to the complexity class \(O(\log* n)\) in \local. 
Then, to obtain \cref{thm:intro-rolocal-simulation}, it suffices to show that all LCLs with complexity \(o(\log n)\) in \slocal in rooted trees actually belong to the complexity class \(O(1)\).
We restate \cref{thm:olocal-sim:reduction-slocal}.

\thmolocalsimreductionslocal*

\begin{proof}
  Assume \(\problem\) has checking radius \(r \ge 0\), and \(T(n) = o(\log_{\maxDeg} n)\) is the locality of \(\AA\).
  Furthermore, w.l.o.g., suppose \(T(n) \ge r\) for all \(n\) large enough.
  Let \(N\) be a large enough integer such that \(T(N) = k \ge r\).
  Let \(G\) be a rooted tree of maximum degree \(\maxDeg\) with \(n\) nodes.
  We now construct a new \slocal algorithm \(\BB\) which is the composition of many \slocal algorithms.

  For the first \slocal algorithm \(\BB_1\), we make use of \cref{lemma:online-to-slocal:clustering} where \(\alpha = \floor{(\log_{\maxDeg} N - 2)/4}\).
  Hence, \(\BB_1\) yields an \((\alpha - 1, 2\alpha + 1)\)-clustering of \(G\) in time \(O(\alpha)\), with \(\alpha - 1 \ge (\log_{\maxDeg} N - 10)/4\) and \(2\alpha + 1 \le (\log_{\maxDeg} N)/2\).
  Hence, any closed cluster will have at most \(\maxDeg^{2\alpha + 1} \le \sqrt{N}\) nodes (including the adjacent leader nodes).

  Then, we consider a second \slocal algorithm \(\BB_2\) that takes \(G\) and the \((\alpha - 1, 2\alpha + 1)\)-clustering of \(G\) as input and reassigns identifiers from the set \([N]\) ``locally''.
  To better describe how \(\BB_2\) works, let us define some notation.
  For each leader node \(v\), let \(\CC_v\) denote the closed cluster where \(v\) is the leader node of minimum level, and let \(\LL_v\) be the set of leader nodes in \(\CC_v\) except for \(v\).
  Consider a partition of the nodes in \((\CC_v\setminus\NN_k(v)) \cup \NN_k(\LL_v)\) according to their distance from \(v\). 
  More specifically, \(\CC_v^{(1)}\) contains all nodes that have distance between \(k+1\) and \(\floor{(\alpha - 1)/4}\) from \(v\), while, for \(i \in \{2,3\}\), \(\CC_v^{(i)}\) contains nodes that have distance  between \(\floor{(\alpha-1)/(6 - i)} + 1\) and \(\floor{(\alpha -1)/(5-i)}\) from \(v\). 
  Finally, \(\CC_v^{(4)}\) contains all the other nodes in \(\CC_v\), which have distance at least \(\floor{(\alpha-1)/2} + 1\) from \(v\).
  Notice that \(\abs{(\CC_v\setminus\NN_k(v)) \cup \NN_k(\LL_v)} \le \maxDeg^{2\alpha + 1 + k} \le \maxDeg^{(\log_\maxDeg N)/2 + o(\log_\maxDeg N)} \le N^{2/3}\) for \(N\) large enough.
  Then, \(\BB_2\) works as follows:
  When a non-leader node is picked by the adversary, nothing happens. 
  When a leader node \(v\) is selected, it precommits identifiers for the nodes in \((\CC_v\setminus\NN_k(v)) \cup \NN_k(\LL_v)\). 
  In detail, it assigns identifiers from \(1\) to \(\floor{N/4}\) to nodes in \(\CC_v^{(1)}\), from \(\floor{N/4}+1\) to \(\floor{N/2}\) to \(\CC_v^{(3)}\), from \(\floor{N/2}+1\) to \(\floor{3N/4}\) to \(\CC_v^{(2)}\),
  from \(\floor{3N/4}+1\) to \(N\) to \(\CC_v^{(4)}\).
  Notice that, since \(\abs{\CC_v^{(i)}} \le N^{2/4}\), even \(\floor{N/5}\) distinct identifiers are enough to cover the whole region \(\CC_v^{(i)}\) when \(N\) is large enough.
  Furthermore, if \(v\) is the root of the whole graph, it precommits distinct identifiers from the set \(\{\floor{3N/4}+1, \dots, N\}\) for the nodes that have distance at most \(k\) from \(v\) (\(v\) itself included). 

  The third and last \slocal algorithm \(\BB_3\) takes as input the whole rooted tree with the clustering given by \(\BB_1\) and the output of leader nodes given by \(\BB_2\), and computes the solution to the problem as follows:
  Every node \(u\) (that does not belong to \(\CC_v\) where \(v\) is the root of the graph) looks at its two closest leader nodes \(v_u^{(1)}\) (the closest one) and \(v_u^{(2)}\) (the second closest one) that are ancestors (if \(u\) is a leader, it still looks up in the tree for two different leader ancestors), and collects all the information regarding identifiers for \(G[\NN_k(\CC_{v_u^{(1)}})]\).
  In this way, \(u\) can locally simulate \(\AA\) on a ``virtual graph'' \(G_u = G[\NN_k(\CC_{v_u^{(1)}})]\) with \(\abs{\NN_k(\CC_{v_u^{(1)}})} \le N\) nodes.
  The adversarial ordering according to which \(\AA\) processes \(G_u = G[\NN_k(\CC_{v_u^{(1)}})]\) is induced by the ordering of the identifiers.
  Notice that it might still be that \(\AA\) processes two nodes at the same time but, due to the assignment of identifiers done by \(\BB_2\), the outputs of these nodes can be determined independently of each other.
  Then, \(u\) outputs whatever \(\AA\) outputs on \(u\) according to this simulation.
  If \(u\) is the root of the whole graph, then it already knows all the identifiers needed to run \(\AA\) in \(G[\NN_k(\CC_u)]\).
  If \(u\) is in \(\CC_v\) and \(\CC_v\) is the cluster of the root node of the graph, then by accessing the memory of \(v\) it gets access to all the information it needs to compute its output.

  Notice that each \(\BB_i\) has locality \(O(\alpha)\), hence the whole algorithm \(\BB\) obtained by the composition of \(\BB_1\), \(\BB_2\), and \(\BB_3\) has locality \(O(\alpha)\) by \cite[Lemma 2.3]{ghaffari2017}.
  Observe also that \(O(\alpha)  = O(\log_\maxDeg N) = O(1)\).
  
  Also, the final output of \(\BB\) is correct: Suppose, by contradiction, that there is a failure somewhere, that is, a node \(v\) could detect its neighborhood \(\NN_r(v)\) as non-admissible.
  Suppose \(v\) does not belong to \(\CC_u\) where \(u\) is the root of the graph, and denote its two closest ancestors that are leader nodes by \(v_u^{(1)}\) (the closest one) and \(v_u^{(2)}\) (the second closest one).
  We have two cases. 
  If \(u \in \CC^{(4)}_{v_u^{(2)}} \cup \CC^{(1)}_{v_u^{(1)}} \cup \CC^{(2)}_{v_u^{(1)}}\), then its output must be correct: it is completely determined by \(G[\NN_k(\CC_{v_u^{(1)}} \cap (\CC^{(4)}_{v_u^{(2)}} \cup \CC^{(1)}_{v_u^{(1)}} \cup \CC^{(2)}_{v_u^{(1)}}))]\) which contains only distinct identifiers (because it is a subgraph of \(G[\CC^{(4)}_{v_u^{(2)}} \cup \CC^{(1)}_{v_u^{(1)}} \cup \CC^{(2)}_{v_u^{(1)}} \cup \CC^{(3)}_{v_u^{(1)}}]\)) and hence can be extended to a valid input for \(\AA\) with \(N\) nodes (choosing identifiers arbitrarily in \([N^2]\setminus [N]\) and extending the adversarial processing order respecting the ordering induced by identifiers). 
  In the second case, \(u \in \CC^{(3)}_{v_u^{(1)}} \cup \CC^{(4)}_{v_u^{(1)}}\). 
  Also in this case the output of \(u\) must be correct: it is completely determined by \(G[\NN_k(\CC_{v_u^{(1)}} \cap (\CC^{(3)}_{v_u^{(1)}} \cup \CC^{(4)}_{v_u^{(1)}}))]\) which contains only distinct identifiers (because it is a subgraph of \(G[\CC^{(2)}_{v_u^{(1)}} \cup \CC^{(3)}_{v_u^{(1)}} \cup \CC^{(4)}_{v_u^{(1)}}]\)) and hence can be extended to a valid input for \(\AA\) with \(N\) nodes (choosing identifiers arbitrarily in \([N^2]\) and extending the adversarial processing order respecting the ordering induced by identifiers). 
  An analogous argument holds for nodes in \(\CC_v\) where \(v\) is the root node of the graph.
\end{proof}   %

\section{\boldmath Randomized \onlinelocal with an adaptive adversary}
\label{sec:adaptive-rolocal-derandomization}

This section aims at showing that \rolcl with an adaptive adversary is as strong as its deterministic counterpart.
The proof is very similar to the proof of \citet{dist_deran} for derandomizing
\local in \slocal and bases on the method of conditional expectations.
The results can be extended to any subclass of graphs.

\begin{theorem}
  \label{thm:olocal_adversary_derandomization}
  Let $\Pi$ be any labeling problem, and let $\algoA$ be an \onlinelocal algorithm with an adaptive adversary solving $\Pi$ with locality $T(n)$ and probability $p > 1 - 1/n$.
  Then there exists a deterministic \onlinelocal algorithm~$\algoA'$ solving $\Pi$ with locality $T(n)$.
\end{theorem}
\begin{proof}
  Let $\fA$ be an algorithm in the \onlinelocal model with an adaptive adversary and with round complexity $T(n)$.
  We construct a deterministic \onlinelocal algorithm $\fA'$ with the same round complexity $T(n)$ as follows:
  The algorithm $\fA'$ simulates $\fA$ with certain fixed random-bit-strings.
  In particular, whenever new nodes arrive at the input, the algorithm $\fA'$ fixes the random bits in all newly arrived nodes in the way we describe below.
  
  We view a run of an \onlinelocal algorithm with an adaptive adversary as a game between an algorithm~$\fA$ and an adversary.
  In each step of that game, first the adversary chooses a set of nodes that arrive at the input (one special node and its neighborhood).
  Then the algorithm $\fA$ samples the random bits of the newly arrived nodes and fixes the label of the arrived node.
  After $i$ steps of this game, we use $\fF$ to denote the event that the adversary wins the game, that is, algorithm~$\fA$ produces an invalid output according to problem~$\Pi$.
  By definition, we have $\pr{\fF} \le 1/n$.
  Here and from now on, the probability expressions assume that the adversary maximizes the probability of $\fF$ happening. 

  Our algorithm $\fA'$ in each $i$-th step fixes the random bits of the newly arrived nodes in a way that guarantees that 
  \[
    \pr{\fF \mid B_1 = b_1, \dots, B_i = b_i} \le \pr{\fF \mid B_1 = b_1, \dots, B_{i-1} = b_{i-1}}.
  \]
  Here, $B_j = b_j$ stands for fixing the values of random bits of nodes that arrived in the $j$-th step to $b_j$. We note that this can always be done since, by definition, we have 
  \[
    \pr{\fF \mid B_1 = b_1, \dots, B_{i-1} = b_{i-1}} = \expect_{B_i}\left[  \pr{\fF \mid B_1 = b_1, \dots, B_{i} = b_{i}}\right] .
  \]
  Moreover, the \onlinelocal algorithm importantly both knows all the past arrived nodes and their randomness, and can also consider all the different choices of the adversary in the future, and hence can compute the above conditional probabilities.
  
  At the end of the simulation of $\fA$, we have $\pr{\fF \mid B_1 = b_1, \dots, B_{n} = b_{n}} < 1$, and hence the bad event $\fF$ does not occur. This implies that the algorithm $\fA'$ is always correct. 
\end{proof}

\begin{remark}
  This proof assumes that we have white-box access to the \rolcl algorithm and that it uses finitely many random bits.
  If either of these assumptions does not hold, we can use a more inefficient derandomization along the lines of \cite{dahal_et_al:LIPIcs.DISC.2023.40}.
  Here, we just observe that an \onlinelocal algorithm working against an adaptive adversary implies the existence of a function $f$ that maps all possible input sequences on all possible $n$-node graphs that can be produced by the adversary to a valid output of the problem $\Pi$.
  The deterministic \onlinelocal algorithm finds such a function $f$ at the beginning of its execution and produces the output according to $f$.
\end{remark}
\section{\boldmath Lower bound for 3-coloring grids in randomized \onlinelocal}\label{sec:3-col-grid-lb}

In this section, we will show that $3$-coloring $(\sqrt{n}\times\sqrt{n})$-grids
in \rolcl requires $\Omega(\log n)$-locality:

\begin{theorem}
  \label{thm:three-col-ro-lcl}
  The locality of a \rolcl algorithm for solving $3$-coloring in $(\sqrt{n}\times\sqrt{n})$-grids is $\Omega(\log n)$.
\end{theorem}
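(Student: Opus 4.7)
The plan is to adapt the deterministic lower bound of \citet{chang23_tight_arxiv} to the oblivious randomized setting by working with expected $b$-values and using a two-stage probabilistic construction. Suppose for contradiction that a \rolcl algorithm $\algoA$ solves $3$-coloring of $(\sqrt{n} \times \sqrt{n})$-grids with locality $T(n) = o(\log n)$ and failure probability at most $1/n$. Recall that the $b$-value of a horizontal path segment $P$ in a $3$-coloring measures (essentially) the number of ``color-class boundaries'' inside $P$ that are pairwise incompatible with a fixed reference pattern, so two horizontal segments at small vertical distance with sufficiently different $b$-values force an inconsistency, because the excess boundaries have nowhere to escape to: they cannot terminate in the short vertical gap between the two segments without creating a proper $3$-coloring defect.

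First I would carry out the \emph{high $b$-value construction}. The oblivious adversary fixes in advance a horizontal path $P_{\text{top}}$ of length roughly $n^{1/4}$ somewhere in the middle of the grid, together with the order in which its nodes are revealed, and reveals them to $\algoA$ one by one (showing only the radius-$T$ neighborhoods around $P_{\text{top}}$, which are thin horizontal strips). Using a probabilistic argument analogous to \cite{chang23_tight_arxiv} but now averaging over $\algoA$'s internal randomness, I would show that with at least constant probability the coloring produced on $P_{\text{top}}$ has $b$-value $\Omega(\log n)$; intuitively, because the algorithm sees only the thin strip, it cannot globally coordinate the coloring of $P_{\text{top}}$, and a large number of incompatible boundaries survive.

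Second, I would carry out the \emph{low $b$-value construction} to exploit $P_{\text{top}}$. Since the adversary is oblivious, it cannot observe where the large $b$-value shows up along $P_{\text{top}}$, so we cannot deterministically place a matching segment below it. Instead, I would place a second horizontal path $P_{\text{bot}}$ at vertical distance $d = c \cdot T(n) = o(\log n)$ below $P_{\text{top}}$, whose length and horizontal position are chosen \emph{randomly} from a distribution that, with constant probability, covers a sub-interval of $P_{\text{top}}$ on which the large $b$-value is already realized. I would reveal $P_{\text{bot}}$ to $\algoA$ after $P_{\text{top}}$ is fully labeled, along with a separately engineered neighborhood that pins its coloring to a reference pattern of small $b$-value, again using a probabilistic construction that succeeds with constant probability.

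Finally I would derive the contradiction. Conditioning on the two constant-probability events succeeding (which still happens with constant probability, much larger than $1/n$), the difference in $b$-values between the sub-segments of $P_{\text{top}}$ and $P_{\text{bot}}$ is $\Omega(\log n)$. The standard planar argument about color-class boundaries then says that at least $\Omega(\log n)$ incompatible boundaries must route through the vertical gap of width $d = o(\log n)$ between $P_{\text{top}}$ and $P_{\text{bot}}$, which is impossible in any proper $3$-coloring of the grid. Hence $\algoA$ must fail with at least constant probability on some fixed (adversary-chosen) instance, contradicting the $1/n$ failure bound and forcing $T(n) = \Omega(\log n)$. The main obstacle, and the most delicate step, is the first one: proving that in the oblivious randomized model the expected $b$-value of $P_{\text{top}}$ is $\Omega(\log n)$ despite the algorithm being free to randomize its choices; this requires a careful reworking of the deterministic $b$-value lower bound argument so that it applies to the distribution over algorithmic outputs rather than to a single coloring.
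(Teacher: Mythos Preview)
Your high-level two-stage plan (force a large $b$-value on one horizontal segment, then pair it with a nearby segment to get an impossible cycle) matches the paper, but both stages as you describe them have real gaps.

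\textbf{Stage 1 (large $b$-value).} Fixing a single path $P_{\text{top}}$ in advance and revealing its nodes in some fixed order does not force a large $b$-value. If the adversary reveals $P_{\text{top}}$ left to right, an online algorithm with locality $T$ can simply $2$-color the path (alternate $1,2$), giving $b$-value $0$ on every sub-segment. The deterministic argument in \cite{chang23_tight_arxiv} is \emph{adaptive}: the adversary first reveals several disjoint segments, looks at the resulting $b$-values, and only then chooses how to connect them so that $b$-values add up rather than cancel. An oblivious adversary cannot do this. The paper's fix is a recursive construction in which four sub-paths (each recursively built) are joined with a \emph{random} number of spacer nodes between them; by the parity lemma for $b$-values, each random spacer flips the parity of the connecting segment with probability $1/2$, so regardless of the algorithm's coloring one of the two parities forces $|b|$ to grow by at least one. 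Repeating four times boosts the success probability back to $1/2$ at each level. This random-spacing idea is the missing ingredient in your Stage~1; ``averaging over $\algoA$'s randomness'' on a fixed path is not enough.

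\textbf{Stage 2 (the second path).} You propose to engineer $P_{\text{bot}}$ to have \emph{small} $b$-value via some separate construction. It is unclear how an oblivious adversary does this, and in fact the paper does not try. Instead it builds an arbitrary second path $P_2$ (ten times longer than $P_1$), randomly shifts it horizontally so that with probability $>2/3$ the relevant sub-segment of $P_1$ sits above some sub-segment of $P_2$, and then \emph{randomly mirrors} $P_2$. Mirroring negates the $b$-value of the matched sub-segment, so with probability $\ge 1/2$ it is non-negative; combined with $b(P_1)\ge 4T+4$ and the two short vertical connectors of length $2T+2$, the cycle has strictly positive $b$-value, contradicting $b(\text{cycle})=0$. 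The point is that you do not need $P_2$ to have small $b$-value, only the correct sign, and random mirroring gives that for free.
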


We will base our proof on the recent result of \citet{chang23_tight_arxiv},
where the authors show a similar lower bound for the deterministic \onlinelocal
model. 
We will start with a brief overview of the techniques used in
\cite{chang23_tight_arxiv}.

\subsection{Relevant ideas from \cite{chang23_tight_arxiv}}

The lower bound in \cite{chang23_tight_arxiv} is based on the idea that any
coloring of a grid $G$ with the three colors $\{1,2,3\}$ can be partitioned into
\emph{regions} with colors $1$ and $2$ that are separated by \emph{boundaries}
of color $3$.
Formally, a region is a maximal connected component that is colored only with
colors $1$ and $2$, while a boundary is a maximal connected component in $G^2$
that is colored only with color $3$.
It is crucial to observe that the boundaries of color $3$ are not always
compatible and that they have parities of their own (see \cref{fig:b-value}).
Indeed, the core idea of the proof is to constrict many incompatible boundaries
of color $3$ to a small space, thus requiring a view of $\Omega(\log n)$ to
resolve them. 

\begin{figure}
  \centering
  \includegraphics{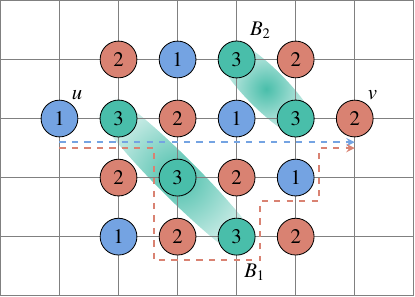}
  \caption{Example of a $3$-colored grid with two boundaries $B_1$ and $B_2$
  that have different parities.
  As one can see, $b(u,v) = 2$:
  No matter if we take the blue or the red path from $u$ to $v$, we always cross
  $B_1$ and then $B_2$.}
  \label{fig:b-value}
\end{figure}

Using the same terminology as \cite{chang23_tight_arxiv}, we count incompatible
boundaries between two points by using so-called \emph{$a$- and $b$-values}.
The $a$-value is defined as an edge weight between any two nodes and captures
the change of colors $1$ and $2$.

\begin{definition}[$a$-value~\cite{chang23_tight_arxiv}]
  Given a directed edge $(u,v)$ and a $3$-coloring of the nodes $c:V\to
  \{1,2,3\}$, we define 
  \[
    a(u,v) = \begin{cases} 
               c(u)-c(v), & \text{if}\ c(u)\neq 3\ \text{and}\ c(v)\neq 3  \\
               0, & \text{otherwise}.
    \end{cases}
  \]
\end{definition}

Observe that the $a$-value of any directed $4$-cycle in a grid is equal to $0$. Using the $a$-value, we define the $b$-value of a path.

\begin{definition}[$b$-value~\cite{chang23_tight_arxiv}]
  For a directed path $P$, its $b$-value is defined as
  \[
    b(P) = \sum_{(u,v)\in P} a(u,v).
  \]
\end{definition}

On a high level, the $b$-value of a path describes the cumulative total of
incompatible boundaries along this path.
Note we say ``cumulative'' because in this count boundaries of the same
parity cancel each other out.
Indeed, if we consider the $b$-value of a simple directed cycle in a grid, then
it must be equal to $0$:

\begin{lemma}[$b$-value of a cycle is zero~\cite{chang23_tight_arxiv}]
  \label{lem:b-value-cycle}
  Let $C$ be a directed cycle in $G$.
  Then $b(C) = 0$.
\end{lemma}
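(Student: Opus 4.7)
The plan is to reduce the statement to the case of a single face (a unit $4$-cycle of the grid) by a planar/Stokes-type decomposition, and then verify the face case by a short case analysis on how many of its four vertices are colored~$3$.

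First I would record the key antisymmetry of $a$. If $c(u), c(v) \neq 3$, then $a(u,v) = c(u)-c(v) = -a(v,u)$; otherwise both $a(u,v)$ and $a(v,u)$ are $0$. Hence for any undirected edge traversed once in each direction among a family of directed closed walks, the two contributions cancel.

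Next I would verify the face case: for every $4$-cycle $F = v_1 v_2 v_3 v_4 v_1$ bounding a face of $G$, $b(\partial F)=0$. Since $c$ is a proper $3$-coloring, I would case-split on $|\{i : c(v_i)=3\}|$. If no $v_i$ has color $3$, then the $4$-cycle is properly $\{1,2\}$-colored, forcing $c(v_1)=c(v_3)$ and $c(v_2)=c(v_4)$, and the four $a$-values telescope to $0$. If exactly one vertex, say $v_1$, has color $3$, then $v_2, v_4 \neq 3$ (they are adjacent to $v_1$); if $c(v_2) \neq c(v_4)$ the middle vertex $v_3$ would have to be $3$ as well, so necessarily $c(v_2)=c(v_4)$, whence $a(v_2,v_3) = -a(v_3,v_4)$ and the two edges incident to $v_1$ contribute $0$. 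If two vertices have color $3$ they must be opposite (non-adjacent), and every $a$-value is $0$. Three or four vertices colored $3$ is excluded by properness. In every case the sum is $0$.

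Finally, I would extend from faces to an arbitrary simple directed cycle $C$ using planarity of the grid. The cycle $C$ bounds a finite region which is a disjoint union of unit faces $F_1,\dots,F_k$. Orienting each $\partial F_i$ with the orientation induced by $C$, every interior edge lies in exactly two of the $\partial F_i$ with opposite directions (and cancels by the antisymmetry above), while every edge of $C$ lies in exactly one $\partial F_i$, in the same direction as in $C$. Therefore $b(C) = \sum_{i=1}^k b(\partial F_i) = 0$. If the statement is intended for arbitrary closed walks rather than simple cycles, I would additionally split repeated vertices to decompose the walk into simple cycles and apply the simple-cycle case to each.

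I expect the main obstacle to be purely notational: making the planar decomposition in the final step rigorous, in particular justifying that ``interior vs.\ boundary edge'' is well-defined for a simple cycle in the grid (via the Jordan curve theorem applied to the standard planar embedding) and checking that the induced orientations on face boundaries are indeed consistent so that interior edges really pair up in opposite directions. The case analysis for a single face is elementary; the bookkeeping for the global cancellation is where care is needed.
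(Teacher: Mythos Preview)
Your proposal is correct and follows exactly the approach the paper hints at: the paper explicitly observes that the $a$-value of any directed $4$-cycle in the grid is $0$, and then states the lemma (citing \cite{chang23_tight_arxiv}) without further proof, the implicit argument being precisely the face-by-face decomposition you describe. Your case analysis for the unit face and the Stokes-type cancellation over interior edges are both standard and sound; there is nothing to add.
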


As an example, observe that a path that starts and ends with color $3$ and
otherwise is colored with colors $1$ and $2$ always has a $b$-value of $0$ or
$1$.
In particular, the $b$-value is $1$ if the distance between the nodes of color
$3$ is even (and thus the boundaries are incompatible). 
Meanwhile, a path that goes through two incompatible regions has a total
$b$-value of $2$. 
Nevertheless, a path going through two boundaries that are compatible has a
total $b$-value of $0$.

\begin{lemma}[Parity of the $b$-value~\cite{chang23_tight_arxiv}]
  \label{lem:b-value-parity}
Let $P$ denote any directed path of length $\ell$ that starts in node $u$ and ends in node $v$ in a grid. Then, 
the parity of $b(P)$ is
$$b(P) \equiv \beta(u) + \beta(v) + \ell \Mod 2$$
where $\beta$ is an indicator variable stating whether a node is of color $3$ or
not:
\[
  \beta(u) = \begin{cases} 
  1, & \text{if}\ c(u)= 3  \\
  0, & \text{otherwise}.
  \end{cases}
\]
\end{lemma}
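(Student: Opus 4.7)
\emph{Proof approach.} The plan is to evaluate $b(P) \bmod 2$ edge by edge and then tally the contributions of color-$3$ vertices along $P$.

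First I will examine an arbitrary edge $(x,y)$ of $P$. Since the $3$-coloring is proper we have $c(x) \neq c(y)$, so either both of $c(x),c(y)$ lie in $\{1,2\}$ (making $a(x,y) = c(x) - c(y) \in \{\pm 1\}$, hence odd), or at least one of them equals $3$ (making $a(x,y) = 0$, hence even). Thus $a(x,y) \equiv 1 \pmod{2}$ iff neither endpoint has color $3$. Writing $P = v_0 v_1 \cdots v_\ell$ with $v_0 = u$ and $v_\ell = v$, and letting $k$ denote the number of edges of $P$ incident to at least one color-$3$ vertex, summing the previous observation over the $\ell$ edges gives
\[
b(P) \;\equiv\; \ell - k \;\equiv\; \ell + k \pmod{2}.
\]

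Next I will count $k$ modulo $2$. The key observation is that properness of the coloring forbids two color-$3$ vertices from being adjacent, so consecutive vertices of $P$ are never both of color $3$. Consequently the edges of $P$ incident to distinct color-$3$ vertices of $P$ are themselves distinct. Each \emph{interior} color-$3$ vertex therefore contributes exactly $2$ edges to $k$, while the endpoint $u$ (resp.\ $v$) contributes $1$ if $\beta(u) = 1$ (resp.\ $\beta(v) = 1$) and $0$ otherwise. Hence $k \equiv \beta(u) + \beta(v) \pmod{2}$, and combining with the display above yields $b(P) \equiv \ell + \beta(u) + \beta(v) \pmod{2}$.

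There is no substantive obstacle here: the whole argument is an elementary edge-wise parity count, and the only delicate point is the non-overlap of edges contributed by distinct color-$3$ vertices, which follows immediately from properness of the coloring. In fact, the statement holds for any properly $3$-colored graph; no grid-specific structure is needed.
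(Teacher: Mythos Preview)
Your proof is correct. The paper does not actually prove this lemma; it is quoted directly from \cite{chang23_tight_arxiv} without argument, so there is nothing to compare against. Your edge-by-edge parity count is the natural elementary proof: the only nontrivial step is that distinct color-$3$ vertices contribute disjoint sets of path edges to $k$, and you correctly justify this via properness. Your closing remark that the grid structure is irrelevant is also accurate---only the proper $3$-coloring is used.
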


Observe that the parity of the $b$-value of a path is determined by the colors
of the endpoints of this path.

\subsection{\boldmath From deterministic to randomized \onlinelocal}

Suppose that we are given a \rolcl algorithm with visibility radius $T = o(\log
n)$.
Our goal is to prove the algorithm fails to solve the 3-coloring problem.
Using Yao's minimax principle, we show how to construct an (oblivious)
adversarial distribution of inputs so that any deterministic \olcl algorithm
fails with noticeable probability.

As in \cite{chang23_tight_arxiv}, the lower bound consists of two main steps:
\begin{enumerate}
  \item First we show how to force paths to have an arbitrarily large $b$-value.
  Generally we cannot force a large $b$-value between the path's endpoints (or
  in fact between any two fixed nodes), but we do get the guarantee that it
  contains two nodes $v_1$ and $v_2$ where $b(v_1,v_2)$ is large.
  Note the location of $v_1$ and $v_2$ is completely unknown to us since we are
  working with an oblivious adversary.
  The construction is inductive:
  Given a procedure that generates a path with $b$-value $\ge k-1$ with
  probability $\ge 1/2$, we show how to generate a path with $b$-value $\ge k$
  also with probability $\ge 1/2$.
  (Cf.\ the construction in \cite{chang23_tight_arxiv} with an adaptive
  adversary, which succeeds every time.)
  Since we invoke the construction for $k-1$ a constant number of times, we are
  able to obtain any desired $b$-value of $k = o(\log n)$ with high probability.
  Note it is logical that we cannot do much better than this as it would
  contradict the existing $O(\log n)$ \detolcl algorithm.

  \item The second step is to actually obtain a contradiction.
  (See \cref{fig:proof-15}.)
  First we construct a path $P_1$ with large $b$-value, say $\gg 4T$, between
  nodes $u_s$ and $v_t$.
  To get the contradiction, we wish to place two nodes $w_s$ and $w_t$ next to
  $u_s$ and $v_t$, respectively, so that the four node cycle has positive
  $b$-value (which is impossible due to \cref{lem:b-value-cycle}).
  If we had an adaptive adversary as in \cite{chang23_tight_arxiv}, this would
  be relatively simple:
  Since the adaptive adversary knows the location of $u_s$ and $v_t$, it just
  constructs an arbitrary path $P_2$ with the same size as $P_1$, picks any two
  $w_s$ and $w_t$ that are at the same distance from each other as $u_s$ and
  $v_t$, mirrors $P_2$ if needed to obtain a non-negative $b$-value, and then
  places this at minimal distance to $P_1$.
  We show that, allowing for some failure probability, we do not need any
  information about $u_s$ and $v_t$ (i.e., nor their location nor the distance
  between them) in order to obtain the same contradiction.
\end{enumerate}

We now proceed with the proof as outlined above.
Accordingly, the first step is the following:

\begin{lemma}\label{lem:large-b-value}
  Given any $k = o(\log n)$, there is an adversarial strategy to construct a
  directed path of length $n^{o(1)}$ with a $b$-value of at least $k$ against
  any \olcl algorithm with locality $T = o(\log n)$. 
  Moreover, this strategy succeeds with high probability.
\end{lemma}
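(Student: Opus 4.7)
I would prove the lemma by induction on $k$, mirroring the adaptive two-level construction of Chang et al.\ but replacing adaptivity by parallel repetition together with an appeal to Yao's minimax principle. Yao's principle reduces the task to exhibiting, for each $k$, an oblivious distribution of inputs against which every \emph{deterministic} \olcl algorithm $\algoA$ of locality $T = o(\log n)$ produces, with high probability, a directed path of length $n^{o(1)}$ containing two nodes $v_1, v_2$ witnessing $b(v_1, v_2) \geq k$.

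For the base case $k=1$, the adversary reveals a sufficiently long directed path whose input labels force, by the local checkability of valid $3$-colorings, at least one color-$3$ node to appear on it with probability $1 - o(1)$; \cref{lem:b-value-parity} then identifies a sub-path containing this boundary with $b$-value $\geq 1$.

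For the inductive step, assume a distribution $\DD_{k-1}$ that, with probability $\geq 1/2$, outputs a path of length $L_{k-1} = n^{o(1)}$ containing a witness sub-path of $b$-value $\geq k-1$. I construct $\DD_k$ by laying out a constant number of independent copies of $\DD_{k-1}$ joined by short bridge segments, whose lengths are chosen via \cref{lem:b-value-parity} so that witnesses with compatible signs add constructively. Because the adversary distribution is a product and $\algoA$ is deterministic of locality $T = o(\log n)$, the witnesses in distinct copies depend only on disjoint portions of the input and are therefore mutually independent, so with constant probability enough copies succeed simultaneously with compatible signs to produce a combined witness sub-path of $b$-value $\geq k$.

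The main obstacle I expect is that the oblivious adversary cannot react to which individual copies realize positive versus negative witness $b$-values. I plan to address this by having the adversary pre-commit bridge parities covering multiple candidate sign patterns in parallel, so that any single realized sign pattern matches one of them and the analysis extracts the corresponding contiguous run of same-signed, successful copies as the level-$k$ witness. Finally, a separate outermost repetition amplifies the overall success probability to the ``with high probability'' guarantee by running $\Theta(\log n)$ independent trials of the level-$k$ construction sequentially along one long path and union-bounding over them. The total length is $\Theta(\log n) \cdot c^{k} \cdot L_1$ for an absolute constant $c$, which is $n^{o(1)}$ because $k = o(\log n)$.
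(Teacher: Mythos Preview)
Your proposal has the right high-level shape but is missing the key technical idea. Your combination step relies on bridge segments with pre-committed parities to glue same-sign witnesses of value $\geq k-1$ into a witness of value $\geq k$. This does not work: even given two positive witnesses $b(u_i,v_i), b(u_j,v_j) \geq k-1$, the algorithm can color the connecting segment so that $b(v_i, u_j) = -(k-1)$ exactly, yielding $b(u_i, v_j) = k-1$ and no improvement; since the algorithm also controls where the witness endpoints land and whether they carry color $3$, it can arrange the parity of $b(v_i, u_j)$ to defeat any deterministic bridge parity you pre-commit. The paper's remedy is essential and different from ``pre-committing in parallel'': make each bridge length's parity a \emph{fresh random coin}, which the algorithm does not see until after the sub-paths $P_i$ are already colored. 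Then \cref{lem:b-value-parity} forces the parity of $b(v_i,u_j)$ to be uniformly random and independent of the algorithm's prior choices, so a case analysis (same sign versus opposite sign of the two witnesses) succeeds with probability $1/2$ regardless of what the algorithm did.

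There is also a confusion about locality: your claim that witnesses in distinct copies are ``mutually independent because $\algoA$ is deterministic of locality $T$'' is false in online-\local. Locality $T$ limits only what new structure is revealed at each step; the algorithm has global memory, so its coloring of copy $i$ can depend arbitrarily on all of copies $1, \ldots, i-1$ even when they are far apart. (Once the fresh-coin mechanism is in place, this issue is handled by a conditional stochastic-domination argument---each sub-construction succeeds with probability $\geq 1/2$ given any history---but your plan has no fresh coins to appeal to.) Minor: your base case is off---grid $3$-coloring has no input labels and any path is $2$-colorable, so nothing forces a color-$3$ node; the paper starts at $k=0$ trivially.
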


\begin{proof}
  Consider the following recursive construction:
  \begin{itemize}
    \item If $k = 0$, create a single node with previously unrevealed nodes
    all around it (inside the visibility radius~$T$).
    \item Otherwise, repeat the following steps four times in total:
    \begin{itemize}
      \item Create four distinct paths $P_1,P_2,P_3,P_4$ by following the
      procedure for $k-1$.
      \item Toss independent fair coins $c_1,c_2,c_3 \in \{ 0,1 \}$.
      \item Connect the $P_i$'s horizontally aligned and in order while placing
      $c_i+1$ nodes between paths $P_i$ and $P_{i+1}$.
    \end{itemize}
    Let $Q_1,Q_2,Q_3,Q_4$ be the four paths created by this procedure (each
    having their own four $P_i$'s coming from the procedure in the previous iteration $k-1$). Next, we connect
    the $Q_i$'s horizontally aligned with each other and in arbitrary order.
    Note that we need to place an additional node between each pair of $Q_i$'s. Otherwise, we would have to have revealed the edge
    between the two endpoints too early to the algorithm.
  \end{itemize}
  Thus, for $k \ge 1$ we have $16$ invocations of the procedure for $k-1$ in
  total.
  We argue that, with probability at least $1/2$, the path yielded by this
  procedure contains a segment with $b$-value at least $k$.
  By repeating this procedure $O(\log n)$ times independently, we obtain at least
  one path with the desired property with high probability.

  To prove that the recursive construction works, we proceed by induction.
  Fix $k \ge 1$ and suppose that the procedure for $k-1$ succeeds with
  probability $p \ge 1/2$ at yielding a segment with $b$-value at least $k-1$.
  Let us first consider $Q_1$.
  Let $X_i$ be a random variable that is $1$ if this occurs for path $P_i$ and
  $0$ otherwise.
  Then we have
  \begin{align*}
    \Pr[\sum_{i=1}^4 X_i < 2]
    &= \Pr[\sum_{i=1}^4 X_i = 0] + \Pr[\sum_{i=1}^4 X_i = 1] \\
    &= (1-p)^4 + 4p(1-p)^3 \\
    &= (1-p)^3(1+3p) \\
    &\le \frac{1}{2}.
  \end{align*}
  Hence, with probability at least $1/2$ there are at least two paths $P_i$ and
  $P_j$, $i < j$, for which the construction succeeds.

  Since we are dealing with paths, we may simplify the notation and write $b(u,v)$
  for the $b$-value of the (unique) segment that starts at $u$ and ends at $v$.
  Let thus $u_i,v_i,u_j,v_j$ appear in this order in $Q_1$ and
  $\abs{b(u_i,v_i)}, \abs{b(u_j,v_j)} \ge k-1$.
  Next we will show that, conditioned on this assumption, the probability that
  $Q_1$ contains a segment with $b$-value at least $k$ is at least $1/2$.
  Hence, \emph{a priori}, $Q_1$ contains such a segment with probability at least
  $1/4$.
  Since all $Q_i$'s are constructed in the same way and independently of
  one another, the probability that at least one of the $Q_i$'s contains such a
  segment is at least $1-(1-1/4)^4 > 1-1/e > 1/2$.

  Let us write $\sigma(x)$ for the sign function of $x$ (i.e., $\sigma(x) = 1$
  if $x > 0$, $\sigma(x) = -1$ if $x < 0$, and $\sigma(0) = 0$).
  To see why the above holds for $Q_1$, consider the two following cases:
  \begin{description}
    \item[Case 1: \boldmath$\sigma(b(u_i,v_i)) = \sigma(b(u_j,v_j))$.]
    Using \cref{lem:b-value-parity}, we have that $b(v_i,u_j) \equiv \beta(v_i)
    + \beta(u_j) + c_i + \cdots + c_{j-1} \pmod 2$.
    Since $\beta(v_i)$ and $\beta(u_j)$ are fixed, and the coin tosses are
    independent, we have $\abs{b(v_i,u_j)} \nequiv k-1 \pmod 2$ with probability
    $1/2$.
    Assuming this holds, we have either $\abs{b(v_i,u_j)} \ge k$, in which case
    we are done, or $\abs{b(v_i,u_j)} \le k-2$.
    Since $b(u_i,v_i)$ and $b(u_j,v_j)$ have the same sign, we have 
    \begin{align*}
      \abs{b(u_i,v_j)}
      &= \abs{b(u_i,v_i) + b(v_i,u_j) + b(u_j,v_j)} \\
      &\ge \abs{b(u_i,v_i) + b(u_j,v_j)} - \abs{b(v_i,u_j)} \\
      &\ge 2(k-1) - (k-2) \\
      &=k.
    \end{align*}
    \item[Case 2: \boldmath$\sigma(b(u_i,v_i)) \neq \sigma(b(u_j,v_j))$.]
    Arguing by using \cref{lem:b-value-parity} as before, we obtain that
    $\abs{b(v_i,u_j)} \nequiv 0 \pmod 2$ holds with probability $1/2$.
    Assuming this is the case, we have thus $\abs{b(v_i,u_j)} \ge 1$.
    Without restriction, let $\sigma(b(u_i,v_i)) = \sigma(b(v_i,u_j))$.
    Then 
    \[
      \abs{b(u_i,u_j)}
      = \abs{b(u_i,v_i) + b(v_i,u_j)}
      \ge k-1 + 1
      = k.
    \]
  \end{description}

  Finally, let us confirm that the construction fits into the $(\sqrt{n} \times
  \sqrt{n})$-grid.
  We only reveal at most $2T + 1 = o(\sqrt{n})$ in a column, so we need to only
  consider nodes along a row.
  The initial path contains $m_0 = 2T + 1$ visible nodes and in the $i$-th
  recursive step we have $m_i \le 16p_{i-1} + 27$ visible nodes in total for $i
  \ge 1$.
  (Inside each $Q_i$ we need at most $2(4-1) = 6$ additional nodes to join the
  four $P_i$'s, and to join the four $Q_i$'s we need an additional $3$ nodes.)
  Solving the recursion, in the $k$-th step we have thus
  \[
    m_k = \frac{1}{5} \left( 2^{4k+1}(5T+7) - 9 \right) = n^{o(1)}
  \]
  visible nodes along the path since $k, T = o(\log n)$.
\end{proof}

We now illustrate the idea for getting the contradiction previously described.
(See \cref{fig:proof-15}.)
Invoking \cref{lem:large-b-value}, we obtain a path $P_1$ with a $b$-value of
$4T+4$ between two nodes $u_s$ and $v_t$ (whose positions are unknown to us).
Letting $L$ be the length of $P_1$, we (arbitrarily) create a second path $P_2$
of length $10L$ and then randomly choose how to align $P_1$ and $P_2$.
More specifically, letting $u$ be the first node in $P_1$, we choose some node
$w$ of $P_2$ uniformly at random and align $u$ and $w$; then we mirror $P_2$
with probability $1/2$ and reveal it at distance $2T + 2$ to $P_1$ (which is
consistent with all previously revealed nodes).

The reason why this works is the following:
Since $P_1$ has length $L$ and $P_2$ length $10L$, with at least $4/5$ we align
the paths so that each node in $P_1$ has a matching node underneath it in $P_2$.
Let $w_s$ and $w_t$ be the nodes matching $u_s$ and $v_t$, respectively.
Then because we mirror $P_2$ with probability $1/2$, we get that
$b(w_s,\dots,w_t)$ is at least zero in expectation (conditioned on having
properly aligned the two paths).
Hence, with probability at least $2/5$ we obtain a cycle
$(u_s,\dots,v_t,\dots,w_t,\dots,w_s,\dots,u_s)$ where $b(u_s,\dots,v_t) > 4T+4$,
$b(w_t,\dots,w_s) \ge 0$, and $b(v_t,\dots,w_t), b(w_s,\dots,u_s) \ge -2T-2$
(due to the distance between the two paths).

\begin{proof}[Proof of \cref{thm:three-col-ro-lcl}]
  Observe that, using \cref{lem:large-b-value}, we can construct a path
  $P_1 = (u_0,\dots,u_L)$ of length $L = n^{o(1)}$ where some segment
  $(u_s,\dots,u_t)$ of $P_1$ has a $b$-value of $k>4T+4$.
  Next we construct a path $P_2 = (v_0,\dots,v_{10L})$ of length $10L$ that will
  be placed below $P_1$.
  We assume that the points of the path are revealed to the algorithm in some
  predefined order.

  The position and orientation of $P_2$ are chosen as follows:
  \begin{enumerate}
    \item Choose a node $v_r \in [L,9L]$ of $P_2$ uniformly at random.
    This node is placed below the node $u_0$ of $P_1$ at $2T+2$ distance from
    it.
    \item Throw a fair coin $m \in \{ 0,1 \}$.
    If $m = 1$, mirror $P_2$ along the vertical axis that goes through $u_0$ and
    $v_r$.
  \end{enumerate}
  The nodes of $P_2$ are revealed to the algorithm in the same predefined and
  possibly mirrored order.
  See \cref{fig:proof-15} for an example.

  \begin{figure}
    \centering
    \includegraphics[scale=1.2]{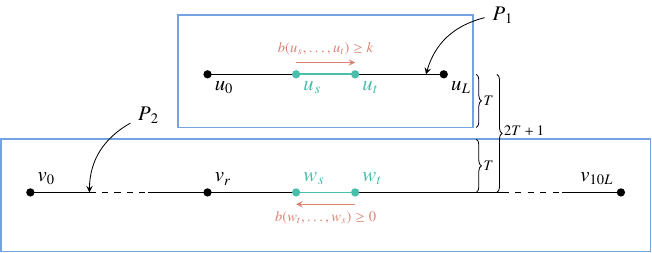}
    \caption{How to turn paths with large $b$-value into a contradiction.
    Here the blue area includes the nodes revealed so far around the path
    segments $(u_0,\ldots,u_L)$ and the corresponding part of $P_2$ underneath
    it. The green segments are the two segments we consider in the proof. A
    cycle going through both paths leads to a contradiction.}
    \label{fig:proof-15}
  \end{figure}

  We next prove that we obtain the desired lower bound.
  First notice that, since we pick $v_r \in [L,9L]$, every node in $P_1$ has a
  counterpart in $P_2$ whether we mirror $P_2$ or not.
  Let $(w_s,\dots,w_t)$ denote the segment matched to $(u_s,\dots,u_t)$.
  Consider the case where either of the following is true:
  \begin{itemize}
    \item $m = 0$ and $r \in [L,9L-s-t]$
    \item $m = 1$ and $r \in [L+s+t,9L]$
  \end{itemize}
  Denoting by $E$ the event in which this occurs, note we have
  \[
    \Pr[E] = \frac{1}{2}\Pr[r \in [L,9L-s-t]] + \frac{1}{2}\Pr[r \in [L+s+t,9L]]
    = \frac{8L - s - t + 1}{9L + 1}
    \ge \frac{6L + 1}{9L + 1}
    > \frac{2}{3}.
  \]
  Now conditioned on $E$, notice that every segment $(v_x,\dots,v_y)$ where $x
  \in [L+s,9L-s]$ and $y \in [L+t,9L-t]$ has equal probability of being matched
  with $(u_s,\dots,u_t)$ either in the same direction (i.e., $w_s = v_x$ and
  $w_t = v_y$) or reversed (i.e., $w_s = v_y$ and $w_t = v_x$).
  Hence,
  \[
    \Pr[b(w_t,\dots,w_s) \ge 0 \mid E] \ge \frac{1}{2}.
  \]
  (In fact, the probability is exactly $1/2$ if $b(w_t,\dots,w_s) > 0$ and $1$
  if $b(w_t,\dots,w_s) = 0$.)
  Thus, the probability that $b(w_t,\dots,w_s) \ge 0$ is $> (2/3) \cdot (1/2) =
  1/3$.

  From here on, we proceed as in~\cite{chang23_tight_arxiv}.
  By \cref{lem:b-value-cycle}, the cycle
  $(u_s,\ldots,u_t,\ldots,w_t,\ldots,w_s,\ldots,u_s)$ must have a $b$-value of
  $0$.
  However, recall that the $b$-value of a path is bounded by its length by
  definition.
  In our case, $-2T-2< b(u_t,\ldots,w_t) <2T+2$, $-2T-2< b(w_s,\ldots,u_s)
  <2T+2$, $b(u_s,\ldots,u_t) \ge k$, and $b(w_t,\ldots,w_s)\ge 0$.
  In order for the $b$-value of the cycle to be $0$, we would need to have
  $2(2T+2) \ge k$, which is a contradiction.
  Since this occurs with noticeable probability (i.e., $> 1/3$), the claim
  follows.
\end{proof}
\section{\boldmath LCL problems in paths and cycles in randomized \onlinelocal}\label{sec:cycles}

In this section, we show that the complexity of LCLs in paths and cycles is either \(\myO{1}\) or \(\myTheta{n}\) in \rolcl (with an oblivious adversary).

  \begin{theorem}
    Let $\Pi$ be an LCL problem on paths and cycles (possibly with inputs).
    If the locality of $\Pi$ is $T$ in the \rolcl model, then its locality is $O(T + \log*n)$ in the \local model.
  \end{theorem}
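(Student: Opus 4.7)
The plan is to adapt the deterministic online-\local proof from \cite{akbari_et_al:LIPIcs.ICALP.2023.10} to the randomized setting. We may assume $T = T(n) = o(n)$, since the bound $O(T + \log^* n)$ is otherwise trivial. Fix a \rolcl algorithm $\algo$ of locality $T$ solving $\Pi$ on $n$-node paths and cycles with failure probability at most $1/n$.

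First I would reduce the problem to extracting from $\algo$ a \emph{canonical deterministic local rule}: a function $f$ that assigns, to each isomorphism class of labeled radius-$T$ neighborhoods, an output label, such that applying $f$ pointwise to every node of any $n$-vertex path or cycle yields a feasible solution to $\Pi$. Given such an $f$, a \local algorithm of locality $O(T + \log^* n)$ follows immediately: compute a distance-$(2T+r)$ coloring with constantly many colors in $O(\log^* n)$ rounds using the standard Cole--Vishkin machinery (analogous to \cref{lemma:fin-dep:cole-vishkin}), use $T$ rounds to gather each node's radius-$T$ input neighborhood, and output $f$ applied to it. Here $r$ is the checking radius of $\Pi$; the distance coloring provides the symmetry breaking needed to simulate the online ordering that $f$ was extracted against.

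Next I would extract $f$ by pigeonhole on output distributions. For each neighborhood type $\tau$, consider the distribution of the label that $\algo$ assigns to the center of $\tau$ when the adversary reveals $\tau$ as the \emph{first} node on a sufficiently long path; because the adversary is oblivious, this distribution depends only on $\tau$ and $n$. Some label appears with probability at least $1/|\outLabels|$; define $f(\tau)$ to be any such label. To verify that $f$ yields a feasible solution, I would argue by contradiction. If some window of consecutive neighborhoods $\tau_1, \ldots, \tau_k$ had $f(\tau_1), \ldots, f(\tau_k)$ violating a constraint of $\Pi$, then one could embed many vertex-disjoint copies of this window into a single $n$-node path, revealing each copy's center first, before any of its intra-copy siblings. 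Because the adversary is oblivious and these centers are pairwise at distance more than $2T$, $\algo$ must resolve each copy independently; with probability at least $(1/|\outLabels|)^{k}$ each copy reproduces the bad labeling, and a union bound over many independent copies then drives $\algo$'s overall failure probability above $1/n$, contradicting the hypothesis.

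The main obstacle will be the cycle case, where $f$ must close up consistently around a cycle that has no boundary to anchor it. The deterministic argument of \cite{akbari_et_al:LIPIcs.ICALP.2023.10} resolves this via an automaton-theoretic closure on finitely many boundary states; I would adapt this with a second pigeonhole on the joint output distribution over a short arc, conditioned on the rest of the cycle already being labeled according to $f$. Since there are only polynomially many possible boundary-state pairs and $\algo$ must succeed on cycles of every length $n$ with probability $1 - 1/n$, some pair must occur with noticeable probability and close up consistently; otherwise, embedding independent copies of the failing closure into longer instances would again accumulate failure probability exceeding $1/n$. Once $f$ is known to give a feasible solution on both paths and cycles, the \local construction outlined above completes the argument.
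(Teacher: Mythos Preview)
Your argument has a genuine gap at the independence step. You write that ``because the adversary is oblivious and these centers are pairwise at distance more than $2T$, $\algo$ must resolve each copy independently.'' This is false in the randomized online-\local model: the algorithm carries global state across queries. When the adversary reveals the second copy of a neighborhood $\tau$, the algorithm remembers everything it saw and output for the first copy, and its output distribution on the second copy need not match the first-query distribution at all. Obliviousness of the adversary only means the input sequence is fixed in advance; it places no independence constraint on the algorithm's outputs. Consequently your marginal pigeonhole---defining $f(\tau)$ from the first-query distribution---says nothing about what the algorithm does on later copies, and the ``union bound over many independent copies'' collapses. Concretely, the algorithm could use a single random bit to perfectly correlate all outputs; then no amount of replication amplifies the failure probability.

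The paper's proof avoids this issue by a two-step route. First it runs a speedup lemma (a pumping-lemma-style argument on a virtual path) to reduce any $o(n)$-locality \rolcl algorithm to one with locality $O(1)$; this step is nontrivial and you skipped it. Second, with constant locality there are only constantly many neighborhood types, so one can build a single constant-size instance $P$ containing a fragment for each type, run $\algo$ once on $P$, and apply pigeonhole to the \emph{joint} output labeling: some complete labeling $L$ occurs with probability $\Omega(1)$, and $L$ directly furnishes the canonical rule $f$. The gap-filling verification then needs no independence, because if the gap were never fillable the algorithm would fail on $P$ itself with probability at least $\Omega(1) > 1/|P|$. Your per-type marginal approach cannot be repaired without first collapsing the locality to $O(1)$ and then switching to a joint pigeonhole.
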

  This theorem is a simple corollary of the following two lemmas:
  \begin{lemma}
    \label{lemma:rolocal-speedup}
    Let $\Pi$ be an LCL problem on paths and cycles (possibly with inputs), and let $\algoA$ be a \rolcl algorithm solving $\Pi$ with locality $o(n)$.
    Then there exists a \rolcl algorithm $\algoA'$ solving $\Pi$ with locality $O(1)$.
  \end{lemma}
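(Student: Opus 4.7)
The plan is to adapt the deterministic $o(n) \Rightarrow O(1)$ gap theorem for paths and cycles from~\cite{akbari_et_al:LIPIcs.ICALP.2023.10} to the randomized model, combining it with the amnesiac-algorithm technology of \cref{sec:olocal-slocal-simulation}. The key structural feature I rely on is that a radius-$T$ neighborhood in a path or cycle is just a bounded-length word over $\inLabels$, of which there are only $\abs{\inLabels}^{O(T)}$ types, while the LCL constraints of $\problem$ can be checked inside radius-$r$ windows for a constant checking radius~$r$.

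First, I would fix $n_0$ large enough that $T(n_0) \le n_0/100$, which is possible because $T(n) = o(n)$, and write $T_0 := T(n_0)$. Then I set up a thought experiment in the style of the proof of \cref{lem:olocal-amnesiac}: build a path on $n_0$ nodes containing many disjoint copies of every radius-$(T_0 + r)$ input window $P$, reveal their central nodes to $\algoA$ in an order that keeps each newly revealed neighborhood isolated from the ones processed so far, and record $\algoA$'s output at each revealed center. Cycles are handled by the same experiment with the wrap-around edge revealed only at the very last step. Because the adversary in \rolcl is oblivious, I can repeat the whole experiment independently $k$ times, with $k$ triply-exponential in $n_0$, mirroring the proof of \cref{thm:rolocal-slocal-simulation}, so that at least one run simultaneously (i)~succeeds globally and (ii)~assigns a common \emph{canonical} output label $\sigma_P$ to an $\Omega(1/\abs{\outLabels})$-fraction of the copies of every window type $P$.

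Fix such a run and the induced map $P \mapsto \sigma_P$. The new algorithm $\algoA'$ simply hardcodes this map: on any input path or cycle of any size, when a node $v$ is revealed together with its radius-$(T_0 + r)$ input window $P$, $\algoA'$ outputs $\sigma_P$. Since $T_0 + r = O(1)$ depends only on the fixed $n_0$, $\algoA'$ has constant locality. Correctness of $\algoA'$ follows because every radius-$r$ window of its outputs coincides, by construction of the experiment, with a radius-$r$ window of $\algoA$'s own outputs in the fixed run, where $\algoA$ satisfied the LCL constraints of $\problem$.

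The main obstacle will be to guarantee \emph{joint} consistency of the canonical labels across chains of overlapping windows, not merely their individual correctness. To handle this I include an extra radius-$r$ of input context in each window type and structure the experiment in phases, one per length of chain, so that the layered pigeonhole in \cref{lem:olocal-amnesiac} produces canonical labels that remain consistent along chains of every length up to $n_0$. The rest is bookkeeping: verifying that $n_0$ was taken large enough to accommodate all the disjoint copies of windows and window-pairs we need, and that the failure probability $1/n_0$ of each run still beats the union bound over the triply-exponentially many independent repetitions.
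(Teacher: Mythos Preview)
Your proposed algorithm $\algoA'$ is a deterministic, memoryless map from the local input window $P$ to a label $\sigma_P$. Such a map cannot exist for many LCLs in the target class. Take $3$-coloring of paths with no input labels: every interior node has the \emph{same} radius-$(T_0+r)$ window type $P$, so your $\algoA'$ would assign the same color $\sigma_P$ to every node, which is not a proper coloring. The \onlinelocal model solves such symmetry-breaking problems precisely by exploiting the adversarial processing order and global memory, both of which your $\algoA'$ discards. The amnesiac machinery of \cref{sec:olocal-slocal-simulation} does not help here: an $(a,C)$-amnesiac algorithm still bases its output on the entire local component together with the processing order of the nodes in it, not merely on the input window of the current node, so the layered pigeonhole does not yield a window-to-label function of the kind you need.

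There is also a circularity you have not addressed: the experiment has to contain many disjoint copies of every radius-$(T_0+r)$ window type, of which there are $\abs{\inLabels}^{\Theta(T_0)}$, and to host all of them the experiment graph must have size $N$ at least exponential in $T_0$. But $\algoA$ run on an $N$-node instance uses locality $T(N)$, not $T(n_0)$, so the windows actually revealed to $\algoA$ have radius $T(N)$, and the canonical label you extract is a function of the radius-$T(N)$ window, not the radius-$T_0$ one. Iterating this does not obviously converge from the hypothesis $T(n)=o(n)$ alone.

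The paper's proof avoids both issues by keeping $\algoA'$ a genuine \onlinelocal algorithm rather than a memoryless map. It (i) builds an $(\alpha,\alpha)$-ruling set $R$ deterministically using the online model, (ii) constructs a single large virtual path $P'$ via a pumping-lemma argument that depends only on $\Pi$, and simulates $\algoA$ on $P'$ to label the neighborhoods of $R$, and (iii) fills the constant-length gaps between ruling-set nodes by brute force, with the pumping lemma guaranteeing a valid filling exists. Randomness enters only in step~(ii), and the failure probability is inherited directly from $\algoA$ on $P'$; no canonical labels, no repeated pigeonholing, and no memoryless reduction are needed.
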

  \begin{lemma}
    \label{lemma:rolocal-simulation}
    Let $\Pi$ be an LCL problem on paths and cycles (possibly with inputs), and let $\algoA$ be a \rolcl algorithm solving $\Pi$ with locality $O(1)$.
    Then there exists a \local algorithm $\algoA'$ solving $\Pi$ with locality $O(\log* n)$.
  \end{lemma}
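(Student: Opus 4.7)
The plan is to proceed in two stages: first reduce \(\algoA\) to a deterministic \slocal algorithm \(\algoB\) of constant locality, and then lift \(\algoB\) to a deterministic \local algorithm of locality \(O(\log^* n)\) by a standard color-reduction argument. For the second stage, one computes a distance-\(k\) coloring of the path or cycle, where \(k\) is the constant locality of \(\algoB\), using \(O(1)\) colors in \(O(\log^* n)\) rounds via Cole--Vishkin applied to the power graph \(G^k\) (cf.\ \cref{lemma:fin-dep:cole-vishkin}); one then simulates \(\algoB\) in \(O(1)\) further rounds by processing nodes in increasing order of color. Correctness is immediate because \(\Pi\) is an LCL.

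The first stage is the main content of the proof. The plan is to rework the deterministic \olcl-to-\slocal simulation of \cite{akbari_et_al:LIPIcs.ICALP.2023.10} in the randomized setting, borrowing the amnesiac-algorithm construction of \cref{lem:olocal-amnesiac,thm:rolocal-slocal-simulation}. Paths and cycles give a substantial simplification: every radius-\(T\) neighborhood is a labeled sub-path of length at most \(2T+1\), so the number of possible neighborhood types \(\TT\) is a constant depending only on \(T\) and \(|\inLabels|\). For each \(\TT\) we want to extract a canonical output label \(\sigma_\TT\) from the behavior of \(\algoA\), such that a deterministic \slocal algorithm that outputs \(\sigma_\TT\) whenever it encounters a node with neighborhood \(\TT\) is locally consistent with \(\Pi\); local consistency suffices because \(\Pi\) is an LCL and the canonical labels will be extracted from an actual correct execution of \(\algoA\).

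The main obstacle is the oblivious-adversary condition: unlike in the deterministic case, we cannot adaptively pick ``good'' neighborhoods after inspecting \(\algoA\)'s random bits. Following the boosting idea of \cref{thm:rolocal-slocal-simulation}, we build one large offline experiment consisting of many disjoint copies of every partial one-component run of every constant length, and repeat the whole experiment enough times that a union bound over the constantly-many neighborhood types together with the \(1-1/n\) success guarantee of \(\algoA\) certifies the existence of at least one execution in which, for every \(\TT\), a positive constant fraction of the copies of \(\TT\) are labeled the same way. Because in a path or cycle every radius-\(T\) neighborhood overlaps at most two previously-committed canonical regions, the amnesiac recursion of \cref{lem:olocal-amnesiac} has constant depth and the experiment stays polynomial in size even after the boosting. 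Fixing any good execution yields the deterministic table \((\TT \mapsto \sigma_\TT)\), which is hard-coded into \(\algoB\); this gives the desired constant-locality \slocal algorithm and concludes the argument.
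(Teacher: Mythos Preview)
Your two-stage plan is reasonable in outline and the second stage is standard, but the first stage has a real gap for cycles and leans on several incorrect intermediate claims. The amnesiac machinery you invoke (\cref{lem:olocal-amnesiac}, \cref{thm:rolocal-slocal-simulation}) is stated and proved only for forests; the experiment graph there is itself a forest, so it contains no partial one-component run in which revealing the next node closes a cycle, and when that happens on an input cycle there is nothing to cut-and-paste from. The fixed random string selected to succeed on the forest experiment carries no correctness guarantee at that step. Even on paths, your assertion that the amnesiac recursion has constant depth and the experiment stays polynomial is wrong: one must take $a = n$ (see the proof of \cref{thm:olocal-slocal-simulation}), and after boosting the experiment is triply-exponential. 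This happens not to hurt the final locality because $T(\cdot)$ is an absolute constant, but the reasoning you give is incorrect. Similarly, the canonical label in the amnesiac construction depends on the full partial one-component run, not just on the input neighborhood type $\TT$, so the simple lookup table $\TT \mapsto \sigma_\TT$ you describe is not what the construction produces, and ``extracted from an actual correct execution'' does not by itself guarantee that two adjacent canonical labels are compatible.

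The paper's route is quite different and avoids all of this. It builds a single constant-size experiment $P$ of disjoint path fragments realizing every radius-$(T+r+1)$ input neighborhood, runs $\algoA$ on the centers (and their radius-$r$ balls), and fixes the single output labeling $L$ that occurs with probability $p = \Omega(1)$, after first padding $P$ so that $p \ge 1/|P|$. This $L$ is used as a canonical labeling in a ruling-set-plus-brute-force-gap-filling \local algorithm following \cite[Lemma~5.6]{akbari_et_al:LIPIcs.ICALP.2023.10}. The only randomized ingredient is the gap-filling argument: if some gap were unfillable, $\algoA$ would fail on $P$ with probability at least $p \ge 1/|P|$, contradicting its high-probability success guarantee. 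Crucially, this argument is indifferent to whether the final graph is a path or a cycle, which is why the paper can treat both uniformly rather than as a special case of the tree simulation.
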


By previous results~\cite{akbari_et_al:LIPIcs.ICALP.2023.10}, it is known that,
in the case of paths and cycles, any (deterministic) \olcl algorithm with
sublinear locality can be sped up to an \olcl algorithm with constant locality.
We show how to extend \cite[Lemma 5.5]{akbari_et_al:LIPIcs.ICALP.2023.10}.
We start by constructing a large virtual graph $P'$ such that, when the original
algorithm runs on the virtual graph $P'$, the labeling produced by the algorithm
is locally compatible with the labeling in the original graph $P$.

\begin{proof}[Proof of \cref{lemma:rolocal-speedup}]
  Let $\Pi$ be an LCL problem in paths or cycles with checking-radius $r$, let
  $P$ be a path or a cycle with $n$ nodes, and let $\algoA$ be a \rolcl
  algorithm solving $\Pi$ with locality $T(n) = o(n)$.
  In the proof of \cite[Lemma 5.5]{akbari_et_al:LIPIcs.ICALP.2023.10}, the
  authors use three phases to speed up (deterministic) \olcl with sublinear
  locality to only constant locality.
  We use the same strategy to construct a \rolcl algorithm $\algoA'$ for
  solving $\Pi$ with constant locality.
  The phases for constructing \rolcl algorithm $\algoA'$ are as follows:
\begin{enumerate}
  \item In the first phase, the algorithm deterministically creates an $(\alpha,
  \alpha)$-ruling set $R$ for the path $P$, mirroring the approach outlined in
  the proof of \cite[Lemma 5.5]{akbari_et_al:LIPIcs.ICALP.2023.10}.
  \item In the second phase, the algorithm constructs a larger virtual path $P'$
  with $N$ nodes and simulates algorithm $\algoA$ on $P'$ in the neighborhoods
  of nodes in $R$.
  Intuitively this path $P'$ is constructed based on the pumping-lemma-style
  argument on LCL problems presented by \citet{doi:10.1137/17M1157957}, and it
  only relies on (the definition of) problem $\Pi$, not the algorithm $\algoA$.
  During this simulation, the algorithm encounters a failure only if $\algoA$
  fails within specific neighborhoods of $P'$. Consequently, the algorithm's
  success in this phase aligns with the success rate of $\algoA$, ensuring a
  high probability of success.
  \item The third phase is also as in \cite[Lemma
  5.5]{akbari_et_al:LIPIcs.ICALP.2023.10}: The algorithm extends the fixed
  labels around $R$ to the entire path $P$ by applying brute force for nodes
  outside the neighborhood of $R$. Note that this extension is feasible because
  of the way the path $P'$ is created and the pumping-lemma-style argument as
  discussed in \cite{akbari_et_al:LIPIcs.ICALP.2023.10}.
\end{enumerate}
In the end, we compose these three phases together to obtain the \rolcl
algorithm $\algoA'$ for solving $\Pi$ with constant locality.
\end{proof}

\begin{proof}[Proof of \cref{lemma:rolocal-simulation}]
  This proof closely follows the argument presented in \cite[Lemma
  5.6]{akbari_et_al:LIPIcs.ICALP.2023.10}.
  Let $\Pi$ denote an LCL problem with a constant checking-radius $r$, and
  suppose $\algoA$ is a \rolcl algorithm solving $\Pi$ with constant locality
  $T$.
  Define $\beta = T + r + 1$.
  As in \cite[Lemma 5.6]{akbari_et_al:LIPIcs.ICALP.2023.10}, we consider an
  input-labeled graph $P$ formed by many copies of all feasible input
  neighborhoods with a radius of $\beta$.
  We can depict $P$ as a collection of disjoint path fragments that we later
  connect to each other and create a long path with.
  Each of these path fragments has size $2\beta + 1$, so the node in the center
  of each segment (i.e., the ($\beta+1$)-th node in the segment) has the same
  view (up to radius $\beta$) as in the final path.
  Let $V$ be the set of central nodes within these fragments.

  We apply $\algoA$ to each node within the radius-$r$ neighborhood of the nodes
  in $V$ following an arbitrary order and then terminate.
  These nodes have the same view as in the final path because their distance to
  the endpoints is at most $\beta - r = T + 1$.
  We iterate the process multiple times, yielding a distribution of output
  labels around the central nodes.
  Given that the size of $P$ is constant, there exists an output labeling $L$ of
  the nodes occurring with probability $p = \Omega(1)$.
  If needed, we can augment $P$ with (constantly many) additional nodes such
  that $|P| > 1/p$.

  We set this labeling $L$ as the deterministic output for graph $P$ and proceed
  similarly to the deterministic case by constructing the canonical labeling $f$
  from $L$.
  Then, we use the function $f$ to construct a \local algorithm with $O(\log*
  n)$ locality following the same steps as in the proof of
  \cite[Lemma~5.6]{akbari_et_al:LIPIcs.ICALP.2023.10}.
  It is important to highlight that it is feasible to fill any gap of sufficient
  length between parts labeled with the canonical labeling:
  If the algorithm $\algoA$ would never produce a valid labeling for this gap,
  then the algorithm would fail to label $P$ with probability at least $p$ and,
  since $p \geq 1/|P|$, the algorithm $\algoA$ would not succeed with high
  probability.
  \end{proof}

\ifanon
\else
    \section*{Acknowledgments}
    We thank anonymous reviewers for their helpful feedback on previous versions of this work.
    This work was supported in part by the Research Council of Finland, Grants
    333837, 359104 and 363558.
    Xavier Coiteux-Roy acknowledges support from the BMW Endowment Fund, from a
    Postdoc.Mobility fellowship of the Swiss National Science Foundation (SNSF), and from the
    Natural Sciences and Engineering Research Council of Canada (NSERC).
    Francesco d'Amore is supported by the project Decreto MUR n. 47/2025,
    CUP: D13C25000750001.
    Darya Melnyk is supported by the European Research Council (ERC), grant
    agreement No. 864228 (AdjustNet), Horizon 2020, 2020-2025.
    Augusto Modanese is partially supported by the Helsinki Institute for
    Information Technology (HIIT).
    Marc-Olivier Renou received funding from Inria and
    CIEDS through the Action Exploratoire project DEPARTURE, and from the French National
    Research Agency (ANR) through the JCJC grant LINKS (ANR-23-CE47-0003).
    He was additionally supported by the T-ERC QNET project
    (ANR-24-ERCS-0008) and by the European Union's Horizon 2020 Research and Innovation
    Programme through QuantERA Grant Agreements Nos.~731473 and 101017733.

    \paragraph{Errata.}
    Earlier versions of this manuscript (arXiv identifiers \texttt{2403.01903v1}, \texttt{2403.01903v2}, and \texttt{2403.01903v3}) had two additional claims: 
    \begin{enumerate}
        \item In \texttt{v1}, we claimed a result analogous to
        \cref{thm:intro-rolocal-simulation} for \emph{unrooted} trees (which
        would also have implications on the complexity of the sinkless
        orientation problem). 
        \item In \texttt{v1}, \texttt{v2}, and \texttt{v3} we claimed that the
        \dlocal model can derandomize the randomized \local model similarly to
        how the \slocal model can derandomize the \local model
        \cite{dist_deran}.
    \end{enumerate}
    The proofs of these two claims had errors and we do not know whether they
    can be fixed.
    We thank Maxime Flin and Alexandre Nolin for bringing the second of these
    to our attention.
\fi

   \bibliographystyle{plainnat}
  \bibliography{quantum-local.bib}
  \appendix
  \pagebreak
\section{Quantum, bounded-dependence, and non-signaling}\label{app:dc-models}

This appendix introduces both the non-signaling model and the bounded-dependence model, which are the most powerful models that satisfy physical causality, and thus they generalize the \qlocal model.
The distinction between the non-signaling model and the bounded-dependence depends on whether shared states are available (in the non-signaling model) or not (in the bounded-dependence model).
We illustrate the difference between these models by analyzing the problem of $c=2$-coloring a lifebuoy-shaped graph (see \cref{fig:app:toy-grah-blank} and \cref{fig:app:toy-graphs-a}) with locality $T=2$.

We first introduce a circuit formalism which allows us to clarify the early works of \citet{arfaoui2014}, and \citet{gavoille2009}, by re-expressing the randomized \local and \qlocal models in this formalism (see \cref{app:subsec:RandQNSlocal}). 
Second, we introduce the concept of light-cones and the principle of non-signaling, explaining how (together with the symmetries of the graph) they define the non-signaling model (see \cref{app:subsec:causaltheory,app:subsec:causaltheory_ohne}).
Third, we show in \cref{app:sec:G-not-colorable} that lifebuoy-shaped graphs are not $2$-colorable in the non-signaling model with locality $T = 2$ through a reduction from the 2-colorability of the cheating graph $H$ shown in \cref{fig:app:toy-graphs-b}.
At last, we define the bounded-dependence model based on our circuit formalism and on the following three principles: device replication; non-signaling and independence~\cite{gisin2020constraints,coiteux2021no,coiteux2021PRA,beigi2021covariance}; and invariance under symmetries of the graph (see \cref{app:sec:no-shared-resource_mit}). 
We stress its connection to the concept of finitely-dependent distributions~\cite{aaronson1989algebraic,holroyd2016,holroyd2018,spinka2020,holroyd2024}.

\begin{figure}[h]
     \centering
              \includegraphics[width=0.5\textwidth/2]{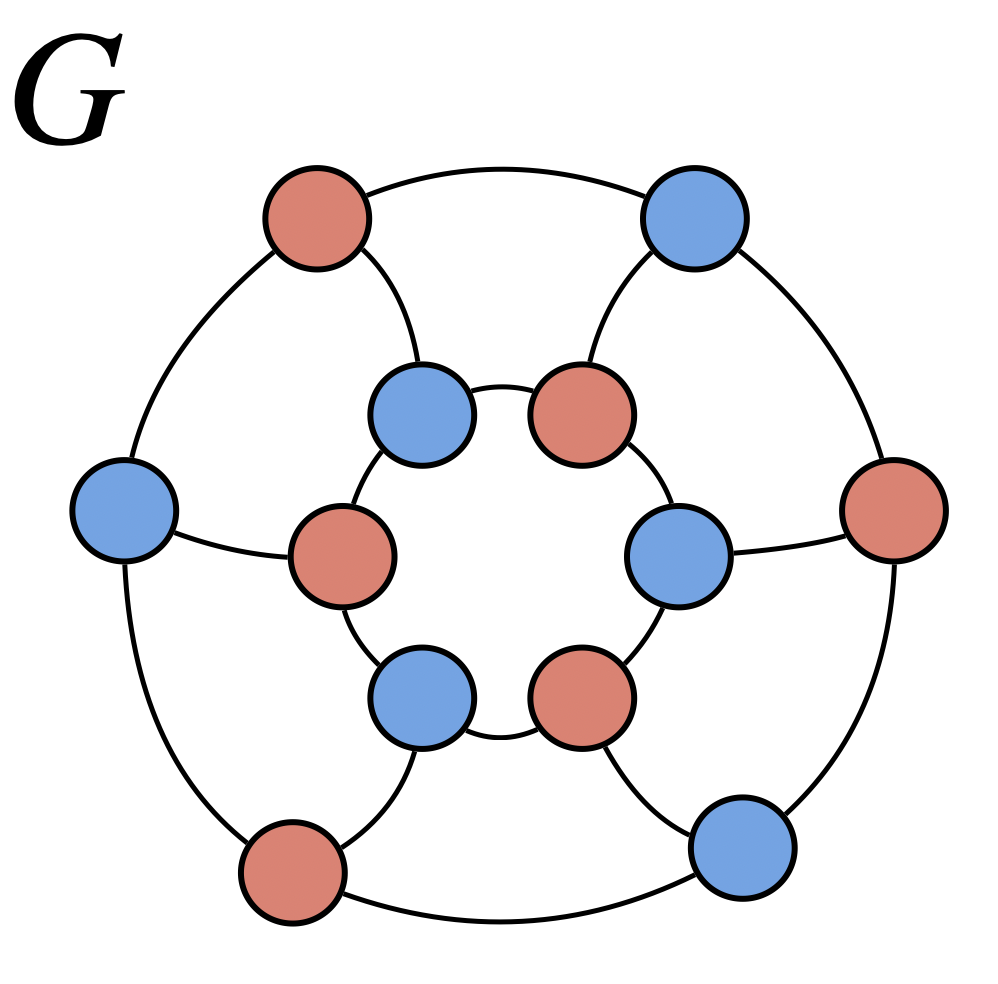}
                       \caption{An example of a \emph{lifebuoy-shaped} graph \(G\). The {lifebuoy-shaped} graphs are defined by the set $\GG$ of graphs that are isomorphic to the above.
    Notice that their chromatic number is 2. 
    In our case, the node labels required by the \local model are unique and go from 1 to 12. Granting this extra power is not restrictive as we are proving a lower bound.}
         \label{fig:app:toy-grah-blank}
\end{figure}
\begin{figure}
     \centering
          \begin{subfigure}[t]{0.39\textwidth}
         \centering
         \includegraphics[width=0.7\textwidth]{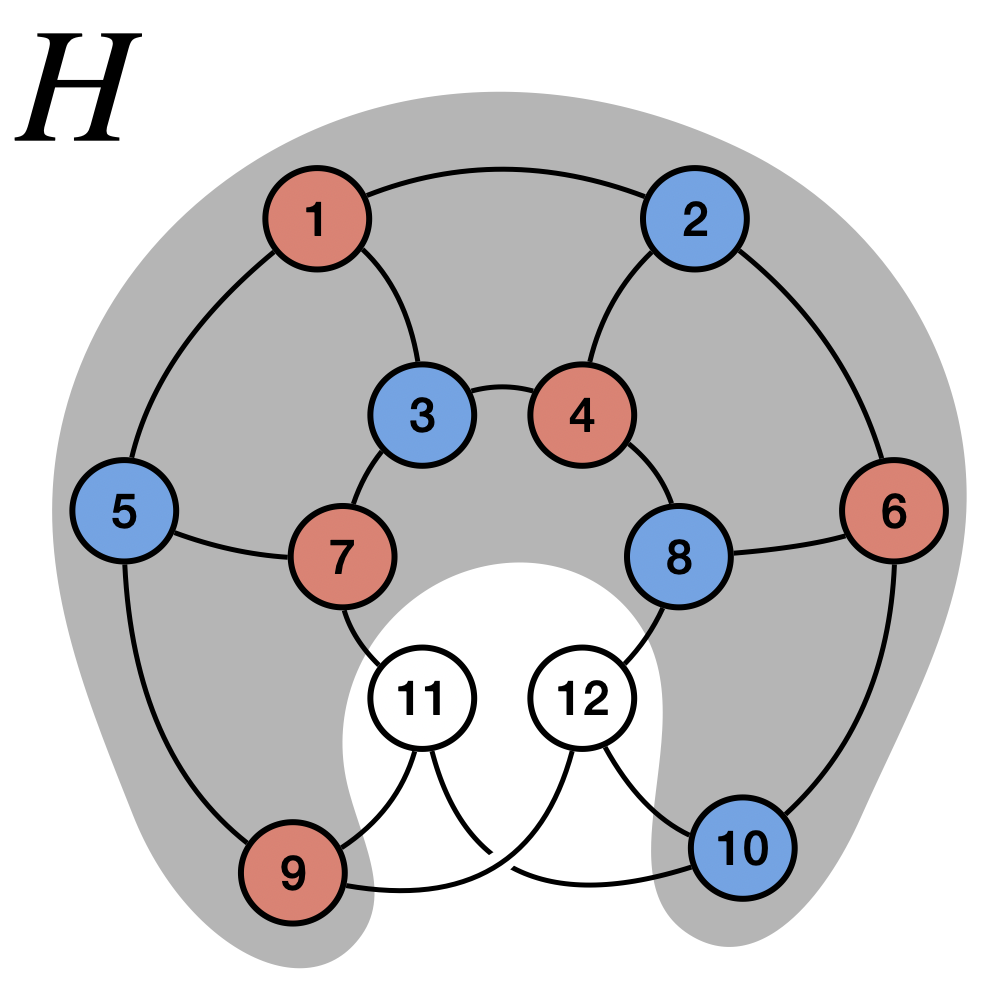}
         \caption{$H$ has chromatic number 3.}
         \label{fig:app:toy-graphs-b}
     \end{subfigure}
          \hfill
     \begin{subfigure}[t]{0.39\textwidth}
         \centering
         \includegraphics[width=0.7\textwidth]{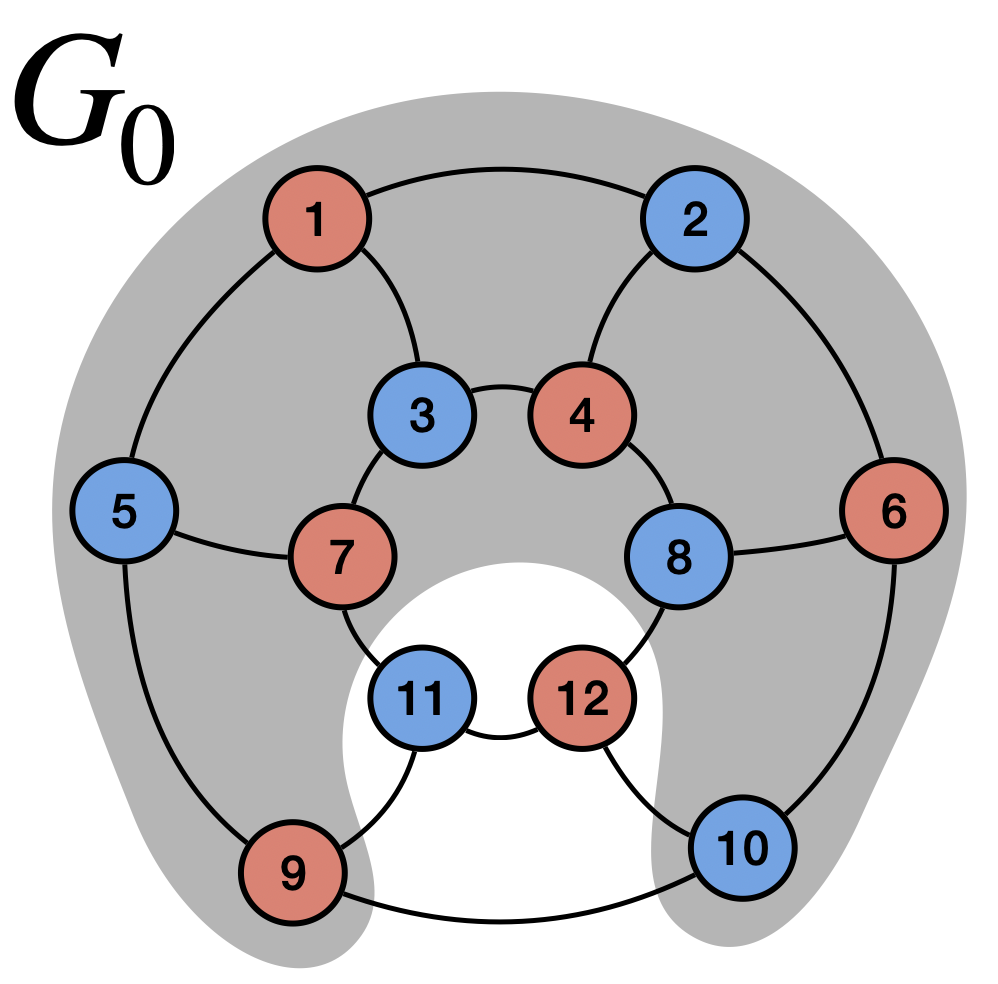}
         \caption{A distributed algorithm that finds a 2-coloring of the lifebuoy-shaped graphs $G\in\GG$ would by definition 2-color the particular instance $G=G_0$.}
         \label{fig:app:toy-graphs-a}
     \end{subfigure}
        \caption{
        The joint view of the couple of nodes $u,v=1,2$ after $T = 2$ rounds of communication is limited to the gray area.  In this region the graphs $G_0$ and $H$ are identical. The non-signaling principle implies that the outputs $(c_1,c_2)$ must therefore be identically distributed in both $G_0$ and~$H$.
}
\label{fig:app:toy-graphs}
\end{figure}

\subsection{\boldmath Randomized \local, \qlocal, and non-signaling models}\label{app:subsec:RandQNSlocal}

In the next two sections, we re-introduce the randomized \local model and the \qlocal model for coloring lifebuoy-shaped graphs in a circuit formalism.
This will later enable us to clarify the definitions of the non-signaling and bounded-dependence models.

\subsubsection{\boldmath Randomized \local model}

We consider twelve nodes with unique identifiers ranging from 1 to 12 and that are connected in a lifebuoy-shaped graph that is a priori unknown to the nodes\footnote{A reader used to standard quantum nonlocality should see the graph --- $H$ or $G\in \GG$ --- as an input of the problem, split and distributed among the local parties.} (an example of such graph is the labeled graph $G_0\in \GG$ illustrated in \cref{fig:app:toy-graphs-a}).
The nodes try to color this graph in the randomized \local model within $T=2$ steps of synchronous communication.
The difficulty is that a given node ignores which $G\in \GG$ connects the set of labelled nodes: it can only discover a local part of the structure of $G$ by communicating to its neighbors.
The most general $(T = 2)$-round strategy for the node $1$ consists of the following procedure, which alternates between randomized processing steps and communication steps (the computational power and size of exchanged messages are unbounded):
\begin{description}
     \item[Processing~0:] Sample a random real number and store it locally.
     \item[Communication~1:] Send all stored information, including the sampled random number, to all neighbors. Receive information from all neighbors and store it for subsequent rounds (in $G_0$, the neighbors are $2, 3, 5$).
     \item[Processing~1:] Process all stored information (possibly in a randomized way) and store the result.\footnote{In classical information theory, it is known that intermediate processing gates can be taken as identity gates, that is, any strategy can be simulated by a two-layer circuit where the parties send their first random number to all parties up to a distance $T$, and then make a unique processing step after all the communication has taken place. In quantum and non-signaling theories, this is not the case anymore~\cite{coiteux2023caterpillars}.}
     \item[Communication~2:] Send all stored information, including all received messages and the outputs of processing steps, to all neighbors. Receive information from neighbors and store it.
     \item[Processing~2:] Process all stored information (possibly in a randomized way) to output a color.
\end{description}

Such $T$-round strategy on the graph $G$ can be represented formally as a circuit $C_{G,T}$, such as in \cref{fig:app:two-circuits-a} where semicircles, line wires, and squares respectively represent the sampling of a random number, the transfer (or storage) of information, and the processing of information.  
Once the concrete operations performed by the nodes (i.e.\ randomness sampling and processing) are made explicit, it is possible to compute (using classical information theory) the exact output distribution of the strategy on graph $G$, that is, the probability distribution $\pr{ c_1,\dots,c_n \ \st \ C_{G,T}}$ of observing that the set of nodes $\{1,\dots,n\}$ outputs the colors $\{c_1,\dots,c_n\}$ when connected as per one of the lifebuoy-shaped graphs $G\in \GG$.
Importantly, in our model, the operations performed by the nodes cannot depend on the connection graph $G$.

A generic classical strategy with shared randomness can be in the same way represented by the general circuit of \cref{fig:app:two-circuits-b}, by initializing the circuit with a source of randomness common to all nodes (the large semicircle).

\begin{figure}
     \centering
     \begin{subfigure}[t]{0.3\textwidth}
         \centering
                  \resizebox{\textwidth}{!}{
\begin{tikzpicture}[square/.style={rectangle}]
\definecolor{myred}{RGB}{255,80,80}
\definecolor{myblue}{RGB}{255,80,80}
\definecolor{mypurple}{RGB}{255,80,80}
\definecolor{mygray}{RGB}{140,140,140}
\definecolor{myyellow}{RGB}{255,200,80}

\def\listnodes{{12,10,8,6,4,2,1,3,5,7,9,11}}
  \foreach \i in {1,...,12}{
   \pgfmathsetmacro{\j}{\listnodes[\i-1]}
     \node at (-0.5,0.5*\i) [circle,fill=white,inner sep=1] (name\j)  {\footnotesize $\j$};
     
    \node at (0,0.5*\i) [semicircle,fill={\ifnum\j=11 black!80\else \ifnum\j=12 black!80\else myred\fi\fi},rotate=90,inner sep=2] (input\j)  {};
    
    \node at (2,0.5*\i) [square,fill={\ifnum\j=11 black!80\else \ifnum\j=12 black!80\else\ifnum\j=9 black!80\else\ifnum\j=7 black!80\else\ifnum\j=8 black!80\else\ifnum\j=10 black!80\else myred\fi\fi\fi\fi\fi\fi},inner sep=2.75] (gate\j)  {};

\node at (4,0.5*\i) [square,fill={\ifnum\j=1 myred\else \ifnum\j=2 myred\else black!80\fi\fi},inner sep=2.5] (output\j)  {};

\node[text={\ifnum\j=1 myred\else \ifnum\j=2 myred\else black\fi\fi}] (out\j) [right of=output\j] {$c_{\j}$};
\draw [->,draw={\ifnum\j=1 myred\else \ifnum\j=2 myred\else black!80\fi\fi},thick] (output\j) -- (out\j);
  }

\node at (-1.25,3.5) [semicircle,rotate=90,fill=white,inner sep=7.25] (res) {};

\node at (4,-0) [square,fill=black!15,inner sep=2.75] (phantomoutput11)  {};
\node at (4,-0.5) [square,fill=black!15,inner sep=2.75] (phantomoutput9)  {};
\node at (4,0.5*13) [square,fill=black!15,inner sep=2.75] (phantomoutput12)  {};
\node at (4,0.5*14) [square,fill=black!15,inner sep=2.75] (phantomoutput10)  {};
    
\node at (2,-0) [square,fill=black!15,inner sep=2.75] (phantomgate11)  {};
\node at (2,-0.5) [square,fill=black!15,inner sep=2.75] (phantomgate9)  {};
\node at (2,0.5*13) [square,fill=black!15,inner sep=2.75] (phantomgate12)  {};
\node at (2,0.5*14) [square,fill=black!15,inner sep=2.75] (phantomgate10)  {};

\node at (0,-0) [semicircle,rotate=90,,fill=black!15,inner sep=2] (phantominput11)  {};
\node at (0,-0.5) [semicircle,rotate=90,,fill=black!15,inner sep=2] (phantominput9)  {};
\node at (0,0.5*13) [semicircle,rotate=90,,fill=black!15,inner sep=2] (phantominput12)  {};
\node at (0,0.5*14) [semicircle,rotate=90,,fill=black!15,inner sep=2] (phantominput10)  {};

\node[text=black!15] at (-0.5,0) [circle,fill=white,inner sep=1] ()  {\footnotesize $11$};
\node[text=black!15] at (-0.5,-0.5) [circle,fill=white,inner sep=1] ()  {\footnotesize $9$};
\node[text=black!15] at (-0.5,0.5*13) [circle,fill=white,inner sep=1] ()  {\footnotesize $12$};
\node[text=black!15] at (-0.5,0.5*14) [circle,fill=white,inner sep=1] ()  {\footnotesize $10$};

\draw[-,black!15,thick]   (phantominput10) -- (gate9);
\draw[-,black!15,thick]   (input9) -- (phantomgate10);

\draw[-,black!15,thick]   (phantominput12) -- (gate11);
\draw[-,black!15,thick]   (input11) -- (phantomgate12);

\draw[-,black!15,thick]   (input10) -- (phantomgate9);
\draw[-,black!15,thick]   (phantominput9) -- (gate10);

\draw[-,black!15,thick]   (input12) -- (phantomgate11);
\draw[-,black!15,thick]   (phantominput11) -- (gate12);

\draw[-,black!15,thick]   (phantomgate10) -- (output9);
\draw[-,black!15,thick]   (gate9) -- (phantomoutput10);

\draw[-,black!15,thick]   (phantomgate12) -- (output11);
\draw[-,black!15,thick]   (gate11) -- (phantomoutput12);

\draw[-,black!15,thick]   (gate10) -- (phantomoutput9);
\draw[-,black!15,thick]   (phantomgate9) -- (output10);

\draw[-,black!15,thick]   (gate12) -- (phantomoutput11);
\draw[-,black!15,thick]   (phantomgate11) -- (output12);

\def\listnodes{{
0,0,0,0,
0,0,0,0,
0,0,0,0,
0,0,0,0,
0,0,0,0,
0,0,0,0,
3,5,7,11,
4,6,8,12,
5,9,9,11,
6,10,10,12,
7,9,11,11,
8,10,12,12
}}
  \foreach \i in {11,9,7,8,10,12}{
  \foreach \j in {0,1,2,3}
   \pgfmathsetmacro{\j}{\listnodes[\i*4+\j-4]}  
\draw[-,mygray,thick]   (input\j) -- (gate\i);
}

  \foreach \i in {11,9,7,8,10,12}{
  \foreach \j in {0,1,2,3}
   \pgfmathsetmacro{\j}{\listnodes[\i*4+\j-4]}  
\draw[-,mygray,thick]   (gate\j) -- (output\i);
}

\draw[-,myred,thick]   (input1) -- (gate3);
\draw[-,myred,thick]   (input3) -- (gate3);
\draw[-,myred,thick]   (input4) -- (gate3);
\draw[-,myred,thick]   (input7) -- (gate3);
\draw[-,mygray,thick]   (gate1) -- (output3);
\draw[-,mygray,thick]   (gate3) -- (output3);
\draw[-,mygray,thick]   (gate4) -- (output3);
\draw[-,mygray,thick]   (gate7) -- (output3);

\draw[-,myred,thick]   (input1) -- (gate5);
\draw[-,myred,thick]   (input5) -- (gate5);
\draw[-,myred,thick]   (input7) -- (gate5);
\draw[-,myred,thick]   (input9) -- (gate5);

\draw[-,mygray,thick]   (gate1) -- (output5);
\draw[-,mygray,thick]   (gate5) -- (output5);
\draw[-,mygray,thick]   (gate7) -- (output5);
\draw[-,mygray,thick]   (gate9) -- (output5);

\draw[-,myblue,thick]   (input2) -- (gate4);
\draw[-,myblue,thick]   (input3) -- (gate4);
\draw[-,myblue,thick]   (input4) -- (gate4);
\draw[-,myblue,thick]   (input8) -- (gate4);
\draw[-,mygray,thick]   (gate2) -- (output4);
\draw[-,mygray,thick]   (gate3) -- (output4);
\draw[-,mygray,thick]   (gate4) -- (output4);
\draw[-,mygray,thick]   (gate8) -- (output4);

\draw[-,myblue,thick]   (input2) -- (gate6);
\draw[-,myblue,thick]   (input6) -- (gate6);
\draw[-,myblue,thick]   (input8) -- (gate6);
\draw[-,myblue,thick]   (input10) -- (gate6);
\draw[-,mygray,thick]   (gate2) -- (output6);
\draw[-,mygray,thick]   (gate6) -- (output6);
\draw[-,mygray,thick]   (gate8) -- (output6);
\draw[-,mygray,thick]   (gate10) -- (output6);

\draw[-,mypurple,thick]   (input1) -- (gate1);
\draw[-,mypurple,thick]   (input2) -- (gate1);
\draw[-,mypurple,thick]   (input3) -- (gate1);
\draw[-,mypurple,thick]   (input5) -- (gate1);

\draw[-,mypurple,thick]   (input1) -- (gate2);
\draw[-,mypurple,thick]   (input2) -- (gate2);
\draw[-,mypurple,thick]   (input4) -- (gate2);
\draw[-,mypurple,thick]   (input6) -- (gate2);

\draw[-,myred,thick]   (gate1) -- (output1);
\draw[-,myred,thick]   (gate2) -- (output1);
\draw[-,myred,thick]   (gate3) -- (output1);
\draw[-,myred,thick]   (gate5) -- (output1);

\draw[-,myblue,thick]   (gate1) -- (output2);
\draw[-,myblue,thick]   (gate2) -- (output2);
\draw[-,myblue,thick]   (gate4) -- (output2);
\draw[-,myblue,thick]   (gate6) -- (output2);

\node[right,rotate=90] at (0,7.25) (legtop1) {\footnotesize Processing 0};
\node[right,rotate=90] at (1,7.25) (legtop2) {\footnotesize Communication 1};
\node[right,rotate=90] at (2,7.25) (legtop3) {\footnotesize Processing 1};
\node[right,rotate=90] at (3,7.25) (legtop4) {\footnotesize Communication 2};
\node[right,rotate=90] at (4,7.25) (legtop5) {\footnotesize Processing 2};
\node[right,rotate=90] at (5,7.25) (legtop6) {\footnotesize Output};
\begin{scope}[on background layer]
\foreach  \i in {1,...,6}{
\coordinate (legbottom\i) at (\i-1,-0.5);
\draw [-,myyellow] (legtop\i) -- (legbottom\i);
}
\end{scope}
\end{tikzpicture} %
}
         \caption{Circuit representation of a non-signaling strategy on graph $G_0$ of \cref{fig:app:toy-graphs-a}, without a shared resource. Note the cyclicity of the circuit. 
         Highlighted in red is the past-light-cone of the joint output $(c_1,c_2)$, that is the set of gates which connects to the output gates producing $(c_1,c_2)$.
         }
         \label{fig:app:two-circuits-a}
     \end{subfigure}
     \hfill
     \begin{subfigure}[t]{0.3\textwidth}
         \centering
         \resizebox{\textwidth}{!}{
\begin{tikzpicture}[square/.style={rectangle}]
\definecolor{myred}{RGB}{255,80,80}
\definecolor{myblue}{RGB}{255,80,80}
\definecolor{mypurple}{RGB}{255,80,80}
\definecolor{mygray}{RGB}{140,140,140}
\definecolor{myyellow}{RGB}{255,200,80}

\def\listnodes{{12,10,8,6,4,2,1,3,5,7,9,11}}
  \foreach \i in {1,...,12}{
   \pgfmathsetmacro{\j}{\listnodes[\i-1]}
     \node at (-0.5,0.5*\i) [circle,fill=white,inner sep=1] (name\j)  {\footnotesize $\j$};
     
    \node at (0,0.5*\i) [semicircle,fill={\ifnum\j=11 black!80\else \ifnum\j=12 black!80\else myred\fi\fi},rotate=90,inner sep=2] (input\j)  {};
    
    \node at (2,0.5*\i) [square,fill={\ifnum\j=11 black!80\else \ifnum\j=12 black!80\else\ifnum\j=9 black!80\else\ifnum\j=7 black!80\else\ifnum\j=8 black!80\else\ifnum\j=10 black!80\else myred\fi\fi\fi\fi\fi\fi},inner sep=2.75] (gate\j)  {};

\node at (4,0.5*\i) [square,fill={\ifnum\j=1 myred\else \ifnum\j=2 myred\else black!80\fi\fi},inner sep=2.5] (output\j)  {};

\node[text={\ifnum\j=1 myred\else \ifnum\j=2 myred\else black\fi\fi}] (out\j) [right of=output\j] {$c_{\j}$};
\draw [->,draw={\ifnum\j=1 myred\else \ifnum\j=2 myred\else black!80\fi\fi},thick] (output\j) -- (out\j);
  }

\node at (-1.25,3.5) [semicircle,rotate=90,fill=myred,inner sep=7.25] (res) {};

\begin{scope}[on background layer]
\draw[-,mygray,thick]   (res) -- (input11);
\draw[-,myred,thick]   (res) -- (input9);
\draw[-,myred,thick]   (res) -- (input7);
\draw[-,mypurple,thick]   (res) -- (input5);
\draw[-,mypurple,thick]   (res) -- (input3);
\draw[-,mypurple,thick]   (res) -- (input1);
\draw[-,mypurple,thick]   (res) -- (input2);
\draw[-,mypurple,thick]   (res) -- (input4);
\draw[-,mypurple,thick]   (res) -- (input6);
\draw[-,myblue,thick]   (res) -- (input8);
\draw[-,myblue,thick]   (res) -- (input10);
\draw[-,mygray,thick]   (res) -- (input12);
\end{scope}

\node at (4,-0) [square,fill=black!15,inner sep=2.75] (phantomoutput11)  {};
\node at (4,-0.5) [square,fill=black!15,inner sep=2.75] (phantomoutput9)  {};
\node at (4,0.5*13) [square,fill=black!15,inner sep=2.75] (phantomoutput12)  {};
\node at (4,0.5*14) [square,fill=black!15,inner sep=2.75] (phantomoutput10)  {};
    
\node at (2,-0) [square,fill=black!15,inner sep=2.75] (phantomgate11)  {};
\node at (2,-0.5) [square,fill=black!15,inner sep=2.75] (phantomgate9)  {};
\node at (2,0.5*13) [square,fill=black!15,inner sep=2.75] (phantomgate12)  {};
\node at (2,0.5*14) [square,fill=black!15,inner sep=2.75] (phantomgate10)  {};

\node at (0,-0) [semicircle,rotate=90,,fill=black!15,inner sep=2] (phantominput11)  {};
\node at (0,-0.5) [semicircle,rotate=90,,fill=black!15,inner sep=2] (phantominput9)  {};
\node at (0,0.5*13) [semicircle,rotate=90,,fill=black!15,inner sep=2] (phantominput12)  {};
\node at (0,0.5*14) [semicircle,rotate=90,,fill=black!15,inner sep=2] (phantominput10)  {};

\node[text=black!15] at (-0.5,0) [circle,fill=white,inner sep=1] ()  {\footnotesize $11$};
\node[text=black!15] at (-0.5,-0.5) [circle,fill=white,inner sep=1] ()  {\footnotesize $9$};
\node[text=black!15] at (-0.5,0.5*13) [circle,fill=white,inner sep=1] ()  {\footnotesize $12$};
\node[text=black!15] at (-0.5,0.5*14) [circle,fill=white,inner sep=1] ()  {\footnotesize $10$};

\draw[-,black!15,thick]   (phantominput10) -- (gate9);
\draw[-,black!15,thick]   (input9) -- (phantomgate10);

\draw[-,black!15,thick]   (phantominput12) -- (gate11);
\draw[-,black!15,thick]   (input11) -- (phantomgate12);

\draw[-,black!15,thick]   (input10) -- (phantomgate9);
\draw[-,black!15,thick]   (phantominput9) -- (gate10);

\draw[-,black!15,thick]   (input12) -- (phantomgate11);
\draw[-,black!15,thick]   (phantominput11) -- (gate12);

\draw[-,black!15,thick]   (phantomgate10) -- (output9);
\draw[-,black!15,thick]   (gate9) -- (phantomoutput10);

\draw[-,black!15,thick]   (phantomgate12) -- (output11);
\draw[-,black!15,thick]   (gate11) -- (phantomoutput12);

\draw[-,black!15,thick]   (gate10) -- (phantomoutput9);
\draw[-,black!15,thick]   (phantomgate9) -- (output10);

\draw[-,black!15,thick]   (gate12) -- (phantomoutput11);
\draw[-,black!15,thick]   (phantomgate11) -- (output12);

\def\listnodes{{
0,0,0,0,
0,0,0,0,
0,0,0,0,
0,0,0,0,
0,0,0,0,
0,0,0,0,
3,5,7,11,
4,6,8,12,
5,9,9,11,
6,10,10,12,
7,9,11,11,
8,10,12,12
}}
  \foreach \i in {11,9,7,8,10,12}{
  \foreach \j in {0,1,2,3}
   \pgfmathsetmacro{\j}{\listnodes[\i*4+\j-4]}  
\draw[-,mygray,thick]   (input\j) -- (gate\i);
}

  \foreach \i in {11,9,7,8,10,12}{
  \foreach \j in {0,1,2,3}
   \pgfmathsetmacro{\j}{\listnodes[\i*4+\j-4]}  
\draw[-,mygray,thick]   (gate\j) -- (output\i);
}

\draw[-,myred,thick]   (input1) -- (gate3);
\draw[-,myred,thick]   (input3) -- (gate3);
\draw[-,myred,thick]   (input4) -- (gate3);
\draw[-,myred,thick]   (input7) -- (gate3);
\draw[-,mygray,thick]   (gate1) -- (output3);
\draw[-,mygray,thick]   (gate3) -- (output3);
\draw[-,mygray,thick]   (gate4) -- (output3);
\draw[-,mygray,thick]   (gate7) -- (output3);

\draw[-,myred,thick]   (input1) -- (gate5);
\draw[-,myred,thick]   (input5) -- (gate5);
\draw[-,myred,thick]   (input7) -- (gate5);
\draw[-,myred,thick]   (input9) -- (gate5);

\draw[-,mygray,thick]   (gate1) -- (output5);
\draw[-,mygray,thick]   (gate5) -- (output5);
\draw[-,mygray,thick]   (gate7) -- (output5);
\draw[-,mygray,thick]   (gate9) -- (output5);

\draw[-,myblue,thick]   (input2) -- (gate4);
\draw[-,myblue,thick]   (input3) -- (gate4);
\draw[-,myblue,thick]   (input4) -- (gate4);
\draw[-,myblue,thick]   (input8) -- (gate4);
\draw[-,mygray,thick]   (gate2) -- (output4);
\draw[-,mygray,thick]   (gate3) -- (output4);
\draw[-,mygray,thick]   (gate4) -- (output4);
\draw[-,mygray,thick]   (gate8) -- (output4);

\draw[-,myblue,thick]   (input2) -- (gate6);
\draw[-,myblue,thick]   (input6) -- (gate6);
\draw[-,myblue,thick]   (input8) -- (gate6);
\draw[-,myblue,thick]   (input10) -- (gate6);
\draw[-,mygray,thick]   (gate2) -- (output6);
\draw[-,mygray,thick]   (gate6) -- (output6);
\draw[-,mygray,thick]   (gate8) -- (output6);
\draw[-,mygray,thick]   (gate10) -- (output6);

\draw[-,mypurple,thick]   (input1) -- (gate1);
\draw[-,mypurple,thick]   (input2) -- (gate1);
\draw[-,mypurple,thick]   (input3) -- (gate1);
\draw[-,mypurple,thick]   (input5) -- (gate1);

\draw[-,mypurple,thick]   (input1) -- (gate2);
\draw[-,mypurple,thick]   (input2) -- (gate2);
\draw[-,mypurple,thick]   (input4) -- (gate2);
\draw[-,mypurple,thick]   (input6) -- (gate2);

\draw[-,myred,thick]   (gate1) -- (output1);
\draw[-,myred,thick]   (gate2) -- (output1);
\draw[-,myred,thick]   (gate3) -- (output1);
\draw[-,myred,thick]   (gate5) -- (output1);

\draw[-,myblue,thick]   (gate1) -- (output2);
\draw[-,myblue,thick]   (gate2) -- (output2);
\draw[-,myblue,thick]   (gate4) -- (output2);
\draw[-,myblue,thick]   (gate6) -- (output2);

\node[right,rotate=90] at (0,7.25) (legtop1) {\footnotesize Processing 0};
\node[right,rotate=90] at (1,7.25) (legtop2) {\footnotesize Communication 1};
\node[right,rotate=90] at (2,7.25) (legtop3) {\footnotesize Processing 1};
\node[right,rotate=90] at (3,7.25) (legtop4) {\footnotesize Communication 2};
\node[right,rotate=90] at (4,7.25) (legtop5) {\footnotesize Processing 2};
\node[right,rotate=90] at (5,7.25) (legtop6) {\footnotesize Output};
\begin{scope}[on background layer]
\foreach  \i in {1,...,6}{
\coordinate (legbottom\i) at (\i-1,-0.5);
\draw [-,myyellow] (legtop\i) -- (legbottom\i);
}
\end{scope}
\end{tikzpicture} %
}
         \caption{Circuit representation of a non-signaling strategy on graph $G_0$ of \cref{fig:app:toy-graphs-a}, with an arbitrary shared resource. The joint output $(c_1,c_2)$ remain, after $T = 2$ rounds of communication, independent of the graph structure around nodes $11$ and $12$, because the difference lies outside their joint past light-cones.}
         \label{fig:app:two-circuits-b}
     \end{subfigure}
          \hfill
     \begin{subfigure}[t]{0.3\textwidth}
         \centering
        \resizebox{\textwidth}{!}{
\begin{tikzpicture}[square/.style={rectangle}]
\definecolor{myred}{RGB}{255,80,80}
\definecolor{myblue}{RGB}{255,80,80}
\definecolor{mypurple}{RGB}{255,80,80}
\definecolor{mygray}{RGB}{140,140,140}
\definecolor{myyellow}{RGB}{255,200,80}

\def\listnodes{{12,10,8,6,4,2,1,3,5,7,9,11}}
  \foreach \i in {1,...,12}{
   \pgfmathsetmacro{\j}{\listnodes[\i-1]}
     \node at (-0.5,0.5*\i) [circle,fill=white,inner sep=1] (name\j)  {\footnotesize $\j$};
     
    \node at (0,0.5*\i) [semicircle,fill={\ifnum\j=11 black!80\else \ifnum\j=12 black!80\else myred\fi\fi},rotate=90,inner sep=2] (input\j)  {};
    
    \node at (2,0.5*\i) [square,fill={\ifnum\j=11 black!80\else \ifnum\j=12 black!80\else\ifnum\j=9 black!80\else\ifnum\j=7 black!80\else\ifnum\j=8 black!80\else\ifnum\j=10 black!80\else myred\fi\fi\fi\fi\fi\fi},inner sep=2.75] (gate\j)  {};

\node at (4,0.5*\i) [square,fill={\ifnum\j=1 myred\else \ifnum\j=2 myred\else black!80\fi\fi},inner sep=2.5] (output\j)  {};

\node[text={\ifnum\j=1 myred\else \ifnum\j=2 myred\else black\fi\fi}] (out\j) [right of=output\j] {$c_{\j}$};
\draw [->,draw={\ifnum\j=1 myred\else \ifnum\j=2 myred\else black!80\fi\fi},thick] (output\j) -- (out\j);
  }

\node at (-1.25,3.5) [semicircle,rotate=90,fill=myred,inner sep=7.25] (res) {};

\begin{scope}[on background layer]
\draw[-,mygray,thick]   (res) -- (input11);
\draw[-,myred,thick]   (res) -- (input9);
\draw[-,myred,thick]   (res) -- (input7);
\draw[-,mypurple,thick]   (res) -- (input5);
\draw[-,mypurple,thick]   (res) -- (input3);
\draw[-,mypurple,thick]   (res) -- (input1);
\draw[-,mypurple,thick]   (res) -- (input2);
\draw[-,mypurple,thick]   (res) -- (input4);
\draw[-,mypurple,thick]   (res) -- (input6);
\draw[-,myblue,thick]   (res) -- (input8);
\draw[-,myblue,thick]   (res) -- (input10);
\draw[-,mygray,thick]   (res) -- (input12);
\end{scope}

\node at (4,-0) [square,fill=black!15,inner sep=2.75] (phantomoutput11)  {};
\node at (4,-0.5) [square,fill=black!15,inner sep=2.75] (phantomoutput9)  {};
\node at (4,0.5*13) [square,fill=black!15,inner sep=2.75] (phantomoutput12)  {};
\node at (4,0.5*14) [square,fill=black!15,inner sep=2.75] (phantomoutput10)  {};
    
\node at (2,-0) [square,fill=black!15,inner sep=2.75] (phantomgate11)  {};
\node at (2,-0.5) [square,fill=black!15,inner sep=2.75] (phantomgate9)  {};
\node at (2,0.5*13) [square,fill=black!15,inner sep=2.75] (phantomgate12)  {};
\node at (2,0.5*14) [square,fill=black!15,inner sep=2.75] (phantomgate10)  {};

\node at (0,-0) [semicircle,rotate=90,,fill=black!15,inner sep=2] (phantominput11)  {};
\node at (0,-0.5) [semicircle,rotate=90,,fill=black!15,inner sep=2] (phantominput9)  {};
\node at (0,0.5*13) [semicircle,rotate=90,,fill=black!15,inner sep=2] (phantominput12)  {};
\node at (0,0.5*14) [semicircle,rotate=90,,fill=black!15,inner sep=2] (phantominput10)  {};

\node[text=black!15] at (-0.5,0) [circle,fill=white,inner sep=1] ()  {\footnotesize $11$};
\node[text=black!15] at (-0.5,-0.5) [circle,fill=white,inner sep=1] ()  {\footnotesize $9$};
\node[text=black!15] at (-0.5,0.5*13) [circle,fill=white,inner sep=1] ()  {\footnotesize $12$};
\node[text=black!15] at (-0.5,0.5*14) [circle,fill=white,inner sep=1] ()  {\footnotesize $10$};

\draw[-,black!15,thick]   (phantominput12) -- (gate9);
\draw[-,black!15,thick]   (input9) -- (phantomgate12);

\draw[-,black!15,thick]   (phantominput10) -- (gate11);
\draw[-,black!15,thick]   (input11) -- (phantomgate10);

\draw[-,black!15,thick]   (input12) -- (phantomgate9);
\draw[-,black!15,thick]   (phantominput9) -- (gate12);

\draw[-,black!15,thick]   (input10) -- (phantomgate11);
\draw[-,black!15,thick]   (phantominput11) -- (gate10);

\draw[-,black!15,thick]   (phantomgate12) -- (output9);
\draw[-,black!15,thick]   (gate9) -- (phantomoutput12);

\draw[-,black!15,thick]   (phantomgate10) -- (output11);
\draw[-,black!15,thick]   (gate11) -- (phantomoutput10);

\draw[-,black!15,thick]   (gate12) -- (phantomoutput9);
\draw[-,black!15,thick]   (phantomgate9) -- (output12);

\draw[-,black!15,thick]   (gate10) -- (phantomoutput11);
\draw[-,black!15,thick]   (phantomgate11) -- (output10);

\def\listnodes{{
0,0,0,0,
0,0,0,0,
0,0,0,0,
0,0,0,0,
0,0,0,0,
0,0,0,0,
3,5,7,11,
4,6,8,12,
5,9,9,11,
6,10,10,12,
7,9,11,11,
8,10,12,12
}}
  \foreach \i in {11,9,7,8,10,12}{
  \foreach \j in {0,1,2,3}
   \pgfmathsetmacro{\j}{\listnodes[\i*4+\j-4]}  
\draw[-,mygray,thick]   (input\j) -- (gate\i);
}

  \foreach \i in {11,9,7,8,10,12}{
  \foreach \j in {0,1,2,3}
   \pgfmathsetmacro{\j}{\listnodes[\i*4+\j-4]}  
\draw[-,mygray,thick]   (gate\j) -- (output\i);
}

\draw[-,myred,thick]   (input1) -- (gate3);
\draw[-,myred,thick]   (input3) -- (gate3);
\draw[-,myred,thick]   (input4) -- (gate3);
\draw[-,myred,thick]   (input7) -- (gate3);
\draw[-,mygray,thick]   (gate1) -- (output3);
\draw[-,mygray,thick]   (gate3) -- (output3);
\draw[-,mygray,thick]   (gate4) -- (output3);
\draw[-,mygray,thick]   (gate7) -- (output3);

\draw[-,myred,thick]   (input1) -- (gate5);
\draw[-,myred,thick]   (input5) -- (gate5);
\draw[-,myred,thick]   (input7) -- (gate5);
\draw[-,myred,thick]   (input9) -- (gate5);

\draw[-,mygray,thick]   (gate1) -- (output5);
\draw[-,mygray,thick]   (gate5) -- (output5);
\draw[-,mygray,thick]   (gate7) -- (output5);
\draw[-,mygray,thick]   (gate9) -- (output5);

\draw[-,myblue,thick]   (input2) -- (gate4);
\draw[-,myblue,thick]   (input3) -- (gate4);
\draw[-,myblue,thick]   (input4) -- (gate4);
\draw[-,myblue,thick]   (input8) -- (gate4);
\draw[-,mygray,thick]   (gate2) -- (output4);
\draw[-,mygray,thick]   (gate3) -- (output4);
\draw[-,mygray,thick]   (gate4) -- (output4);
\draw[-,mygray,thick]   (gate8) -- (output4);

\draw[-,myblue,thick]   (input2) -- (gate6);
\draw[-,myblue,thick]   (input6) -- (gate6);
\draw[-,myblue,thick]   (input8) -- (gate6);
\draw[-,myblue,thick]   (input10) -- (gate6);
\draw[-,mygray,thick]   (gate2) -- (output6);
\draw[-,mygray,thick]   (gate6) -- (output6);
\draw[-,mygray,thick]   (gate8) -- (output6);
\draw[-,mygray,thick]   (gate10) -- (output6);

\draw[-,mypurple,thick]   (input1) -- (gate1);
\draw[-,mypurple,thick]   (input2) -- (gate1);
\draw[-,mypurple,thick]   (input3) -- (gate1);
\draw[-,mypurple,thick]   (input5) -- (gate1);

\draw[-,mypurple,thick]   (input1) -- (gate2);
\draw[-,mypurple,thick]   (input2) -- (gate2);
\draw[-,mypurple,thick]   (input4) -- (gate2);
\draw[-,mypurple,thick]   (input6) -- (gate2);

\draw[-,myred,thick]   (gate1) -- (output1);
\draw[-,myred,thick]   (gate2) -- (output1);
\draw[-,myred,thick]   (gate3) -- (output1);
\draw[-,myred,thick]   (gate5) -- (output1);

\draw[-,myblue,thick]   (gate1) -- (output2);
\draw[-,myblue,thick]   (gate2) -- (output2);
\draw[-,myblue,thick]   (gate4) -- (output2);
\draw[-,myblue,thick]   (gate6) -- (output2);

\node[right,rotate=90] at (0,7.25) (legtop1) {\footnotesize Processing 0};
\node[right,rotate=90] at (1,7.25) (legtop2) {\footnotesize Communication 1};
\node[right,rotate=90] at (2,7.25) (legtop3) {\footnotesize Processing 1};
\node[right,rotate=90] at (3,7.25) (legtop4) {\footnotesize Communication 2};
\node[right,rotate=90] at (4,7.25) (legtop5) {\footnotesize Processing 2};
\node[right,rotate=90] at (5,7.25) (legtop6) {\footnotesize Output};
\begin{scope}[on background layer]
\foreach  \i in {1,...,6}{
\coordinate (legbottom\i) at (\i-1,-0.5);
\draw [-,myyellow] (legtop\i) -- (legbottom\i);
}
\end{scope}
\end{tikzpicture} %
}
         \caption{Circuit representation of a non-signaling strategy on graph $H$ of \cref{fig:app:toy-graphs-b}, with an arbitrary shared resource. Note the difference between this circuit and the one of \cref{fig:app:two-circuits-b} (namely, the different connections between the top layer and bottom layer of the circuit) is only manifested outside the joint past light-cone of the outputs $(c_1,c_2)$.}
         \label{fig:app:two-circuitsc}
     \end{subfigure}
        \caption{The above circuits represent non-signaling strategies (it includes as special cases the classical strategies and quantum strategies) executed by the nodes $1$ and $2$ in various scenarios.
        The semicircles represent private (\cref{fig:app:two-circuits-a}) or shared (\cref{fig:app:two-circuits-b} and \cref{fig:app:two-circuitsc}) arbitrary-but-non-signaling resources; the wires depict communication (or storage), and the squares are local operations (using possibly private resources). The last layer of gates (or measurements) outputs the individual colors of the nodes (i.e.\ classical variables).       
        Since the nodes start with no knowledge about the identity of their neighbors, the operations of the gates are a priori independent of the graph structure (as long as the nodes have the right degree).        
 For the special cases of classical and quantum strategies, the output distribution can be computed directly from those circuits, which define it uniquely.}
\label{fig:app:two-circuits}
\end{figure}
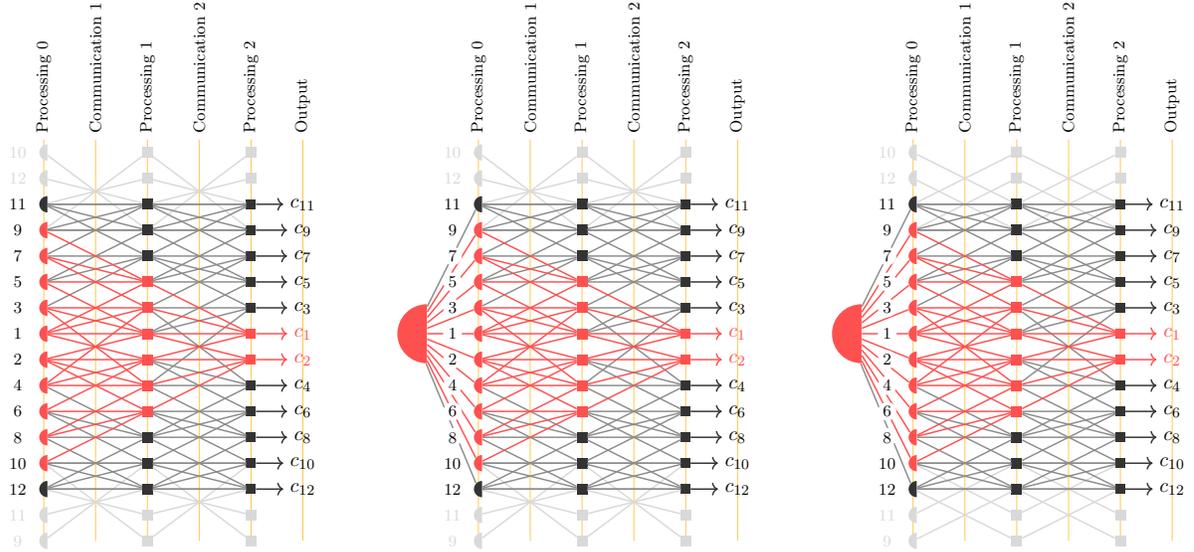

\subsubsection{\boldmath \Qlocal model}
Any quantum strategy can also be represented by the circuit formalism of~\cref{fig:app:two-circuits} by using resources and processing operations that are quantum rather than classical.
Quantum information theory provides a concrete mathematical formalism (based on the tensor product of Hilbert spaces) with rules to represent the potential operations performed by each of the gates of the circuit. 
Semicircles, line wires, and squares now respectively represent the creation of a quantum state (a density matrix), the transfer (or storage) of quantum information, and the processing of quantum information with quantum channels (or quantum measurements in the last layer) represented by completely positive operators.
Once the quantum states and channels are made explicit, the Born rule of quantum information theory~\cite{born1926quantenmechanik, NielsenChuang2011Book} allows computing $\pr{c_1,\dots,c_n \ \st C_{G,T}}$, the output distribution of the strategy on graph $G$.

However, analyzing the \qlocal model directly is complicated.
Instead, we employ the fact that it is possible to bound the time complexity of the \qlocal model by analyzing models that do not depend on the mathematical formalism of quantum information theory (and which are, therefore, arguably simpler).
In the present appendix, we utilize the fact that \qlocal is sandwiched between the randomized \local model (less powerful) and the non-signaling model (the most powerful model satisfying the information-theoretic principles of non-signaling, independence, and device replication we introduce below).
We next introduce the non-signaling model.

\subsubsection{The non-signaling model (with unique identifiers)}\label{app:subsec:causaltheory}

The non-signaling model is the most powerful model of synchronous distributed computing that does not violate physical causality, when a pre-established shared resource exists.

The formulation of the non-signaling model is radically different from the randomized \local and \qlocal models.
The randomized \local and \qlocal models are based on classical (randomized) and quantum information theory which both provide a mathematical formalism to describe information processing gates and to compute the probability distribution of outputs when these gates are placed in a circuit.
To show that some probability distribution is feasible in these models, one needs to propose a valid strategy in these mathematical formalisms.
The non-signaling model does not rely on any underlying mathematical formalism, but on some information-theoretical principles which should not be violated.
More precisely, to show that some probability distribution is feasible in the non-signaling model, one needs to explain how to obtain it, but only to make sure that this probability distribution is compatible with the \emph{non-signaling} principle in a way we now explain.

Before introducing this principle, we state the following definition of light-cones: 

\begin{definition}[Past light-cone]
Consider a circuit.
Consider a subset of gates $R$, and another gate $s \notin R$ in that circuit. 
We say that $s$ is in the past light-cone of $R$ if starting from $s$, one can reach a gate in $R$ by traveling down the circuit.
For instance, in \cref{fig:app:two-circuits-b}, the past light-cone of the second processing gates of nodes $\{1,2\}$ is composed of all gates in red.
\end{definition}
\begin{remark}
     Light-cones allow one to formalize the fact that, in a circuit, the precise time ordering in which the gates process information has no influence over the result, as long as physical causality (that is, information can be transferred only through communication) is preserved. For instance, in \cref{fig:app:two-circuits-a}, our drawing of the circuit represents the processing-1 gate of node~8 as being in the past of the processing-2 gate of node~2. However, by stretching the communication lines, another drawing in which this time ordering is exchanged exists, and this is possible as long as one node is not in the past light-cone of another. The name ``light-cone'' refers to relativity, where physical causality is bounded by the speed of light. Circuits provide an abstracted representation of physical causality that is not based on any notion of spacetime.
\end{remark}

We now state the non-signaling principle\footnote{\citet{chiribella2016} call this principle \emph{no-signaling from the future}.}, which is essentially the only constraint in the non-signaling model.

\begin{definition}[Non-signaling principle] Consider a set of gates, two different circuits $C, C'$ connecting them, and corresponding distributions $\pr{c_1, \dotsc, c_n\ \st \ C}$, $\pr{c_1, \dotsc, c_n \ \st \ C'}$.
Let $U$ be a subset of measurement gates whose past light-cones in $C$ and $C'$ coincide (i.e.\ they are composed of the same gates which are connected in the same way). Then $\pr{\{c_i\}_{i \in U} \ \st \ C}=\pr{\{c_i\}_{i \in U} \ \st C'}$.
\end{definition}

Implicitly, a non-signaling theory assumes that, given a set of gates and a valid way to connect them in a circuit $C$, one obtains a corresponding distribution $\pr{c_1, \dotsc, c_n \ \st \ C}$\footnote{Note that the gates have ``types'' and can only be connected to other gates of matching types: For instance, in our case, the inputs of gates from the second processing step must correspond to the outputs of gates from the first processing step. The theory of circuits can be formalized using category theory --- see, e.g.\ \cite{d2017quantum}.} (here the alphabet of the outputs $c_i$ is not limited: e.g., in case of unexpected circuits, the measurement gates can produce failure outputs).

We are now ready to define the non-signaling model, which is associated to circuits with a pre-shared non-signaling resource such as in \cref{fig:app:two-circuits-b}:
\begin{definition}[Non-signaling model, with unique identifiers]
The distribution
\[\pr{c_1, \dotsc, c_n \ \st \ C_{G,T}}\]
is feasible in the non-signaling model (with unique identifiers) with locality~$T$ on graph $G$ if and only if, for all possible alternative connecting graphs $H$ of the nodes, there exists a distribution $P=\pr {c_1, \dotsc, c_n\ \st \ C_{H,T}}$ such that the no-signaling principle is respected.\footnote{Note that considering scenarios with more than one copy of each node does not allow us to derive additional constraints, since the pre-shared non-signaling resource cannot by definition be cloned and extended to more than one copy of each of its original recipient.}
\end{definition}

Note that $H$ might not be a lifebuoy-shaped graph: in case a gate `discovers' this unexpected situation, it has the possibility to produce a new output from a set of error outputs (i.e.\ it crashes and no further useful constraint can be derived).

\subsubsection{The non-signaling model without unique identifiers}\label{app:subsec:causaltheory_ohne}

We have up until now considered $N=12$ nodes with unique identifiers ranging from 1 to $N=12$. When the nodes do not start with unique identifiers, the situation is slightly more complicated.
In that case, the resulting distribution should be invariant under subgraph isomorphism, as all nodes are running identical programs and the circuit is thus completely symmetric under permutations of nodes\footnote{Note that while a shared classical resource is inherently symmetric, because it can be without a loss of generality taken to be distributed identically to each party, a non-signaling resource is not a priori symmetric under permutation of the parties that it connects. The absence of identifiers thus forces the non-signaling model to pre-share only a subset of all possible non-signaling resources, namely the subset that respects such symmetry.}. For instance, if the nodes were to color the graph of \cref{fig:app:toy-graphs-a} without being a priori assigned unique identifiers, the resulting circuit $C_{G,T}$ would have some symmetries (as all processing gates would be the same in any layer of the respective steps 0, 1 and 2), implying that the distribution should be invariant under several non-trivial graph isomorphisms cyclically permuting the nodes, or inverting the inner and outer cycle of 6 nodes.

\begin{definition}[Non-signaling model, without unique identifiers]
The distribution \[\pr{c_1, \dotsc, c_n \ \st \ C_{G,T}}\] is feasible in the non-signaling model (without unique identifiers) in $T$ communication steps on graph $G$ if and only if the following two conditions are respected: for all possible alternative connecting graphs $H$ of the nodes, there exists a distribution $P=\pr {c_1, \dotsc, c_n\ \st \ C_{H,T}}$ such that the no-signaling principle is respected; and the distributions in $G$ and $H$ are invariant under subgraph isomorphisms.
\end{definition}

This definition can also be generalized to the case where several nodes with the same identifiers are present in $G$, or where the number of identifiers ranges from 1 to $M \neq N$.

\subsection{\texorpdfstring{\boldmath Graph $\GG$ is not 2-colorable in $T = 2$ rounds}{Graph G is not 2-colorable in T = 2 rounds}}\label{app:sec:G-not-colorable}

We now show by contradiction that the nodes $1, \dots, 12$ cannot 2-color the set $\GG$ of lifebuoy-shaped graphs with a non-signaling strategy (including pre-shared non-signaling resources) in $T = 2$ rounds of communication.
More precisely, we prove that if there existed such a non-signaling algorithm able to color any $G\in\GG$, then the same algorithm would also color the cheating graph $H$ with 2 colors, which is impossible.

\begin{proof}
Assume by contradiction that there exists such a non-signaling algorithm, that is, there exist gates as in \cref{fig:app:two-circuits-b} such that the resulting probability distribution $\pr{c_1,\dots,c_{12} \ \st \ C_{G,T}}$ is a proper coloring for all lifebuoy-shaped graphs $G\in \GG$. Such proper coloring means that for all $G\in \GG$, and for all nodes $u,v$ that are neighbors in $G$, it holds that $\pr{c_u\neq c_v \ \st \ C_{G,T}}=1$.

We consider the same nodes performing the same gates, but we change the edges  so that the connected nodes now form the graph $H$ in \cref{fig:app:toy-graphs-a}.
Let $\pr{c_1,\dots, c_{12}\ \st \ C_{H,T}}$ be the distribution of output colors in that new configuration\footnote{While we are deceiving the nodes by promising a lifebuoy-shaped connecting graph but imposing instead $H$, the nodes cannot locally detect the fraud in $T = 2$ communication steps or less. (More formally, the past light-cone in $H$ of any node is then compatible with a lifebuoy-shaped graph --- an individual node cannot detect the difference and must therefore, according to the non-signaling principle, output a color as if it were in a lifebuoy-shaped graph.)}.
Consider two nodes connected by an edge in $H$: by symmetry of $H$, we can without a loss of generality assume that these are nodes $1$ and $2$.
We introduce the lifebuoy-shaped graph $G_0$ of \cref{fig:app:toy-graphs-a} ($G_0\in G$ depends on our choice of nodes, here $1,2$, in $H$).
Then, we observe that the common past light-cones of the two nodes is the same when connected through $H$ or through $G_0$ (compare the circuit of \cref{fig:app:two-circuits-b} with the circuit of \cref{fig:app:two-circuitsc}).  It hence holds that $\pr{c_1\neq c_2 \ \st \ C_{H,T}} = 1 = \pr{c_1\neq c_2  \ \st \ C_{G_0,T}}$ due to the non-signaling principle.
We conclude that the non-signaling distributed algorithm outputs a 2-coloring for $H$, which is impossible.
\end{proof}

\subsection{Bounded-dependence model}\label{app:sec:no-shared-resource_mit}

The bounded-dependence model is similar to the non-signaling model, but the former does not allow pre-shared non-signaling resources between the nodes. The following two new principles are needed to define feasible distributions in the bounded-dependence model. 

\begin{definition}[Independence principle]
Consider a set of gates connected in a circuit $C$, and the corresponding distribution $\pr{ c_1, \dotsc, c_n \ \st \ C}$.
Let $U,V$ be two subsets of measurement gates producing the outputs $\{c_u\}_{u\in U}$ and $\{c_v\}_{v\in V}$, respectively. If the past light-cones of $U$ and $V$ do not intersect, then $\pr{ \{c_w\}_{w \in U\cup V} \ \st \ C}=\pr{ \{c_u\}_{u \in U} \ \st \ C}\cdot \pr{ \{c_v\}_{v \in V} \ \st \ C}$, that is, their output distributions are independent.
\end{definition}

\begin{definition}[Device-replication principle]
Identical and independent copies of non-signaling gates and non-signaling resources can be prepared\footnote{Note that this principle does not imply that one can duplicate \emph{unknown} non-signaling resources: device-replication is compatible with the quantum no-cloning theorem.}.
\end{definition}

Then, we obtain the following definition:
\begin{definition}[Bounded-dependence model, with unique identifiers]
The distribution \[\pr{ c_1, \dotsc, c_n \ \st \ C_{G,T}}\] with unique identifiers has \emph{bounded dependence} with locality~$T$ on graph $G$ (without pre-shared non-signaling resources) if and only if, for all possible alternative connecting graphs $H$ of the nodes and their replicates, there exists a distribution $\pr{ c_1, \dotsc, c_m \ \st \ C_{H,T}}$ such that the non-signaling and independence principles are respected, and such that the distribution is invariant under subgraph isomorphisms.
\end{definition}

Note a subtlety related to subgraph isomorphisms in the bounded-dependence model. There is the variant where the nodes have identifiers in $G$, and the one where they do not. When the nodes do not have any identifiers, the class of subgraph isomorphisms of all alternative graphs $H$ created out of the nodes in $G$ and their replicates is obviously larger than with identifiers, because the subgraph isomorphisms must respect the identifiers.
However, even if the nodes of $G$ do have distinct identifiers, as $H$ is created out of possibly many copies of the original nodes of $G$, $H$ might contain several nodes with the same identifiers.
Hence, the group of subgraph isomorphisms of $H$ might be nontrivial even if all nodes in $G$ have distinct identifiers. 
For instance, in lifebuoy-shaped graphs, one could consider the case represented in \cref{fig:app:toy-graph-bounded}, which starts from the graph $G_0$ in \cref{fig:app:toy-graphs-a}, duplicates all nodes, and constructs a new graph $H$ of 24 nodes with identifiers ranging from 1 to 12 with one non-trivial graph isomorphism cyclically permuting the nodes.

\begin{figure}
     \centering
              \includegraphics[width=1\textwidth/2]{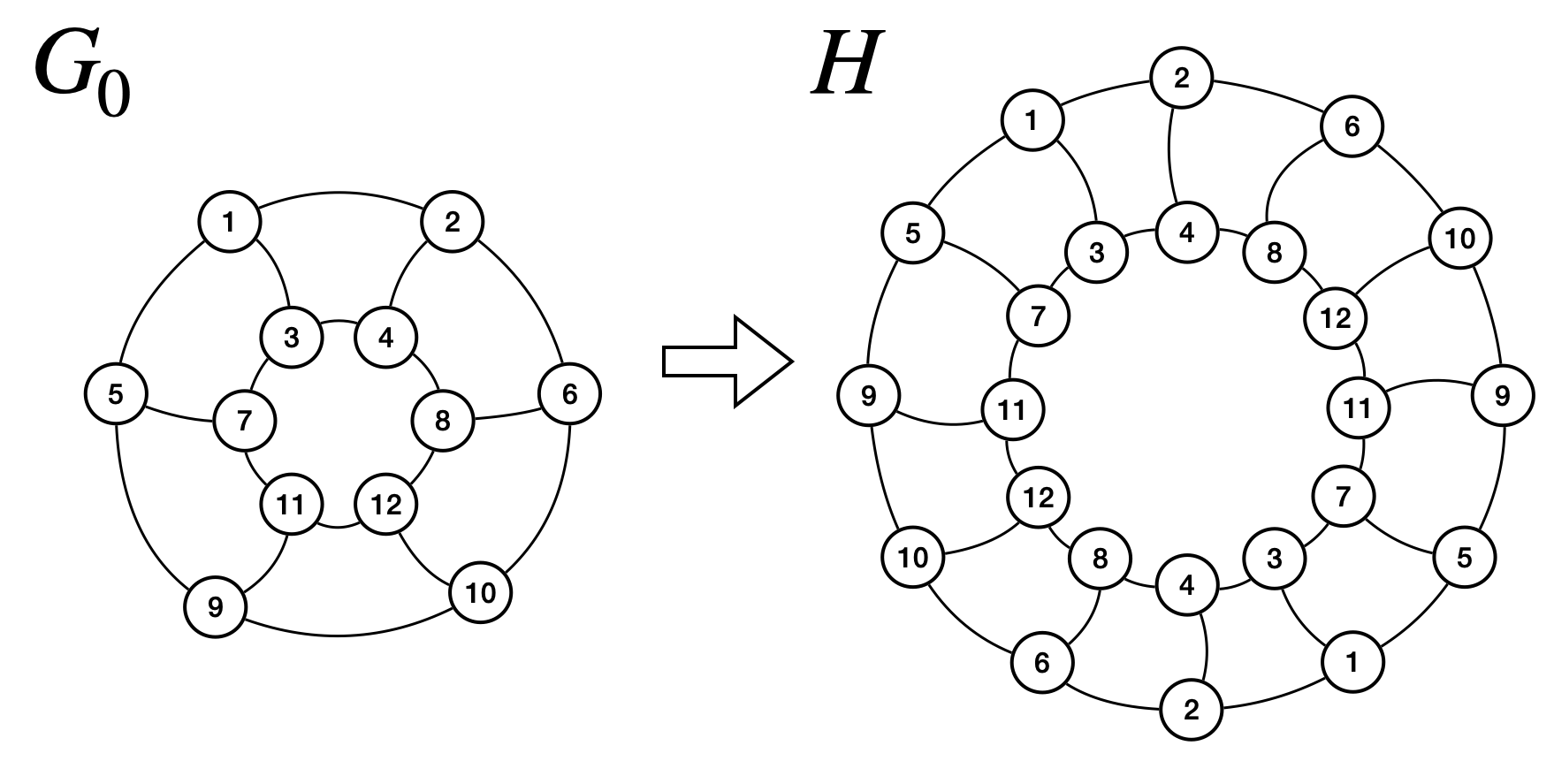}
                       \caption{In the bounded-dependence model, one can find non-trivial subgraph isomorphisms even when the nodes are provided with unique identifiers.}
         \label{fig:app:toy-graph-bounded}
\end{figure}

\subsubsection{Relation with finitely-dependent distributions}

Our bounded-dependence model is directly connected to the concept of finitely-dependent distributions introduced in mathematics~\cite{aaronson1989algebraic, holroyd2016, holroyd-2017-finitary-coloring, holroyd2018, spinka2020, holroyd2024}. In this framework, in a graph $H$, the color $c_i$ produced by each node $i$ is seen as the result of a random process $C_i$. The set of all random processes $\{C_i\}_i$ is said to be $k$-dependent in graph $H$ if, for any two subsets $U,V$ of the nodes of $H$ which are at least at distance $k+1$, the two sets of processes $\{C_u\}_{u\in U}$ and ${C_v}_{v\in V}$ are independent. In our notation, assuming even $k$ and taking $T=k/2$, this is equivalent to asking that $\pr{ \{c_w\}_{w \in U\cup V} \ \st \ C_{H,T}}=\pr{ \{c_u\}_{u \in U} \ \st \ C}\cdot \pr{ \{c_v\}_{v \in V} \ \st \ C_{H,T}}$.

Hence, the question of the existence of distributions with bounded dependence that solve some problem can directly be formulated in terms of the existence of finitely-dependent distributions over a family of graphs that satisfy additional constraints of compatibility (to satisfy the non-signaling principle) and symmetries (to cope with subgraph isomorphisms).
\begin{figure}[p]
    \centering
    \includegraphics[page=1]{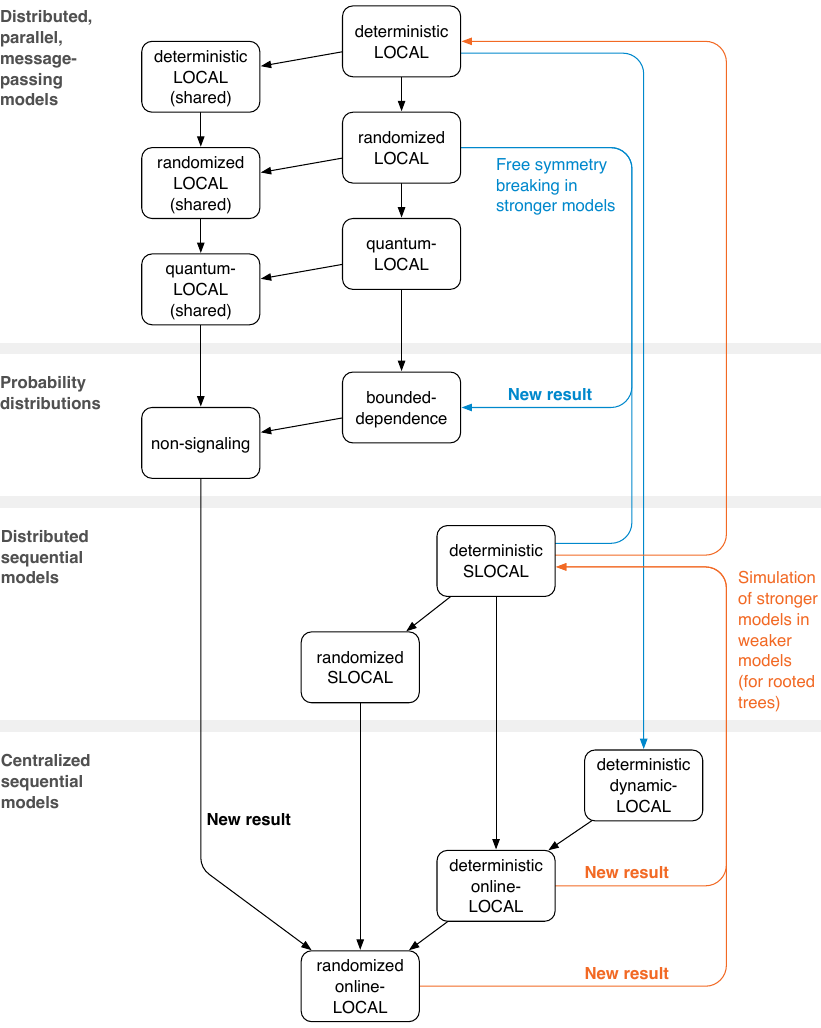}
    \caption{Landscape of models and their relations for LCL problems. A black arrow $X \to Y$ indicates that model $Y$ is at least as strong as model $X$ and can simulate any algorithm designed there. A blue arrow $X {\color{myblue} {}\to{}} Y$ indicates that we additionally get symmetry-breaking for free: locality $O(\log* n)$ in model $X$ implies locality $O(1)$ in model $Y$. An orange arrow $X {\color{myorange} {}\to{}} Y$ indicates that \emph{at least in rooted trees} we can simulate highly-localized algorithms in model $X$ using the weaker model $Y$ so that e.g.\ locality $O(\log* n)$ in model $X$ implies locality $O(\log* n)$ in model $Y$ (see \cref{thm:olocal-slocal-simulation,thm:olocal-sim:reduction-slocal} and \cite{ghaffari2017} for details).}\label{fig:landscape}
\end{figure}
 \end{document}